	\newcommand{\one}{\mathds{1}}
\numberwithin{equation}{section}
\newcommand{\red}{}
\newcommand{\eq}[1]{\begin{linenomath}\postdisplaypenalty=0\begin{align*} #1 \end{align*}\end{linenomath}}
\NewDocumentCommand{\eeq}{om}{\begin{linenomath}\postdisplaypenalty=0\begin{align} \IfNoValueTF{#1}{}{\tag{#1}} \begin{split} #2 \end{split} \end{align}\end{linenomath}}
\newcommand{\eeqs}[1]{\begin{linenomath}\postdisplaypenalty=0\begin{align} #1 \end{align}\end{linenomath}}
\newcommand{\stackref}[2]{
\readlist*\mylist{#1}
\stackrel{\mbox{\footnotesize\foreachitem\x\in\mylist[]{\ifnum\xcnt=1\else,\fi\eqref{\x}}}}{#2}
}
\newcommand{\stackrefp}[2]{
\readlist*\mylist{#1}
\stackrel{\hphantom{\mbox{\footnotesize\foreachitem\x\in\mylist[]{\ifnum\xcnt=1\else,\fi\eqref{\x}}}}}{#2}
}
\newcommand{\stackrefpp}[3]{
\readlist*\mylist{#1}
\readlist*\mylistt{#2}
\stackrel{\parbox{\widthof{\footnotesize\foreachitem\x\in\mylistt[]{\ifnum\xcnt=1\else,\fi\eqref{\x}}}}{\centering\footnotesize\foreachitem\x\in\mylist[]{{\ifnum\xcnt=1\else,\fi\eqref{\x}}}}}{#3}
}
\def\eps{\varepsilon}
\def\vphi{\varphi}
\newcommand{\E}{\mathbb{E}}
\newcommand{\R}{\mathbb{R}}
\newcommand{\T}{\mathbb{T}}
\renewcommand{\AA}{\mathcal{A}}
\newcommand{\DD}{\mathcal{D}}
\newcommand{\PP}{\mathcal{P}}
\newcommand{\SSS}{\mathscr{S}}
\newcommand{\TTT}{\mathscr{T}}
\newcommand{\iprod}[2]{\langle #1,\, #2\rangle}
\newcommand{\vc}[1]{{\boldsymbol #1}}
\newcommand{\wt}[1]{\widetilde{#1}}
\newcommand{\wh}[1]{\widehat{#1}}
\newcommand{\wc}[1]{\widecheck{#1}}
\DeclareFontFamily{U}{mathx}{\hyphenchar\font45}
\DeclareFontShape{U}{mathx}{m}{n}{
      <5> <6> <7> <8> <9> <10>
      <10.95> <12> <14.4> <17.28> <20.74> <24.88>
      mathx10
      }{}
\DeclareSymbolFont{mathx}{U}{mathx}{m}{n}
\DeclareMathAccent{\widecheck}{0}{mathx}{"71}
\DeclareMathAccent{\wideparen}{0}{mathx}{"75}
\newcommand{\cc}{\mathrm{c}} 
\newcommand{\dd}{\mathrm{d}} 
\DeclareMathOperator{\Supp}{Supp}
            \DeclareFontFamily{OMX}{MnSymbolE}{}
            \DeclareSymbolFont{MnLargeSymbols}{OMX}{MnSymbolE}{m}{n}
            \DeclareFontShape{OMX}{MnSymbolE}{m}{n}{
                <-6>  MnSymbolE5
               <6-7>  MnSymbolE6
               <7-8>  MnSymbolE7
               <8-9>  MnSymbolE8
               <9-10> MnSymbolE9
              <10-12> MnSymbolE10
              <12->   MnSymbolE12
            }{}
            \DeclareFontShape{OMX}{MnSymbolE}{b}{n}{
                <-6>  MnSymbolE-Bold5
               <6-7>  MnSymbolE-Bold6
               <7-8>  MnSymbolE-Bold7
               <8-9>  MnSymbolE-Bold8
               <9-10> MnSymbolE-Bold9
              <10-12> MnSymbolE-Bold10
              <12->   MnSymbolE-Bold12
            }{}
            \let\llangle\@undefined
            \let\rrangle\@undefined
            \DeclareMathDelimiter{\llangle}{\mathopen}%
                                 {MnLargeSymbols}{'164}{MnLargeSymbols}{'164}
            \DeclareMathDelimiter{\rrangle}{\mathclose}%
                                 {MnLargeSymbols}{'171}{MnLargeSymbols}{'171}
    \DeclareFontFamily{U}{matha}{\hyphenchar\font45}
    \DeclareFontShape{U}{matha}{m}{n}{ <-6> matha5 <6-7> matha6 <7-8>
    matha7 <8-9> matha8 <9-10> matha9 <10-12> matha10 <12-> matha12 }{}
    \DeclareSymbolFont{matha}{U}{matha}{m}{n}
    \DeclareFontFamily{U}{mathx}{\hyphenchar\font45}
    \DeclareFontShape{U}{mathx}{m}{n}{ <-6> mathx5 <6-7> mathx6 <7-8>
    mathx7 <8-9> mathx8 <9-10> mathx9 <10-12> mathx10 <12-> mathx12 }{}
    \DeclareSymbolFont{mathx}{U}{mathx}{m}{n}
    \DeclareMathDelimiter{\llbrack} {4}{matha}{"76}{mathx}{"30}
    \DeclareMathDelimiter{\rrbrack} {5}{matha}{"77}{mathx}{"38}
\newcommand{\ext}{\mathrm{ext}}
\DeclareMathOperator*{\Motimes}{\text{\raisebox{0.25ex}{\scalebox{0.8}{$\bigotimes$}}}}
\newenvironment{proofclaim}
{\begin{proof}}
{\renewcommand{\qedsymbol}{$\square$ (Claim)}
\end{proof}
\renewcommand{\qedsymbol}{$\square$}
}
\newtheorem{thm}{Theorem}[section]
\newtheorem{prop}[thm]{Proposition}
\newtheorem{cor}[thm]{Corollary}
\newtheorem{lemma}[thm]{Lemma}
\newtheorem{claim}[thm]{Claim}
\newtheorem{theirthm}{Theorem} 
\newtheorem{defn}[thm]{Definition}
\newtheorem{eg}[thm]{Example}
\newtheorem{remark}[thm]{Remark}
\renewcommand{\thefootnote}{\fnsymbol{footnote}}
\title[Crisanti--Sommers formula and simultaneous symmetry breaking]{Crisanti--Sommers formula and simultaneous symmetry breaking in multi-species spherical spin glasses}
\subjclass[2020]{60K35, 
60G15, 
82B44, 
82D30. 
}
\keywords{Multi-species spin glass, spherical spin glass, free energy, Crisanti--Sommers formula, Parisi formula, symmetry breaking}
\author{Erik Bates}
\thanks{E.B. was partially supported by NSF grant DMS-1902734} 
\address{\newline Department of Mathematics \newline University of Wisconsin--Madison \newline Van Vleck Hall \newline 480 Lincoln Drive \newline Madison, Wisconsin 53706-1324 
\newline \textup{\tt ewbates@wisc.edu}}
\author{Youngtak Sohn}
\thanks{Y.S. was partially supported by NSF grant DMS-1954337}
\address{\newline Department of Mathematics \newline Massachusetts Institute of Technology \newline Building 2 \newline 77 Massachusetts Avenue \newline Cambridge, Massachusetts 02139-4307 
\newline \textup{\tt youngtak@mit.edu}}
\begin{document}
\bibliographystyle{acm}

\renewcommand{\thefootnote}{\arabic{footnote}} \setcounter{footnote}{0}

\begin{abstract}
There is a rich history of expressing the limiting free energy of mean-field spin glasses as a variational formula over probability measures on $[0,1]$, where the measure represents the similarity (or ``overlap'') of two independently sampled spin configurations.
At high temperatures, the formula's minimum is achieved at a measure which is a point mass, meaning sample configurations are asymptotically orthogonal up to a magnetic field correction.
At low temperatures, though, a very different behavior emerges known as \textit{replica symmetry breaking} (RSB). 
The deep wells in the energy landscape create more rigid structure, and the optimal overlap measure is no longer a point mass.
The exact size of its support remains in many cases an open problem.

Here we consider these themes for multi-species spherical spin glasses.
Following a companion work in which we establish the Parisi variational formula, here we present this formula's Crisanti--Sommers representation.
In the process, we gain new access to a problem unique to the multi-species setting.
Namely, if RSB occurs for one species, does it necessarily occur for other species as well?
We provide sufficient conditions for the answer to be \textit{yes}.
For instance, we show that if two species share any quadratic interaction, then RSB for one implies RSB for the other.
Moreover, the level of symmetry breaking must be identical, even in cases of full RSB.
In the presence of an external field, any type of interaction suffices.
\end{abstract}

\maketitle
\vspace{-1\baselineskip}
\tableofcontents



\section{Introduction and background}
Mean-field spin glasses are meant to be mathematically tractable models of disordered magnetism.
A central advantage of mean-field models is the fact (or in certain cases, the hope) that the free energy density can be expressed as an explicit variational formula.
Most famously, the limiting free energy of the Sherrington--Kirkpatrick (SK) model \cite{sherrington-kirkpatrick75} is given by the Parisi formula \cite{parisi79,parisi80}, a fact proved rigorously by Guerra \cite{guerra03} and Talagrand \cite{talagrand06}.
This formula has since been generalized to mixed $p$-spin models \cite{panchenko14,auffinger-chen17}.

Spherical spin glasses have the further advantage of admitting an alternative formulation of their Parisi formulas, namely the Crisanti--Sommers (C--S) representation \cite{crisanti-sommers92}.
Indeed, because the C--S formula has a simpler objective function than the Parisi formula, it allows for finer analysis of spherical models. 
In the mathematical literature, this began with Talagrand \cite{talagrand06II}, who showed that the critical points of the two formulas coincide.
Since the C--S functional is strictly convex, uniqueness of the minimizer followed as a trivial corollary.
For comparison, the analogous result for Ising spin glasses is highly non-trivial \cite{auffinger-chen15II}.

Beyond the free energy itself, there is particular interest in \textit{Parisi measures}, the name given to minimizers of Parisi formulas.
These measures give the distribution of an overlap structure, which in turn describes the similarity of independently sampled spin configurations \cite{mezard-parisi-virasoro87,talagrand03}.
In the context of spherical models, the C--S formula makes possible certain explicit calculations involving the free energy, which can give additional access to the Parisi measures.
Numerous works have capitalized on this fact, for instance \cite{panchenko-talagrand07,auffinger-chen15I,jagannath-tobasco18,auffinger-zeng19}.

In light of this critical role played by the C--S representation, its generalization has been sought and indeed proved in a number of settings,
including mixed $p$-spin models \cite{chen13}, zero temperature models \cite{chen-sen17,jagannath-tobasco17II}, and models with vector spins \cite{ko?}.
The first goal of this paper is to add multi-species models to this list.
This is enabled by our companion work \cite{bates-sohn22}, in which we establish the Parisi formula for this setting.
In terms of logical dependence, the present paper relies on \cite{bates-sohn22} but not conversely.

Our second goal is to investigate the nature of symmetry breaking for multi-species models.
In the single-species models mentioned thus far, the order parameter for the Parisi and C--S formulas is a probability measure (typically on $[0,1]$).
A model can then be classified by whether the optimizing measure is a point mass (called \textit{replica symmetry}, abbreviated RS) or instead supported on multiple points (called \textit{replica symmetry breaking}, RSB).
Moreover, as one varies the temperature from high to low, there is often a phase transition from RS to RSB.
More intriguing than this transition is what lies beyond it: if a model is RSB, then what \textit{level} of symmetry breaking does it exhibit?
Namely, is the Parisi measure supported on exactly $k$ points (called $k$-RSB, e.g.~\cite[Thm.~4]{panchenko-talagrand07}, \cite[Prop.~3]{chen-sen17},
\cite[Thm.~5]{auffinger-chen18II}, \cite[Thm.~1]{auffinger-zeng19}) or possibly infinitely many points ($\infty$-RSB, e.g.~\cite[Thm.~1.1]{auffinger-chen-zeng20})?
In the latter case, it is expected that the support must contain a nonempty interval (called full RSB, or FRSB, e.g.~\cite[Prop.~2]{chen-sen17}, \cite[Thm.~4]{auffinger-chen18II}).
For discussion of this prediction, see \cite{montanari-ricci03,oppermann-sherrington05,oppermann-schmidt-sherrington07}. 

In multi-species models, questions of symmetry breaking are further complicated for several reasons.
At a most basic level, formulas for the free energy are more elaborate, and the analysis required for a rigorous study of Parisi measures is often already quite delicate.
In addition, it is not clear if or when multi-species formulas admit unique minimizers.
This speaks to the loss of certain tools (such as convexity), as well as possible complications in studying energy landscapes.
Finally, and perhaps most fundamental, there is no longer just one measure under consideration, but rather one measure for each species.
Therefore, \textit{a priori} it may be that the level of symmetry breaking is not uniform across species.

This possibility was raised by Panchenko in \cite{panchenko15I}, where the Parisi formula was proved for the multi-species (Ising) SK model.
Some related questions of symmetry breaking have been addressed in \cite{bates-sloman-sohn19,dey-wu21}, but to our knowledge, there has no been previous work---nor any serious predictions---comparing levels of symmetry breaking across species.
To this end, we will prove that if two species share a quadratic interaction (or equivalent thereof, see Example \ref{first_simultaneous_example}), then symmetry breaking occurs in one species only when occurs in the other, a behavior we call \textit{simultaneous} symmetry breaking.
Furthermore, the level of symmetry breaking must be the same.
We stress that the hypothesis of quadratic interaction can be weakened depending on which species have nonzero external fields (see the last part of Example \ref{first_simultaneous_example}).
Going beyond pairwise comparisons, we provide a second result that in suitable circumstances allows one to conclude that a third species (or fourth, and so on) will also exhibit simultaneity. 
The added value here is that it may be otherwise not possible to conclude that the third species is simultaneous with one of the other two individually; see Example \ref{chaining_example}.
Our results hold for positive definite covariance structures, at least for Parisi measures.
One can also talk about ``Crisanti--Sommers measures'', and then this convexity assumption is no longer needed.

The paper is organized as follows.
\begin{itemize}
\item In Section \ref{def_sec}, we define the multi-species spherical spin glass model.
The exact setting is not necessary for the rest of the paper, but at least provides a minimal amount of physical context for the mathematics that follows.
It was shown in \cite{bates-sohn22} that the limiting free energy of this model is given by a Parisi variational formula, which is recalled in
Section \ref{parisi_sec}.

\item In Section \ref{cs_sec}, we define the Crisanti--Sommers variational formula and state our first main result, namely the equivalence of the Parisi and C--S formulas (Theorem \ref{parisi_cs}).
The proof of this equivalence rests on two families of identities---one for the Parisi formula and one for the C--S formula---which must be satisfied by any minimizer.
The two families are related to each other via integration by parts.

\item In Section \ref{rsb_sec}, we formally introduce the notions of symmetry breaking and simultaneity (Definition \ref{simultaneous_defn}).
Our second main result is the aforementioned statements about simultaneous symmetry breaking (Theorems \ref{simultaneous_thm_1} and \ref{simultaneous_thm_2}).
We also include Examples \ref{decoupled_example},  \ref{first_simultaneous_example}, \ref{chaining_example}, and \ref{incomplete_example} to illustrate applications of these theorems. 
As in the previous bullet point, the proofs rely on the two families of minimizer identities.

\item These identities are finally revealed in Section \ref{identities_sec}: Theorem \ref{parisi_min_thm} for Parisi minimizers, and Theorem \ref{cs_min_thm} for Crisanti--Sommers minimizers. 
The proof of Theorems \ref{simultaneous_thm_1} and \ref{simultaneous_thm_2} is remarkably brief given the identities, and so we write it immediately after.

\item The remaining proofs are partitioned into three parts.
First, continuity of the C--S functional is shown in Section \ref{continuity_proof}.
Following this preliminary step, the identities satisfied by minimizers are verified in Section \ref{identities_proof}, and then the C--S formula is confirmed in Section \ref{parisi_cs_proof}.
While none of the arguments have appeared at this level of generality before, the most novel work comes in Section \ref{identities_proof}.
\end{itemize}
Finally, for a broader review of related literature, we refer the reader to \cite[Sec.~1.4]{bates-sohn22}, and also to the inexhaustibly useful \cite{talagrand11I,talagrand11II,panchenko13}.

\subsection{The setting: multi-species spherical spin glasses} \label{def_sec}
Let $\SSS$ be a finite set, whose elements index the various species.
Suppose we write each positive integer $N$ as a sum of nonnegative integers, $N = \sum_{s\in\SSS}N^s$.
We then define the following product of spheres:
\eq{
\T_N \coloneqq 
\Motimes_{s\in\SSS} S_{N^s}, \quad \text{where} \quad
 S_n \coloneqq \{\sigma\in\R^n:\, \|\sigma\|_2^2 = n\}.
}
We assume that 
\eeq[H1]{ \label{lambda_assumption}
\lim_{N\to\infty} \frac{N^s}{N} = \lambda^s\in(0,1] \quad \text{for each $s\in\SSS$}.
}
An element of $\T_N$ will be written $\sigma = (\sigma(s))_{s\in\SSS}$, where $\sigma(s)\in S_{N^s}$.
The \textit{overlap} between two configurations $\sigma^1,\sigma^2\in\T_N$ is the following vector belonging to $[-1,1]^\SSS$:
\eq{ 
\vc R(\sigma^1,\sigma^2) \coloneqq (R^s(\sigma^1,\sigma^2))_{s\in\SSS}, \quad \text{where} \quad
R^s(\sigma^1,\sigma^2) \coloneqq \frac{\iprod{\sigma^1(s)}{\sigma^2(s)}}{N^s},
}
and $\iprod{\cdot}{\cdot}$ denotes the Euclidean inner product.

For each integer $p\geq1$, assume $\vc\Delta^2_{p} = (\Delta^2_{s_1,\dots,s_p})_{ s_1,\dots,s_p\in\SSS}$ is a symmetric $p$-dimensional tensor with nonnegative entries.
Let $(\beta_p)_{p\geq1}$ be a sequence of nonnegative numbers such that
\eeq[H2]{\label{decay_condition}
\sum_{p\ge1}\beta_p(1+\eps)^p\sum_{s\in\SSS^p}\Delta_{s_1,\dots,s_p}^2\lambda^{s_1}\cdots\lambda^{s_p} \quad \text{for some $\eps>0$},
}
so that the following function is well-defined and analytic on some open set containing $[-1,1]^\SSS$:
\eeq{ \label{xi_def}
\xi(\vc q) \coloneqq \sum_{p\geq1}\beta_p\sum_{s\in\SSS^p}\Delta^2_{s_1,\dots,s_p}\lambda^{s_1}\cdots\lambda^{s_p}q^{s_1}\cdots q^{s_p}, \quad \vc q = (q^s)_{s\in\SSS}\in[-1,1]^\SSS.
}
Let $H_N$ be a centered Gaussian processes on $\T_N$ whose covariance function is
\eeq{ \label{HN_def}
\E[H_N(\sigma^1)H_N(\sigma^2)]
= \xi(\vc R(\sigma^1,\sigma^2)).
}
The \textit{free energy} associated to $H_N$ with an external field $\vc h = (h_s)_{s\in\SSS}$ is the quantity
\eeq{ \label{free_energy_def}
F_N \coloneqq \frac{1}{N}\log\int_{\T_N}\exp\Big( H_N(\sigma)+\sum_{s\in\SSS}h_s\iprod{\sigma(s)}{\vc 1}\Big)\ \dd\sigma,
}
where $\dd\sigma$ denotes the product measure under which $\sigma(s)$ is uniformly distributed on $S_{N^s}$, for each $s\in\SSS$.
In order for the upcoming Parisi formula to hold, it is necessary to assume the Hessian of $\xi$ is nonnegative definite on the nonnegative orthant:
\eeq[H3]{ \label{xi_convex}
\nabla^2 \xi(\vc q)\geq 0 \quad \text{for $\vc q\in [0,1]^\SSS$}.
}
For convenience, we also define the following functions involving derivatives of $\xi$:
\eeq{ 
\xi^s(\vc q)
\coloneqq \frac{1}{\lambda^s}\frac{\partial \xi}{\partial q^s}(\vc q), \qquad 
\label{gamma_theta_def}
\theta(\vc q) 
\coloneqq \vc q\cdot\nabla\xi(\vc q) - \xi(\vc q).
}
Throughout the rest of the paper, we only consider the restriction of $\xi$ to $[0,1]^\SSS$.
That is, the reader should always assume $q^s\geq0$ for all $s\in\SSS$.

\subsection{The Parisi formula} \label{parisi_sec}
The Parisi functional has several inputs, one of which is an element from the following space of functions.

\begin{defn} \label{lambda_av_def}
Given $\vc\lambda=(\lambda^s)_{s\in\SSS}$, a map $\Phi = (\Phi^s)_{s\in\SSS} \colon [0,1]\to[0,1]^\SSS$ is said to be $\vc\lambda$-\textit{admissible} if each coordinate $\Phi^s$ is non-decreasing and continuous, and jointly they satisfy
\eeq{ \label{admissible_def}
\sum_{s\in\SSS}\lambda^s\Phi^s(q) = q \quad \text{for all $q\in[0,1]$}.
}
If $\zeta$ is a Borel probability measure on $[0,1]$, then $(\zeta,\Phi)$ is called a \textit{$\vc\lambda$-admissible pair}.
\end{defn}


Given any $\vc\lambda$-admissible pair $(\zeta,\Phi)$, for each $s\in\SSS$ we define the following function:
\eeq{ \label{ds_def}
d^s(q) \coloneqq \int_q^1 \zeta\big([0,u]\big)(\xi^s\circ \Phi)'(u)\ \dd u, \quad q\in[0,1].
}
Here we must point out because $\Phi^s$ is monotone, the derivatives appearing above (and below) exist on a set of full Lebesgue measure.
For any vector $\vc b = (b^s)_{s\in\SSS}$ satisfying the constraint
\eeq{ \label{parisi_constraint}
b^s > d^s(0) \quad \text{for each $s\in\SSS$},
}
we consider the quantity
\eeq{ \label{A_def}
A(\zeta,\Phi,\vc b) \coloneqq
\sum_{s\in\SSS} \frac{\lambda^s}{2}\Big[\frac{h_s^2+\red{\xi^s(\vc 0)}}{b^s-d^s(0)}+b^s -1-\log b^s + \int_0^1\frac{(\xi^s\circ \Phi)'(q)}{b^s-d^s(q)}\ &\dd q\Big]\\
- \frac{1}{2}\int_0^1 \zeta\big([0,q]\big)(\theta\circ\Phi)'(q)\ &\dd q.
}

\begin{theirthm}[{Parisi formula, \cite[Thm.~1.3 and Rmk.~1.4]{bates-sohn22}}] \label{parisi_thm}
Assuming \eqref{lambda_assumption}, \eqref{decay_condition}, and \eqref{xi_convex}, we have
\eeq{ \label{parisi_formula}
\lim_{N\to\infty} F_N = \inf_{\zeta,\Phi,\vc b} A(\zeta,\Phi,\vc b) \quad \mathrm{a.s.},
}
where the infimum is over triples satisfying \eqref{parisi_constraint}.
\end{theirthm}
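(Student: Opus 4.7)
The plan is to establish \eqref{parisi_formula} by matching upper and lower bounds on $\lim_{N\to\infty}F_N$. For the upper bound I would adapt the Guerra--Talagrand interpolation scheme to the multi-species spherical setting; for the lower bound I would combine the Aizenman--Sims--Starr cavity computation with Panchenko's synchronization machinery and the exact evaluation of Gaussian integrals on each sphere $S_{N^s}$. The parameter $\vc b$ plays the role of Lagrange multipliers enforcing $\|\sigma(s)\|_2^2 = N^s$, and it should arise naturally on both sides of the variational bound.

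For the upper bound, I would first reduce to discrete inputs in which $\zeta$ is a convex combination of point masses on a grid $0 = q_0 < q_1 < \cdots < q_r = 1$ and $\Phi$ is piecewise linear. To such data I associate a Ruelle probability cascade $(v_\alpha)_{\alpha\in\N^r}$ and a centered Gaussian process $Y(\sigma,\alpha) = \sum_{s\in\SSS} Y^s(\sigma(s),\alpha)$ whose species-$s$ covariance at level $k$ of the cascade matches $\xi^s(\Phi(q_k))$. I would replace the spherical constraint by a quadratic Lagrange term $-\tfrac12 \sum_s b^s \|\sigma(s)\|_2^2 / N^s$ and interpolate the (smoothed) free energy between $H_N$ and $Y$. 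Differentiating in the interpolation parameter and applying Gaussian integration by parts produces a derivative expressible as a quadratic form in overlap increments with coefficient matrix $\nabla^2\xi$; the nonnegative-definiteness hypothesis \eqref{xi_convex} supplies the needed sign. Evaluating the Gaussian integrals on each $S_{N^s}$ gives the $b^s - 1 - \log b^s$ contribution and the denominators $b^s - d^s(q)$, yielding $\limsup F_N \le A(\zeta,\Phi,\vc b)$ in the limit.

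For the lower bound, I would apply an Aizenman--Sims--Starr decomposition $F_N = \E[F_{N+M} - F_N] + o(1)$ and pass to the limit to identify the asymptotic Gibbs measure within Panchenko's class of random overlap structures. Ultrametricity results for multi-species overlap arrays then represent the limiting vector overlap $\vc R$ as living on a Ruelle probability cascade, coordinatewise. The crucial ingredient is \emph{synchronization}: under \eqref{xi_convex} the overlap vector concentrates on the graph of a single nondecreasing map $\Phi\colon[0,1]\to[0,1]^\SSS$, collapsing the overlap structure to a one-parameter family. Given synchronization, the species-wise cavity fields are conditionally Gaussian, the spherical integrals on each $S_{N^s}$ compute exactly, and the optimum over the Lagrange multipliers $\vc b$ matches $A(\zeta,\Phi,\vc b)$.

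The main obstacle is the synchronization step. Since \eqref{xi_convex} only asserts nonnegative definiteness of $\nabla^2\xi$, degenerate directions must be ruled out by a separate argument, typically via a small perturbation of $\xi$ together with a stability bound for the free energy; the perturbation must be compatible with the spherical constraints and must not violate the admissibility structure of Definition \ref{lambda_av_def}. A secondary but nontrivial hurdle is the passage from discrete inputs to the continuum formulation in \eqref{parisi_formula}, which requires a continuity/density argument for $A$ on the space of $\vc\lambda$-admissible triples, with particular care taken near the constraint $b^s = d^s(0)$ where the functional can blow up.
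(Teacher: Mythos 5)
This paper does not prove Theorem~A; it imports it from the companion work \cite{bates-sohn22} (see the remark immediately following the statement, which only notes that the argument there consists of matching Guerra-style upper and Aizenman--Sims--Starr lower bounds plus concentration). Your sketch---finite-replica-symmetry-breaking Guerra interpolation with a Gaussian/Lagrange relaxation of the spherical constraints for the upper bound, ASS decomposition plus Panchenko's multi-species synchronization and ultrametricity for the lower bound, and a perturbation to upgrade \eqref{xi_convex} to strict positivity where needed---is indeed the route taken in \cite{bates-sohn22}, so the approach is correct and essentially the one used there rather than in this paper.
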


The proof of Theorem \ref{parisi_thm} requires separate verification of matching upper and lower bounds, together with a standard concentration inequality to show $F_N$ concentrates around its mean.
The rest of this paper is mostly divorced from these arguments, as we focus exclusively on the right-hand side of \eqref{parisi_formula} (except for a brief moment in the proof of Theorem \ref{parisi_cs}).

\section{Main results}

\subsection{The Crisanti--Sommers formula} \label{cs_sec}
The input to the C--S functional will be a $\vc\lambda$-admissible pair $(\zeta,\Phi)$, but we restrict attention to those pairs for which 
there is some $q_*\in[0,1)$ such that
\eeq{ \label{gap_assumption}
\zeta\big([0,q_*]\big) = 1 \qquad \text{and} \qquad \Phi^s(q_*)< 1 \quad \text{for all $s\in\SSS$}.
}
Fixing such a pair, we define
\eeq{ \label{delta_def}
\Delta^s(q) \coloneqq \int_q^1 \zeta\big([0,u]\big)(\Phi^s)'(u)\ \dd u,
}
and then the \textit{Crisanti--Sommers functional} is given by
\eeq{ \label{B_def}
B(\zeta,\Phi) \coloneqq \sum_{s\in\SSS}\frac{\lambda^s}{2}\Big[\red{h_s^2}\Delta^s(0)+\int_0^{q_*}\frac{(\Phi^s)'(q)}{\Delta^s(q)}\ \dd q+\log \Delta^s(q_*)&\Big]\\
+\frac{1}{2}\int_0^1\zeta\big([0,q]\big)(\xi\circ \Phi)'(q)\ \dd q&.
}

\begin{remark} \label{q_star_remark}
Observe that if $q'\in(q_*,1)$ and $\Phi^s(q')<1$, then
\eq{
\int_0^{q'}\frac{(\Phi^s)'(q)}{\Delta^s(q)}\ \dd q
- \int_0^{q_*}\frac{(\Phi^s)'(q)}{\Delta^s(q_*)}\ \dd q =\int_{q_*}^{q'}\frac{(\Phi^s)'(q)}{1-\Phi^s(q)}\ \dd q
= \log\frac{1-\Phi^s(q_*)}{1-\Phi^s(q')}
= \log \frac{\Delta^s(q_*)}{\Delta^s(q')}.
}
Therefore, the exact choice of $q_*$ does not affect the value of $B(\zeta,\Phi)$ so long as \eqref{gap_assumption} is satisfied.
If no such $q_*$ exists, we can simply take $B(\zeta,\Phi)=\infty$.
\end{remark}

Our first main result is the identification of Parisi minimizers with C--S minimizers.
Let us introduce a positive definite version of \eqref{xi_convex}: 
\eeq[H3$'$]{ \label{xi_strictly_convex}
\nabla^2\xi(\vc q) > 0 \quad \text{for all $q\in[0,1]^\SSS\setminus\{\vc 0\}$}.
}

\begin{thm} \label{parisi_cs}
Assume \eqref{decay_condition} and \eqref{xi_convex}. Then
\eq{ 
\inf_{\zeta,\Phi,\vc b} A(\zeta,\Phi,\vc b) = \inf_{\zeta,\Phi} B(\zeta,\Phi).
}
If we further assume \eqref{xi_strictly_convex}, then the set of minimizers for each side is the same, in the sense that
\eq{
A(\zeta,\Phi,\vc b)= \inf A \quad \text{for some $\vc b$}  \quad \iff \quad B(\zeta,\Phi) = \inf B.
}
\end{thm}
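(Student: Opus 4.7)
The plan rests on the observation that, given the minimizer identities in Theorems \ref{parisi_min_thm} and \ref{cs_min_thm}, one can convert between Parisi and C--S minimizers via integration by parts, establishing $A(\zeta,\Phi,\vc b)=B(\zeta,\Phi)$ in both directions. Starting with a Parisi minimizer $(\zeta,\Phi,\vc b)$, the first-order condition in $\vc b$ gives
\begin{equation*}
1 - \tfrac{1}{b^s} = \frac{h_s^2+\xi^s(\vc 0)}{(b^s-d^s(0))^2} + \int_0^1 \frac{(\xi^s\circ\Phi)'(q)}{(b^s-d^s(q))^2}\,\dd q,
\end{equation*}
while the constrained variational conditions in $(\zeta,\Phi)$ supplied by Theorem \ref{parisi_min_thm} yield a pointwise relation of the form $(b^s-d^s(q))\Delta^s(q)=1$ on $\Supp(\zeta)$, together with a boundary evaluation of $b^s$ at the top of the support; it should also be part of Theorem \ref{parisi_min_thm} that any Parisi minimizer's $(\zeta,\Phi)$ satisfies the gap condition \eqref{gap_assumption}, since otherwise the Parisi integrals would diverge. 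Armed with these identities, I would carry out integration by parts using the derivative relations $(d^s)'(q)=-\zeta([0,q])(\xi^s\circ\Phi)'(q)$ and $(\Delta^s)'(q)=-\zeta([0,q])(\Phi^s)'(q)$: the integrals $\int(\xi^s\circ\Phi)'(q)/(b^s-d^s(q))\,\dd q$ transform into $\int(\Phi^s)'(q)/\Delta^s(q)\,\dd q$, the algebraic piece $b^s-1-\log b^s+(h_s^2+\xi^s(\vc 0))/(b^s-d^s(0))$ collapses into $h_s^2\Delta^s(0)+\log\Delta^s(q_*)$ via the boundary identity, and the $\theta$-integral becomes the $\xi$-integral through the Euler-type relation $\nabla\theta(\vc q)=\nabla^2\xi(\vc q)\vc q$. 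The net conclusion is $A(\zeta,\Phi,\vc b)=B(\zeta,\Phi)\geq \inf B$, hence $\inf A\geq \inf B$.

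Starting instead from a C--S minimizer $(\zeta,\Phi)$, which satisfies \eqref{gap_assumption} by the convention that $B\equiv\infty$ otherwise, Theorem \ref{cs_min_thm} provides analogous identities from which I define $\vc b$ via the prescription $b^s-d^s(q)=1/\Delta^s(q)$ on $\Supp(\zeta)$ (extended by the corresponding boundary value); the same integration by parts now runs in reverse to give $A(\zeta,\Phi,\vc b)=B(\zeta,\Phi)=\inf B$, hence $\inf A\leq \inf B$. Combining the two steps yields $\inf A=\inf B$. Under \eqref{xi_strictly_convex}, each of the two constructions turns a Parisi minimizer into a C--S minimizer and vice versa, so the two minimizer sets coincide as claimed; the strict convexity is needed to propagate the identification without slack, e.g.\ to rule out distinct $\Phi$'s compatible with the same identities.

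The main obstacle will be deriving the pointwise identity $(b^s-d^s(q))\Delta^s(q)=1$ on $\Supp(\zeta)$ from constrained optimality. Because the admissibility relation $\sum_s\lambda^s\Phi^s(q)=q$ couples the coordinates of $\Phi$, no single $\Phi^s$ can be varied independently; the first-order conditions must be extracted from compatible perturbations of $(\zeta,\Phi)$, and separating the species contributions after the fact is a genuinely multi-species difficulty with no single-species analogue. Additional subtlety arises at atoms of $\zeta$ and flat regions of $\Phi$, where the identity is most naturally expressed in an integrated or measure-theoretic sense and extended by continuity off the support. This delicate analysis is exactly the content of the postponed Sections \ref{identities_sec} and \ref{identities_proof}, whose results this proof invokes as a black box.
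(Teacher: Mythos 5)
Your plan matches the paper's structure closely: the three lemmas you invoke informally are exactly Lemma~\ref{small_lemma_parisi} ($(\zeta,\Phi,\vc b)$ satisfying the Parisi identities~\eqref{parisi_identity} implies $b^s-d^s(q)=1/\Delta^s(q)$ on $\Supp(\zeta)$), Lemma~\ref{small_lemma_cs} (the converse starting from~\eqref{cs_identity}), and Lemma~\ref{big_lemma} (that pointwise relation plus~\eqref{gap_assumption} forces $A=B$ via integration by parts, using~\eqref{combine_identity_1} and~\eqref{combine_identity_2} to convert the $\theta$-integral into the $\xi$-integral). The observation that~\eqref{parisi_identity_2} is where strict convexity enters is also correct.

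However, there is a genuine gap. Your argument establishes $\inf A = \inf B$ only under~\eqref{xi_strictly_convex}, because the first direction ($\inf A \geq \inf B$) uses~\eqref{parisi_identity_2}, which Theorem~\ref{parisi_min_thm} guarantees only under strict convexity. But the theorem asserts $\inf A = \inf B$ under the weaker~\eqref{xi_convex}, and you do not address how to relax the hypothesis. The paper handles this by perturbing $\xi$ to $\xi_\eps(\vc q) = \xi(\vc q)+\eps\sum_s(\lambda^s q^s)^2$, which satisfies~\eqref{xi_strictly_convex}; by a Gaussian interpolation bound ($\E F_N \le \E F_{N,\eps} \le \E F_N+\eps/2$, which together with Theorem~\ref{parisi_thm} gives $\inf A \le \inf A_\eps \le \inf A + \eps/2$); and then by a compactness-plus-continuity argument (Proposition~\ref{continuity_prop} and Lemma~\ref{away_from_1_lemma}) to pass near-minimizers of $B_\eps$ through the limit $\eps\searrow 0$ and conclude $\inf B \le \inf A$. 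Note that this step is the one place where the proof of Theorem~\ref{parisi_cs} is not purely variational but leans on the Parisi formula (Theorem~\ref{parisi_thm}) itself. Without this approximation step, your proof does not reach the stated generality.

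A smaller inaccuracy: you assert that Parisi minimizers must satisfy~\eqref{gap_assumption} ``since otherwise the Parisi integrals would diverge.'' That reasoning is off — the Parisi functional $A$ is perfectly well-defined without~\eqref{gap_assumption}; what diverges is $B$. In the paper,~\eqref{gap_assumption} is deduced \emph{after the fact} from the identities: combining~\eqref{parisi_identity_2} at $q=q_*$ with~\eqref{parisi_identity_1} gives $\Phi^s(q_*)\le 1-1/b^s<1$, and~\eqref{parisi_identity_2} also rules out $1\in\Supp(\zeta)$. This is a necessary verification before applying Lemma~\ref{big_lemma} and should not be assumed as a free part of Theorem~\ref{parisi_min_thm}.
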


Because of Theorem \ref{parisi_thm}, we immediately obtain the following corollary.

\begin{cor}[Crisanti--Sommers formula] \label{cs_thm}
Assuming \eqref{lambda_assumption}, \eqref{decay_condition}, and \eqref{xi_convex}, we have
\eeq{ \label{cs_formula}
\lim_{N\to\infty} F_N = \inf_{\zeta,\Phi} B(\zeta,\Phi) \quad \mathrm{a.s.},
}
where the infimum is over $\vc\lambda$-admissible pairs satisfying \eqref{gap_assumption} for some $q_*\in[0,1)$.
\end{cor}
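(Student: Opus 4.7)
The plan is to deduce Corollary \ref{cs_thm} by directly chaining two results already in hand. Theorem \ref{parisi_thm}, the Parisi formula established in the companion paper \cite{bates-sohn22}, asserts that under hypotheses \eqref{lambda_assumption}, \eqref{decay_condition}, \eqref{xi_convex}---which are precisely the hypotheses of Corollary \ref{cs_thm}---one has
\[
\lim_{N\to\infty} F_N = \inf_{\zeta,\Phi,\vc b} A(\zeta,\Phi,\vc b) \quad \text{a.s.}
\]
Theorem \ref{parisi_cs}, whose hypotheses \eqref{decay_condition} and \eqref{xi_convex} are also in force, then identifies this Parisi infimum with $\inf_{\zeta,\Phi} B(\zeta,\Phi)$. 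Combining the two equalities yields the claimed Crisanti--Sommers representation.

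The only piece of bookkeeping is to confirm that the class of pairs $(\zeta,\Phi)$ over which the infimum of $B$ is taken matches in the two statements. In Theorem \ref{parisi_cs}, the infimum is implicitly over all $\vc\lambda$-admissible pairs, whereas in Corollary \ref{cs_thm} it is restricted to those additionally satisfying \eqref{gap_assumption} for some $q_*\in[0,1)$. Per Remark \ref{q_star_remark}, we set $B(\zeta,\Phi)=\infty$ whenever no such $q_*$ exists, so the restriction does not change the value of the infimum.

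There is no genuine obstacle here: Corollary \ref{cs_thm} is a one-step consequence of Theorems \ref{parisi_thm} and \ref{parisi_cs}, and its proof fits in a few lines. All of the substantive work lies in those inputs---the large-$N$ asymptotics behind Theorem \ref{parisi_thm} and, within the present paper, the equivalence of the two variational problems in Theorem \ref{parisi_cs}, which itself rests on the minimizer identities developed later in Sections \ref{identities_sec}--\ref{parisi_cs_proof}.
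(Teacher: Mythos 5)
Your proof is correct and matches the paper's own reasoning: the corollary follows immediately by chaining the Parisi formula (Theorem \ref{parisi_thm}) with the equivalence $\inf A = \inf B$ established in Theorem \ref{parisi_cs}. The extra bookkeeping remark about restricting to pairs satisfying \eqref{gap_assumption}, handled via the convention in Remark \ref{q_star_remark}, is also correct and a reasonable thing to note.
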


Theorem \ref{parisi_cs} is the central aim of the paper.
The key observation is Lemma \ref{big_lemma}, which gives a sufficient condition for $A(\zeta,\Phi,\vc b)$ to be equal to $B(\zeta,\Phi)$.
The rest of the work is to show that minimizers satisfy this condition, precisely because of the identities we will soon state as Theorems \ref{parisi_min_thm} and \ref{cs_min_thm}.
This line of reasoning is carried out in Section \ref{parisi_cs_proof} and was inspired by Talagrand's argument in the single-species case \cite[Sec.~4]{talagrand06II}.
What is different here---apart from the presence of multiple species---is that we work with completely general measures $\zeta$, not just those with finite support.
This significantly complicates the relevant calculations.
On the other hand, Theorems \ref{parisi_min_thm} and \ref{cs_min_thm} are analogous to \cite[Prop.~2.1 and Lem.~4.3]{talagrand06II}.
For these results, the difficulty in generalizing Talagrand's arguments is less about the support of $\zeta$, and more about the interactions between species.
As it turns out, what is necessary to overcome these complications is intertwined with the issue of simultaneous symmetry breaking, which we discuss next.

\subsection{Simultaneous symmetry breaking} \label{rsb_sec}
In the single-species case (i.e.~$|\SSS|=1$), the formulas \eqref{parisi_formula} and \eqref{cs_formula} reduce to those given in \cite{talagrand06II,chen13}.
This is because the only map $\Phi:[0,1]\to[0,1]$ satisfying Definition \ref{lambda_av_def} is the identity function, and so the input to these formulas is just the one-dimensional measure $\zeta$.
Furthermore, in this case it is a trivial matter to check that the Crisanti--Sommers formula is strictly convex in $\zeta$, thus leading to a unique minimizer.
This measure, which we will keep denoting by $\zeta$, is called the \textit{Parisi measure}, and is the functional order parameter for classifying the spin glass model into one of two phases.
If $\zeta$ is a point mass, then the model is said to be \textit{replica symmetric} (RS); otherwise the model is \textit{replica symmetry breaking} (RSB).
In the latter case, there is a further classification based on the `level' of symmetry breaking.
Namely, if $|\Supp(\zeta)| = k+1$, where $k\in\{1,2,\dots,\infty\}$, then we say the model is \textit{$k$-RSB}.
If the model is $\infty$-RSB, then one can ask the even subtler question of whether $\Supp(\zeta)$ contains a nonempty interval; this behavior is called \textit{full RSB}.

In seeking to generalize this classification scheme to multi-species models, one immediately encounters a technical roadblock: the Crisanti--Sommers functional \eqref{B_def} is no longer convex in $\Phi$. 
Therefore, uniqueness of the minimizer in Corollary \ref{cs_thm} is not known (and here we mean uniqueness up to a natural pseudometric $\DD$ defined in Section \ref{continuity_proof}).
Nevertheless, we can still speak about the symmetry breaking status of any particular minimizer $(\zeta,\Phi)$.
But then we are faced with a second and more novel  complication: the analogous object to $\zeta$ from before is the pushforward measure $\zeta\circ(\Phi^s)^{-1}$, whose support very much depends on the species $s$.
In particular, some species may be in the RS phase (i.e.~$|\Supp(\zeta\circ(\Phi^s)^{-1})|=1)$ while others are RSB.
Moreover, those in the RSB phase need not have the same level of symmetry breaking.

\begin{eg} \label{decoupled_example}
If the covariance function from \eqref{HN_def} is of the form
\eq{
\xi(\vc q) = \sum_{s\in\SSS}\psi^s(q^s)
}
for some functions $(\psi^s)_{s\in\SSS}$, then there are no  interactions between species.
That is, the spin glass model is a product of independent single-species models.
Of course, each single-species model can be tuned separately to create different levels of symmetry breaking.
\end{eg}

We are interested in finding conditions under which the species must break symmetry together (if they break symmetry at all).
We thus define the following equivalence relation on $\SSS$.

\begin{defn} \label{simultaneous_defn}
Given any pair of species $s,t\in\SSS$, let us say that a $\vc\lambda$-admissible pair $(\zeta,\Phi)$ is $(s,t)$-simultaneous if the following equivalence holds for all $q_0,q_1\in\Supp(\zeta)$:
\eeq{ \label{simultaneous_def}
\Phi^s(q_0) < \Phi^s(q_1) \quad \iff \quad \Phi^t(q_0) < \Phi^t(q_1).
}
More generally, for any subset $\TTT\subset\SSS$, we say that $(\zeta,\Phi)$ is $\TTT$-simultaneous if it is $(s,t)$-simultaneous for every $s,t\in\TTT$.
\end{defn}

A more physical interpretation of this definition is
the following.

\begin{lemma} \label{simultaneous_lemma}
If $(\zeta,\Phi)$ is $(s,t)$-simultaneous, then
there is a measure-preserving and increasing bijection between $\Supp(\zeta\circ(\Phi^s)^{-1})$ and $\Supp(\zeta\circ(\Phi^t)^{-1})$.
\end{lemma}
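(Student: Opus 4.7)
The plan is to construct the bijection explicitly via the map $q \mapsto (\Phi^s(q),\Phi^t(q))$ on $\Supp(\zeta)$, then show that each coordinate of the graph parametrizes the other. Concretely, I would define $F\colon \Phi^s(\Supp(\zeta)) \to \Phi^t(\Supp(\zeta))$ by
\eq{
F(\Phi^s(q)) \coloneqq \Phi^t(q), \qquad q\in\Supp(\zeta),
}
and verify in order: (i) $F$ is well-defined, (ii) $F$ is strictly increasing, (iii) $F$ is a bijection between $\Phi^s(\Supp(\zeta))$ and $\Phi^t(\Supp(\zeta))$, (iv) these sets coincide with $\Supp(\zeta\circ(\Phi^s)^{-1})$ and $\Supp(\zeta\circ(\Phi^t)^{-1})$ respectively, and (v) $F$ is measure-preserving.

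For (i) and (ii), the key input is that monotonicity of $\Phi^s,\Phi^t$ together with \eqref{simultaneous_def} upgrades the stated equivalence of strict inequalities to an equivalence of equalities: if $q_0,q_1\in\Supp(\zeta)$ with (WLOG) $q_0\leq q_1$, then $\Phi^s(q_0) = \Phi^s(q_1)$ iff $\Phi^t(q_0) = \Phi^t(q_1)$, since otherwise one of the inequalities would be strict while the other is not, contradicting \eqref{simultaneous_def}. This gives well-definedness, and applying \eqref{simultaneous_def} directly yields that $F$ is strictly increasing, hence injective. Surjectivity onto $\Phi^t(\Supp(\zeta))$ is automatic from the definition, giving (iii).

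For (iv), I would use the standard fact that when $g\colon[0,1]\to[0,1]$ is continuous and $\mu$ is a Borel probability measure on $[0,1]$, one has $\Supp(g_*\mu) = g(\Supp(\mu))$: the inclusion $\subseteq$ follows since $g(\Supp(\mu))$ is compact and carries full mass, while the reverse inclusion follows since $g^{-1}(V)$ is an open neighborhood of any preimage of $y\in g(\Supp(\mu))$ in $\Supp(\mu)$, hence has positive $\mu$-measure. Applied to $g=\Phi^s$ and $g=\Phi^t$, this identifies the supports. Finally for (v), the defining relation $F \circ \Phi^s = \Phi^t$ on $\Supp(\zeta)$ (a set of full $\zeta$-measure) gives
\eq{
F_*(\zeta\circ(\Phi^s)^{-1}) = (F\circ \Phi^s)_*\zeta = \Phi^t_*\zeta = \zeta\circ(\Phi^t)^{-1},
}
so $F$ transports one pushforward measure to the other.

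None of these steps should present serious difficulty; the main point to be careful about is the passage from the strict-inequality equivalence \eqref{simultaneous_def} to equality of levels, which uses the monotonicity of the $\Phi^s$ in an essential way, and the support identification in (iv), which relies on continuity of $\Phi^s$ (guaranteed by Definition \ref{lambda_av_def}) and compactness of $\Supp(\zeta)$.
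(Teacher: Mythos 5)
Your proposal is correct and takes essentially the same approach as the paper's proof: both construct the bijection as (what amounts to) $\Phi^t\circ(\Phi^s)^{-1}$ restricted to the support, using the $(s,t)$-simultaneity condition together with monotonicity to upgrade the strict-inequality equivalence \eqref{simultaneous_def} to an equivalence of level sets, and then verify the measure-preserving property via the pushforward relation $F\circ\Phi^s = \Phi^t$ on $\Supp(\zeta)$. The only cosmetic difference is that the paper fixes an explicit convention \eqref{inverse_convention} for $(\Phi^s)^{-1}$ and manipulates compositions, whereas you define $F$ directly by its functional equation and argue well-definedness, with the support identification $\Supp(g_*\mu)=g(\Supp(\mu))$ made explicit as a separate step.
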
 

The proof is a matter of chasing definitions, and so we postpone it until the end of Section \ref{identities_sec}.
Clearly \eqref{simultaneous_def} is a much more straightforward condition to check, but in light of the discussion that opened this section, we really care about the interpretation offered by Lemma \ref{simultaneous_lemma}.
More specifically, we care about whether or not a \textit{minimizer} to the Crisanti--Sommers formula is $(s,t)$-simultaneous.
In order to maintain the greatest possible generality, we will give all of our results in terms of Crisanti--Sommers minimizers.
If \eqref{xi_strictly_convex} holds, then these coincide with the Parisi minimizers thanks to Theorem \ref{parisi_cs}.

To state our first theorem on simultaneous symmetry breaking, we define the set of species which have nonzero external fields:
\eeq{ \label{S_ext}
\SSS_\ext \coloneqq \{s\in\SSS:\, h_s^2>0\}.
}
The reason we make this definition is that every $s\in\SSS_\ext$ necessarily has $0\notin\Supp(\zeta\circ(\Phi^s)^{-1})$ for any minimizer $(\zeta,\Phi)$ to \eqref{cs_formula}, a fact which will become clear in Section \ref{identities_sec}.
Consequently, the presence of external fields only serves to strengthen our statements regarding simultaneous symmetry breaking, as the following result demonstrates.

\begin{thm} \label{simultaneous_thm_1}
Assume \eqref{decay_condition}.
If a minimizer $(\zeta,\Phi)$ to \eqref{cs_formula} satisfies
\eeq{ \label{simultaneous_condition_specific}
\frac{\partial\xi^s}{\partial q^t}(\Phi(q))>0 \quad \text{whenever $q\in\Supp(\zeta)$ and $\Phi^s(q)\vee\Phi^t(q)>0$},
}
then $(\zeta,\Phi)$ is $(s,t)$-simultaneous.
In particular, if
\eeq{ \label{simultaneous_condition}
\frac{\partial \xi^s}{\partial q^t}(\vc q) > 0 \quad \text{whenever $q^s\vee q^t>0$ and $q^r>0$ for all $r\in\SSS_\ext$},
}
then any minimizer to \eqref{cs_formula} is $(s,t)$-simultaneous.
\end{thm}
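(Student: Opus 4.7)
The plan is to argue by contradiction using the Crisanti--Sommers minimizer identity (Theorem \ref{cs_min_thm}, forthcoming). Suppose $(\zeta,\Phi)$ is a CS minimizer satisfying \eqref{simultaneous_condition_specific} but failing to be $(s,t)$-simultaneous. By Definition \ref{simultaneous_defn} and monotonicity, after possibly relabeling $s$ and $t$, there exist $q_0<q_1$ in $\Supp(\zeta)$ with $\Phi^s(q_0)=\Phi^s(q_1)$ and $\Phi^t(q_0)<\Phi^t(q_1)$; continuity then forces $\Phi^s$ to be constant on $[q_0,q_1]$.

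First I would invoke the identity of Theorem \ref{cs_min_thm} for species $s$, which on $\Supp(\zeta)$ relates $\xi^s(\Phi(q))$ to an integral of $(\Phi^s)'/(\Delta^s)^2$. Evaluating at $q_0$ and $q_1$ and using the constancy of $\Phi^s$ between them gives $\xi^s(\Phi(q_0))=\xi^s(\Phi(q_1))$. Since admissibility $\sum_r\lambda^r\Phi^r(q)=q$ makes each $\Phi^r$, and hence $\xi^s\circ\Phi$, absolutely continuous, the fundamental theorem of calculus yields
\eq{
0 = \xi^s(\Phi(q_1))-\xi^s(\Phi(q_0)) = \int_{q_0}^{q_1}\sum_{r\in\SSS}\frac{\partial\xi^s}{\partial q^r}(\Phi(q))\,(\Phi^r)'(q)\,\dd q.
}
The $r=s$ term vanishes pointwise, and every other integrand is non-negative because the Taylor coefficients of $\xi$ in \eqref{xi_def} are non-negative and each $\Phi^r$ is non-decreasing. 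Consequently each integrand vanishes almost everywhere on $[q_0,q_1]$; in particular $\frac{\partial\xi^s}{\partial q^t}(\Phi(q))(\Phi^t)'(q)=0$ for a.e.\ $q\in[q_0,q_1]$.

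To close the argument, absolute continuity of $\Phi^t$ combined with $\Phi^t(q_1)>\Phi^t(q_0)\geq 0$ forces $(\Phi^t)'>0$ on a subset of $[q_0,q_1]$ of positive Lebesgue measure, on which the displayed identity compels $\frac{\partial\xi^s}{\partial q^t}\circ\Phi$ to vanish. On the other hand, hypothesis \eqref{simultaneous_condition_specific} applied at $q_1\in\Supp(\zeta)$ (where $\Phi^t(q_1)>0$) gives $\frac{\partial\xi^s}{\partial q^t}(\Phi(q_1))>0$, and continuity extends this positivity into a left-neighborhood of $q_1$, on which $\Phi^t$ is therefore constant. Iterating this positivity-constancy propagation backward from $q_1$ to $q_0$ produces $\Phi^t(q_0)=\Phi^t(q_1)$, the desired contradiction. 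The ``in particular'' statement then follows because \eqref{simultaneous_condition} implies \eqref{simultaneous_condition_specific}: for any $q\in\Supp(\zeta)$, monotonicity of $\Phi^r$ together with the forthcoming fact that $0\notin\Supp(\zeta\circ(\Phi^r)^{-1})$ for each $r\in\SSS_\ext$ yields $\Phi^r(q)>0$, so evaluating \eqref{simultaneous_condition} at $\vc q=\Phi(q)$ recovers \eqref{simultaneous_condition_specific}.

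The main obstacle I anticipate is justifying the iteration in the previous paragraph under the weaker hypothesis \eqref{simultaneous_condition_specific}: propagating positivity of $\frac{\partial\xi^s}{\partial q^t}\circ\Phi$ backward from $q_1$ may require evaluating at intermediate points not belonging to $\Supp(\zeta)$. I would address this by first refining the witness pair so that $(q_0,q_1)\cap\Supp(\zeta)=\emptyset$, taking $q_1$ to be the smallest element of $\Supp(\zeta)\cap(q_0,q_1]$ at which $\Phi^t$ exceeds $\Phi^t(q_0)$ and $q_0$ to be the largest element of $\Supp(\zeta)\cap[q_0,q_1)$ with $\Phi^t$ still equal to $\Phi^t(q_0)$, and then using the non-negative coefficient structure of $\frac{\partial\xi^s}{\partial q^t}$ together with monotonicity of each $\Phi^r$ to transfer positivity across the resulting gap. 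Under the stronger hypothesis \eqref{simultaneous_condition} this obstacle dissolves entirely, since positivity becomes a property of $\xi$ itself and monotonicity of the $\Phi^r$'s (for $r\in\SSS_\ext$) delivers what is needed uniformly on $[q_0,q_1]$.
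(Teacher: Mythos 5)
Your route is genuinely different from the paper's: you apply the fundamental theorem of calculus to $\xi^s\circ\Phi$ directly along the graph $q\mapsto\Phi(q)$, extracting the pointwise constraint $\frac{\partial\xi^s}{\partial q^t}(\Phi(q))(\Phi^t)'(q)=0$ a.e.\ on $[q_0,q_1]$, whereas the paper's proof (given for Theorem \ref{simultaneous_thm_2}, with Theorem \ref{simultaneous_thm_1} as the case $\TTT=\{t\}$) moves from $\Phi(a)$ to $\Phi(b)$ along a rectilinear path, one coordinate at a time, so that $\frac{\partial\xi^s}{\partial q^t}$ ends up evaluated at (essentially) $\Phi(b)$, where \eqref{simultaneous_condition_specific} delivers positivity directly. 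Your steps through the derivation of this pointwise constraint are sound: \eqref{cs_identity} together with the constancy of $\Phi^s$ on $[q_0,q_1]$ does give $\xi^s(\Phi(q_0))=\xi^s(\Phi(q_1))$, the chain rule applies since each $\Phi^r$ is $1/\lambda^r$--Lipschitz, and nonnegativity of the Taylor coefficients of $\xi$ makes every summand vanish a.e.

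The gap is where you flagged it, but your proposed repair does not close it. The function $g(q):=\frac{\partial\xi^s}{\partial q^t}(\Phi(q))$ is \emph{nondecreasing} in $q$ (nonnegative Taylor coefficients, and each $\Phi^r$ nondecreasing), so positivity of $g$ at $q_1$ only propagates \emph{forward}: for $q<q_1$ one learns nothing better than $g(q)\leq g(q_1)$. Refining the witness pair so that $(q_0,q_1)\cap\Supp(\zeta)=\emptyset$ does not change this, and ``monotonicity of each $\Phi^r$'' points in exactly the wrong direction for carrying positivity backward across the gap. What does work is to use the monotonicity of $g$ the other way: since $g(\Phi^t)'=0$ a.e.\ and $\int_{q_0}^{q_1}(\Phi^t)'>0$, monotonicity and continuity of $g$ force $g(q_0)=0$ (otherwise $g>0$ on all of $[q_0,q_1]$ and $\Phi^t$ would be constant). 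Applying \eqref{simultaneous_condition_specific} at $q_0\in\Supp(\zeta)$ then yields $\Phi^s(q_0)=\Phi^t(q_0)=0$, hence $\Phi^s(q_1)=0$ and, by \eqref{cs_identity} at $q_1$, $\xi^s(\Phi(q_1))=0$; but $\frac{\partial\xi^s}{\partial q^t}(\Phi(q_1))>0$ together with $\Phi^t(q_1)>0$ forces $\xi^s(\Phi(q_1))>0$ (any monomial of $\xi^s$ whose $q^t$-derivative is positive at $\Phi(q_1)$ is itself positive there, since $\Phi^t(q_1)>0$ restores the missing factor of $q^t$). This additional argument --- evaluating the hypothesis at the \emph{lower} endpoint $q_0$, then invoking \eqref{cs_identity} once more at $q_1$ --- is the missing piece. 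The paper avoids it by its choice of path, which lets positivity be read off at the upper endpoint with no further work.
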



Here is an application of Theorem \ref{simultaneous_thm_1}.

\begin{eg} \label{first_simultaneous_example}
Notice from \eqref{xi_def} that
\eq{
\frac{\partial \xi^s}{\partial q^t} \geq \beta_2\lambda^t\Delta_{s,t}^2.
}
Hence \eqref{simultaneous_condition} holds whenever $\beta_2\Delta_{s,t}^2>0$.
But this is not a necessary condition.
For instance, if 
\eq{
\beta_p\underbrace{\Delta^2_{s,t,\dots,t}}_{\rlap{\footnotesize\text{$p-1$ copies of $t$}}}>0 \quad \text{for some $p\geq2$},
\quad\text{and} \quad
\beta_{p'}\underbrace{\Delta^2_{t,s,\dots,s}}_{\rlap{\footnotesize\text{$p'-1$ copies of $s$}}}>0 \quad \text{for some $p'\geq2$},
}
then \eqref{simultaneous_condition} again holds.
If $h_r^2>0$ for all $r\in\SSS$, then \eqref{simultaneous_condition} is implied by an even weaker condition, namely that
\eq{ 
\beta_p\Delta_{s,t,r_1,\dots,r_{p-2}}^2 > 0 \quad \text{for some $p\geq2$, $r_1,\dots,r_{p-2}\in\SSS$}.
}
\end{eg}

Notice that any model can be made to satisfy \eqref{simultaneous_condition} via an arbitrarily small perturbation.
Namely, if $\xi$ is replaced with $\xi+\eps q^sq^t$, where $\eps>0$, then \eqref{simultaneous_condition} holds.
Upon performing this modification for every pair of species, we arrive at a model in which all species are simultaneous.
In this sense, simultaneous symmetry breaking might be regarded as a ``generic'' feature of spherical spin glasses.

Of course, we also wish to address the coordination of more than just two species.
We thus state the following generalized form of Theorem \ref{simultaneous_thm_1}.
It allows one to ``chain'' simultaneity relations. 

\begin{thm} \label{simultaneous_thm_2}
Assume \eqref{decay_condition}.
Suppose $(\zeta,\Phi)$ is a minimizer to \eqref{cs_formula} that is $\TTT$-simultaneous for some $\TTT\subset\SSS$.
If
\eeq{ \label{simultaneous_condition_2_species}
\max_{t\in\TTT}\frac{\partial \xi^s}{\partial q^t}(\Phi(q)) > 0 \qquad \text{whenever} \qquad \parbox{2.7in}{\centering $q\in\Supp(\zeta)$, $\Phi^s(q)\vee \min_{t\in\TTT}\Phi^t(q)>0$, and $\Phi^r(q)>0$ $\forall$ $r\in\SSS_\ext$,}
}
then $(\zeta,\Phi)$ is $(\TTT\cup\{s\})$-simultaneous.
In particular, if
\eeq{ \label{simultaneous_condition_2}
\max_{t\in\TTT}\frac{\partial \xi^s}{\partial q^t}(\vc q) > 0 \quad \text{whenever $q^s\vee \min_{t\in\TTT}q^t>0$ and $q^r>0$ for all $r\in\SSS_\ext$},
}
then every minimizer which is $\TTT$-simultaneous is also $(\TTT\cup\{s\})$-simultaneous.
\end{thm}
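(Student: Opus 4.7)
The plan is to mimic the contradiction strategy behind Theorem \ref{simultaneous_thm_1}, leveraging the given $\TTT$-simultaneity to invoke the chain condition for an entire family of species at once. Suppose, toward contradiction, that $(\zeta,\Phi)$ is a $\TTT$-simultaneous minimizer satisfying \eqref{simultaneous_condition_2_species} but fails $(s,t_0)$-simultaneity for some $t_0\in\TTT$. Definition \ref{simultaneous_defn} combined with $\TTT$-simultaneity then yields $q_0<q_1$ in $\Supp(\zeta)$ for which exactly one of the following holds: (A) $\Phi^s(q_0)=\Phi^s(q_1)$ while $\Phi^t(q_0)<\Phi^t(q_1)$ for every $t\in\TTT$; or (B) $\Phi^s(q_0)<\Phi^s(q_1)$ while $\Phi^t$ is constant on $[q_0,q_1]$ for every $t\in\TTT$.

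In Case (A), I would apply the Crisanti--Sommers minimizer identity of Theorem \ref{cs_min_thm} for species $s$ at the two support points $q_0,q_1$. Because $\Phi^s$ is continuous and equal at the endpoints, it is constant throughout $[q_0,q_1]$, so subtracting the two instances of the identity should collapse to a relation of the form
\[
0=\int_{q_0}^{q_1}\zeta\big([0,u]\big)\,(\xi^s\circ\Phi)'(u)\,\dd u.
\]
Expanding by the chain rule,
\[
(\xi^s\circ\Phi)'(u)=\sum_{r\in\SSS}\frac{\partial\xi^s}{\partial q^r}(\Phi(u))\,(\Phi^r)'(u),
\]
every summand is nonnegative by \eqref{xi_convex} together with the monotonicity of each $\Phi^r$. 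Since $\zeta([0,u])\geq\zeta([0,q_0])>0$ for $u\in[q_0,q_1]$, each summand must vanish Lebesgue-a.e.\ on $[q_0,q_1]$. Now pick any $t\in\TTT$: the set where $(\Phi^t)'>0$ has positive measure in $[q_0,q_1]$ because $\Phi^t(q_0)<\Phi^t(q_1)$, and at a.e.\ such $u$ we obtain $\partial\xi^s/\partial q^t(\Phi(u))=0$. Taking the maximum over $t\in\TTT$ yields $\max_{t\in\TTT}\partial\xi^s/\partial q^t(\Phi(u))=0$ for a.e.\ such $u$.

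To close the argument, one checks that the premise of \eqref{simultaneous_condition_2_species} holds at such a $u$. The condition $\Phi^s(u)\vee\min_{t\in\TTT}\Phi^t(u)>0$ is immediate (for $u>q_0$, $\Phi^t(u)>0$ since $\Phi^t$ has strictly increased from its nonnegative value at $q_0$), and the condition $\Phi^r(u)>0$ for all $r\in\SSS_\ext$ is precisely the auxiliary fact stated immediately after \eqref{S_ext} and derivable from Theorem \ref{cs_min_thm}. Thus the contradictory conclusion $\max_{t\in\TTT}\partial\xi^s/\partial q^t(\Phi(u))>0$ follows. Case (B) is handled symmetrically: apply the identity for any fixed $t\in\TTT$ in place of $s$, and convert the hypothesis via the algebraic identity $\lambda^s\partial\xi^s/\partial q^t=\lambda^t\partial\xi^t/\partial q^s$. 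The main obstacle is not this chain-rule contradiction itself, which is short, but rather the preceding derivation of Theorem \ref{cs_min_thm} in the generality required---i.e., for measures $\zeta$ not necessarily of finite support---which is precisely the substantive work deferred to Section \ref{identities_proof}.
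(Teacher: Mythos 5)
Your proposal takes a contradiction route, whereas the paper argues the two implications of Definition \ref{simultaneous_defn} directly by applying the hypothesis at the right endpoint $b$ and using the fundamental theorem of calculus along a coordinate-by-coordinate path from $\Phi(a)$ to $\Phi(b)$. Your version has a real gap at the closing step, and it is not a cosmetic one.

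In Case (A) you derive that for each $t\in\TTT$ and a.e.~$u$ in some positive-measure subset $E_t\subset[q_0,q_1]$ where $(\Phi^t)'>0$, one has $\partial\xi^s/\partial q^t(\Phi(u))=0$. To contradict \eqref{simultaneous_condition_2_species} you must exhibit a single $q\in\Supp(\zeta)$ at which the premise holds and $\max_{t\in\TTT}\partial\xi^s/\partial q^t(\Phi(q))=0$. Two things go wrong. First, the $u$'s your argument produces lie in $(q_0,q_1)$, which can be entirely disjoint from $\Supp(\zeta)$ (e.g.~when $q_0$ and $q_1$ are adjacent atoms); since \eqref{simultaneous_condition_2_species} only constrains $q\in\Supp(\zeta)$, no contradiction ensues. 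Second, even ignoring the support requirement, the sets $E_t$ depend on $t$, so ``taking the maximum over $t\in\TTT$'' does not yield a single $u$ at which all the partials vanish simultaneously. One can try to repair this using the monotonicity of $u\mapsto\partial\xi^s/\partial q^t(\Phi(u))$ (which follows from the nonnegative power-series coefficients of $\xi$ and the monotonicity of $\Phi$), giving vanishing on $[q_0,u^*]$ for some $u^*>q_0$; but you still lack a guaranteed point of $\Supp(\zeta)$ inside $(q_0,u^*]$, and the premise of \eqref{simultaneous_condition_2_species} may fail at $q_0$ itself (e.g.\ if $\Phi^s(q_0)=\Phi^t(q_0)=0$ for all $t\in\TTT$). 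The paper sidesteps this entirely: it applies the hypothesis at $q=b\in\Supp(\zeta)$, where the premise is verified directly from $\TTT$-simultaneity and \eqref{cs_identity}, and uses strict positivity of the $t$-segment integral in the path decomposition to conclude $\xi^s(\Phi(a))<\xi^s(\Phi(b))$, hence $\Phi^s(a)<\Phi^s(b)$. Your Case (B) has the same structural gap. Minor points: your displayed integral with a factor $\zeta([0,u])$ is not what subtracting \eqref{cs_identity} at $q_0,q_1$ produces (the factor should be absent, which is harmless but signals a small slip), and the inequality $\zeta([0,u])\geq\zeta([0,q_0])>0$ is false in general when $q_0$ is the infimum of $\Supp(\zeta)$.
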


Here is an application which distinguishes Theorem \ref{simultaneous_thm_2} from Theorem \ref{simultaneous_thm_1}.

\begin{eg} \label{chaining_example}
Suppose $\SSS = \{r,s,t\}$, and that we have
\eq{ 
\beta_2\Delta_{r,t}^2 > 0, \qquad
\beta_{p}\underbrace{\Delta^2_{r,s,\dots,s}}_{\rlap{\footnotesize\text{$p-1$ copies of $s$}}}>0 \quad \text{for some $p\geq2$}, \qquad
\beta_{p'}\underbrace{\Delta^2_{s,t,\dots,t}}_{\rlap{\footnotesize\text{$p'-1$ copies of $t$}}}>0 \quad \text{for some $p'\geq2$}.
}
By Theorem \ref{simultaneous_thm_1}, the first of these inequalities ensures that any minimizer to \eqref{cs_formula} will be $(r,t)$-simultaneous, as in Example \ref{first_simultaneous_example}.
Then Theorem \ref{simultaneous_thm_2} comes into effect with $\TTT=\{r,t\}$, for the second inequality ensures that $\partial^r\xi^s(\vc q)>0$ whenever $q^s>0$, while
the third inequality ensures $\partial^t\xi^s(\vc q)>0$ whenever $q^t>0$.
Therefore, \eqref{simultaneous_condition_2} is satisfied, and so for any minimizer in any model fulfilling the three conditions displayed above, all three species are simultaneous.
\end{eg}

Let us also include a case for which we do not have a complete answer.

\begin{eg} \label{incomplete_example}
Again suppose $\SSS=\{r,s,t\}$, and that all we know about the covariance function $\xi$ is 
\eeq{ \label{beta_3_assumption}
\beta_3\Delta_{r,s,t}^2>0.
}
This is not enough to conclude \eqref{simultaneous_condition} for any pair of species.
Nevertheless, if species $r$ is the ``first'' to break symmetry for some minimizer $(\zeta,\Phi)$, in the sense that
\eeq{ \label{r_is_first}
 \Phi^{s}(q)\vee \Phi^{t}(q) > 0  \quad \implies \quad \Phi^r(q) > 0\quad \text{for $q\in\Supp(\zeta)$},
}
then Theorem \ref{simultaneous_thm_1} forces $(\zeta,\Phi)$ to at least be $(s,t)$-simultaneous.
This is because
\eq{
\frac{\partial\xi^s}{\partial q^t} \geq \beta_3\Delta^2_{r,s,t}\lambda^r\lambda^t q^r,
}
and so under \eqref{beta_3_assumption} and \eqref{r_is_first}, the hypothesis \eqref{simultaneous_condition_specific} holds even though \eqref{simultaneous_condition} does not.
Note that \eqref{r_is_first} is trivially true if $h_r^2>0$, since in this case
\eqref{cs_identity} implies $\Phi^r(q)>0$ for all $q\in\Supp(\zeta)$.
Consequently, $h_r^2>0$ implies any minimizer is $(s,t)$-simultaneous.
If $h_s^2$ is also positive, then any minimizer is also $(r,t)$-simultaneous, meaning all three species are simultaneous.
\end{eg}

It should be mentioned that we have not addressed the actual \textit{existence} of symmetry breaking.
There are well-known arguments to prove symmetry breaking at sufficiently low temperatures, e.g.~\cite[Prop.~4.2]{barra-contucci-mingione-tantari15}.
For the multi-species Ising SK model, a more quantitative condition for symmetry breaking is given in \cite{bates-sloman-sohn19,dey-wu21}.
To go further and actually determine the \textit{level} of symmetry breaking is in general a famously challenging problem already for single-species models.
This is especially true for models at positive temperature, which is the setting considered here.
We leave these important questions for future work.

\subsection{Identities satisfied by minimizers} \label{identities_sec}
Now we state the essential identities which underlie all the results of Sections \ref{cs_sec} and \ref{rsb_sec}.
Proving these identities is the biggest challenge of this paper.
The arguments are perturbative and are carried out in Section \ref{identities_proof}.
First we consider Parisi minimizers.

\begin{thm} \label{parisi_min_thm}
Assume \eqref{decay_condition}.
There exists a triple $(\zeta,\Phi,\vc b)$ which achieves the minimum in \eqref{parisi_formula}, and necessarily satisfies
\begin{subequations}
\label{parisi_identity}
\eeq{
1 - \frac{1}{b^s} - \frac{h_s^2+\red{\xi^s(\vc0)}}{(b^s-d^s(0))^2} &= \int_0^1\frac{(\xi^s\circ\Phi)'(q)}{(b^s-d^s(q))^2}\ \dd q \label{parisi_identity_1} \quad \text{for all $s\in\SSS$}.
}
Furthermore, if \eqref{xi_strictly_convex} holds,
then any minimizer must also satisfy
\eeq{
\Phi^s(q) &= \frac{h_s^2+\red{\xi^s(\vc0)}}{(b^s-d^s(0))^2}+\int_0^q \frac{(\xi^s\circ\Phi)'(u)}{(b^s-d^s(u))^2}\ \dd u \quad \text{for all $q\in\Supp(\zeta)$, $s\in\SSS$}.
\label{parisi_identity_2}
}
\end{subequations}
\end{thm}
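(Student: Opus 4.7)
My approach has three parts: existence of a minimizer, then the two identities \eqref{parisi_identity_1} and \eqref{parisi_identity_2} as first-order optimality conditions. Existence is standard compactness, \eqref{parisi_identity_1} is a one-line derivative in $\vc b$, and the serious work is \eqref{parisi_identity_2}.

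\textbf{Existence.} Admissibility forces each $\Phi^s$ to be $(1/\lambda^s)$-Lipschitz and non-decreasing, so the family of pairs $(\zeta,\Phi)$ is precompact in a natural pseudometric (the $\DD$ of Section \ref{continuity_proof}). For $\vc b$, the term $b^s-1-\log b^s$ is coercive as $b^s\to 0^+$ or $b^s\to\infty$, and $(h_s^2+\xi^s(\vc 0))/(b^s-d^s(0))$ blows up as $b^s\searrow d^s(0)$ whenever $h_s^2+\xi^s(\vc 0)>0$ (the degenerate case $h_s^2+\xi^s(\vc 0)=0$ being handled by fixing $b^s=1$ without loss of generality). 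Combined with lower semicontinuity of $A$ on this compact set (borrowed from Section \ref{continuity_proof}), a minimizer exists.

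\textbf{Identity \eqref{parisi_identity_1}.} Since $d^s(\cdot)$ does not depend on $\vc b$, the map $\vc b\mapsto A(\zeta,\Phi,\vc b)$ is smooth on the open set defined by \eqref{parisi_constraint}, and direct differentiation gives
\[
\frac{\partial A}{\partial b^s} = \frac{\lambda^s}{2}\Bigl[1-\frac{1}{b^s}-\frac{h_s^2+\xi^s(\vc 0)}{(b^s-d^s(0))^2}-\int_0^1 \frac{(\xi^s\circ\Phi)'(q)}{(b^s-d^s(q))^2}\,\dd q\Bigr].
\]
This vanishes at the interior minimizer, which is exactly \eqref{parisi_identity_1}.

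\textbf{Identity \eqref{parisi_identity_2} and main obstacle.} I would carry out two coupled families of perturbations. First, for species $s,t\in\SSS$ and a smooth $\eta$ compactly supported in $(0,1)$, set $\Phi^s_\eps=\Phi^s+\eps\eta/\lambda^s$, $\Phi^t_\eps=\Phi^t-\eps\eta/\lambda^t$, leaving the other species unchanged; for $|\eps|$ small this respects \eqref{admissible_def}, monotonicity, and the range $[0,1]$. Expanding $\xi^r\circ\Phi_\eps$ and $\theta\circ\Phi_\eps$ to first order, then integrating by parts using $(d^r)'(q)=-\zeta([0,q])(\xi^r\circ\Phi)'(q)$, the first variation reduces to an integral of $\eta$ against a $\zeta$-measurable function; since $\eta$ is of arbitrary sign, this function vanishes $\zeta$-a.e. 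Varying the pair $(s,t)$ and combining, the resulting identities collapse to
\[
(\Phi^r)'(q)(b^r-d^r(q))^2 - (\xi^r\circ\Phi)'(q) = \lambda^r\,C(q) \qquad \forall\, r\in\SSS,\ \zeta\text{-a.e.}\ q\in\Supp(\zeta),
\]
for a single Lagrange multiplier $C$ reflecting the coupling constraint \eqref{admissible_def}. Second, perturbing $\zeta\mapsto(1-\eps)\zeta+\eps\nu$ for a test probability measure $\nu$ produces a scalar first-order condition on $\Supp(\zeta)$; substituting the species-level equation above and using the algebraic identity $(\theta\circ\Phi)'(q)=\sum_s\lambda^s\Phi^s(q)(\xi^s\circ\Phi)'(q)$, one identifies $C\equiv 0$ (this inversion step is where \eqref{xi_strictly_convex} enters, since positive definiteness of $\nabla^2\xi$ rules out nontrivial elements in the appropriate kernel). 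Integrating the species-level equation from $0$ and pinning the boundary term using $\Phi^s(0)=0$ (so that $0\in\Supp(\zeta)$ forces $h_s^2+\xi^s(\vc 0)=0$ a posteriori) yields \eqref{parisi_identity_2}. The main obstacle is twofold: first, $A$ depends on $\Phi$ both pointwise and nonlocally through $d^s$, while $\Phi$ is only Lipschitz and $\zeta$ may be non-atomic, so the first-variation must be handled via weak derivatives and careful integration by parts rather than the finite-dimensional calculus of \cite[Sec.~4]{talagrand06II}; second, admissibility couples the species and produces the Lagrange multiplier $C$, whose identification as zero relies crucially on \eqref{xi_strictly_convex} --- this is the structural reason \eqref{parisi_identity_1} holds with no convexity hypothesis while \eqref{parisi_identity_2} requires strict convexity.
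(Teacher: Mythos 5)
Your plan for \eqref{parisi_identity_1} is exactly the paper's: differentiate $A$ in $b^s$ and set the result to zero. The two other parts diverge from the paper, and each has a genuine gap.

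\textbf{Existence of a minimizing $\vc b$.} The coercivity of $b^s-1-\log b^s$ as $b^s\to 0^+$ is irrelevant because the constraint is $b^s>d^s(0)$, not $b^s>0$. The troublesome boundary is $b^s\searrow d^s(0)$, and your argument only covers it when $h_s^2+\xi^s(\vc 0)>0$. If $h_s^2+\xi^s(\vc 0)=0$ and $d^s(0)\geq 1$ (a configuration you cannot a priori exclude, since $d^s(0)$ depends on $\zeta$ and $\Phi$, not on $h_s$), your ``fix $b^s=1$ without loss of generality'' fails outright: $b^s=1$ is not even feasible. The paper's proof handles exactly this case by showing that the integral $\int_0^1 (\xi^s\circ\Phi)'(q)(b^s-d^s(q))^{-2}\,\dd q$ forces $\dd A/\dd b^s<0$ near $b^s=d^s(0)$, using the point $q_0=\sup\{q:d^s(q)=d^s(0)\}$ and the normalization $\zeta([0,q])\leq 1$; this is the step your sketch is missing. (Existence of the $(\zeta,\Phi)$ part the paper outsources to \cite[Cor.~1.6]{bates-sohn22}; ``borrowing lower semicontinuity of $A$ from Section \ref{continuity_proof}'' is not available since that section treats $B$, whose structure is much simpler than $A$'s nested dependence on $d^s$.)

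\textbf{Identity \eqref{parisi_identity_2}.} Your Euler--Lagrange strategy is a genuinely different route from the paper's, and the paper's choice is not cosmetic. Your pairwise perturbation $\Phi^s_\eps=\Phi^s+\eps\eta/\lambda^s$, $\Phi^t_\eps=\Phi^t-\eps\eta/\lambda^t$ must keep each coordinate non-decreasing for $\eps$ of both signs, which forces $\eta'=0$ on every interval where $(\Phi^s)'$ or $(\Phi^t)'$ vanishes. Thus you only recover equalities off flat spots of $\Phi$, and that is precisely the set where \eqref{parisi_identity_2} is easy; the content of \eqref{parisi_identity_2} is to \emph{forbid} certain flat spots on $\Supp(\zeta)$ (since $(\Phi^s)'(q)=(\xi^s\circ\Phi)'(q)/(b^s-d^s(q))^2$ makes $(\Phi^s)'$ vanish only when $(\xi^s\circ\Phi)'$ does), and your test functions are blind there. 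The paper avoids this entirely: it interpolates linearly towards the explicit candidate $\phi=(\phi^s)$, which is itself non-decreasing, so monotonicity of $\wt\Phi_\eps=(1-\eps)\Phi+\eps\phi$ is automatic; admissibility is then restored by a reparameterization $\alpha_\eps^{-1}$, which simultaneously perturbs $\zeta$ and makes the first variation come out as the closed-form quadratic $-\tfrac12\int\iprod{\Phi-\phi}{\nabla^2\xi(\Phi)(\Phi-\phi)}\,\zeta(\dd q)$. The strict convexity hypothesis \eqref{xi_strictly_convex} is then applied to this quadratic form directly, not to ``rule out nontrivial elements in a kernel'' of a Lagrange-multiplier inversion. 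It is also not apparent that your first variation ``reduces to an integral of $\eta$ against a function'' in the way claimed: $A$ depends on $\Phi$ nonlocally through $d^s$, so the first variation involves a double integral that does not immediately localize; the paper's choice of perturbation direction $\phi-\Phi$ is what makes the cross terms cancel after two integrations by parts. Finally, the $q=0$ endpoint needs its own argument (the paper's Claim \ref{jump_claim}): one does not get $h_s^2+\xi^s(\vc 0)=0$ ``a posteriori'' when $0\in\Supp(\zeta)$; rather, one shows $h_s^2+\xi^s(\vc 0)>0$ forces $0\notin\Supp(\zeta)$ by a separate measure perturbation $\zeta-\eps\delta_0+\eps\delta_p$.
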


One obvious consequence of \eqref{parisi_identity_2} is that $\Phi^s(q)$ can be no smaller than $h_s^2/((b^s-d^s(0))^2$, for $q\in\Supp(\zeta)$.
So the presence of a nonzero external field on species $s$ forces the corresponding overlap to be bounded away from $0$.
We will be able to make the same observation from the following parallel result about Crisanti--Sommers minimizers. 
This is why the set $\SSS_\ext$ from \eqref{S_ext} is given special attention in Theorems \ref{simultaneous_thm_1} and \ref{simultaneous_thm_2}.


\begin{thm} \label{cs_min_thm}
Assume \eqref{decay_condition}.
There exists a $\vc\lambda$-admissible pair $(\zeta,\Phi)$ which achieves the infimum in \eqref{cs_formula}.
Furthermore, any minimizer must satisfy
\eeq{ \label{cs_identity}
\xi^s(\Phi(q))+\red{h_s^2}  = \int_0^q \frac{(\Phi^s)'(u)}{(\Delta^s(u))^2}\ \dd u \quad \text{for all $q\in\Supp(\zeta)$, $s\in\SSS$}.
}
\end{thm}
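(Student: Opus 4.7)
The plan is to establish existence and the identity \eqref{cs_identity} separately. For existence, I would apply the direct method of the calculus of variations. Section \ref{continuity_proof} supplies continuity of $B$ in a natural pseudometric $\DD$ on $\vc\lambda$-admissible pairs, and along any minimizing sequence one can extract a $\DD$-convergent subsequence by combining Helly's selection theorem for the monotone coordinates $\Phi^s$, weak-$*$ compactness of probability measures on $[0,1]$ for $\zeta$, and the fact that $B\to\infty$ as $q_*\to 1$ (via the $\log\Delta^s(q_*)$ term), which keeps a minimizing sequence bounded away from the degenerate boundary. The infimum is finite since $B(\delta_0,\Phi)<\infty$ for $\Phi^s=\mathrm{id}$.

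For the identity, I would use a first-order variational argument with two families of perturbations. Given a minimizer $(\zeta,\Phi)$, define
\begin{equation*}
C^s(q) := \xi^s(\Phi(q)) + h_s^2 - \int_0^q \frac{(\Phi^s)'(u)}{(\Delta^s(u))^2}\,\dd u,
\end{equation*}
which is the quantity the identity asserts must vanish on $\Supp(\zeta)$. First, I perturb only $\zeta$: for any probability measure $\mu$ on $[0,q_*]$, set $\zeta_\epsilon := (1-\epsilon)\zeta + \epsilon\mu$. Computing $\dd B(\zeta_\epsilon,\Phi)/\dd\epsilon$ at $\epsilon=0$ via Fubini reduces the first-order condition to the statement that the primitive
\begin{equation*}
G(q) := \int_0^q \sum_{s\in\SSS} \lambda^s(\Phi^s)'(u)\,C^s(u)\,\dd u
\end{equation*}
is constant on $\Supp(\zeta)$; equivalently, $\sum_{s\in\SSS} \lambda^s(\Phi^s)'(q)\,C^s(q)=0$ for $\zeta$-a.e.\ $q$. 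Second, I perturb $\Phi$ pairwise: for species $s,t\in\SSS$ and a smooth bump $\phi$ compactly supported in an interval on which $(\Phi^s)',(\Phi^t)'>0$, set $\Phi^s \mapsto \Phi^s + \epsilon\phi$ and $\Phi^t \mapsto \Phi^t - \epsilon(\lambda^s/\lambda^t)\phi$. This perturbation respects admissibility \eqref{admissible_def} and preserves coordinate-wise monotonicity for $\epsilon>0$ small, and its first-order condition---after a Stieltjes integration by parts---forces $C^s(q)=C^t(q)$ for $\zeta$-a.e.\ $q$ in such intervals. Combining these two conclusions with the a.e.\ relation $\sum_{s\in\SSS}\lambda^s(\Phi^s)'(q)=1$ (from differentiating \eqref{admissible_def}) pins $C^s(q)$ to $0$ for $\zeta$-a.e.\ $q$; continuity in $q$ then upgrades this to every $q\in\Supp(\zeta)$.

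The main obstacle is the technical care required to handle arbitrary (not necessarily atomic) $\zeta$ together with intervals on which some $(\Phi^s)'$ vanishes. In Talagrand's single-species treatment \cite{talagrand06II}, admissibility forces $\Phi=\mathrm{id}$, so no $\Phi$-perturbation is needed, and the argument typically reduces first to finitely supported $\zeta$. Neither shortcut is available here: the pairwise $\Phi$-perturbations only probe intervals on which both $(\Phi^s)'$ and $(\Phi^t)'$ are positive, so extending $C^s=C^t$ across plateaus demands a careful continuity or density argument; and the Stieltjes integration by parts and Fubini manipulations with general $\zeta$ must be justified with care, likely via mollification or limiting arguments. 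These are the technical hurdles addressed in Section \ref{identities_proof}.
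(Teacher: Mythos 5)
Your high-level plan (existence via a direct method plus a first-order variational argument yielding the identity) matches the paper's, but the actual perturbations you propose differ substantially from the ones in Section~\ref{identities_proof}, and one of the two families has a genuine gap that is not just a ``technical hurdle.''

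For the pure $\zeta$-perturbation $\zeta_\eps=(1-\eps)\zeta+\eps\mu$, your computation is essentially right (it is cleaner than what the paper does), but the conclusion needs care: after Fubini and integration by parts the first-order condition becomes $\int G\,\dd\mu\le\int G\,\dd\zeta$ for every probability measure $\mu$ on $[0,q_*]$, where $G(q)=\int_0^q\sum_s\lambda^s(\Phi^s)'(u)C^s(u)\,\dd u$. This says $G$ attains its maximum on $\Supp(\zeta)$; it does \emph{not} say $\sum_s\lambda^s(\Phi^s)'\,C^s=0$ $\zeta$-a.e. At an isolated $q_0\in\Supp(\zeta)$ you only learn that $G$ has a (one-sided) maximum at $q_0$, i.e.\ a pair of inequalities on the one-sided derivatives of $G$, not an equality. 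Separately, since $\mu$ is supported on $[0,q_*]$ and $q_*=\max\Supp(\zeta)$, this perturbation only probes to the left of the top support point.

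The more serious problem is the pairwise $\Phi$-perturbation. A smooth bump $\phi$ supported where $(\Phi^s)'>0$ and $(\Phi^t)'>0$ cannot be positioned at a plateau of $\Phi^s$: on the plateau $(\Phi^s)'=0$, so $(\Phi^s+\eps\phi)'=\eps\phi'$ changes sign for \emph{any} $\eps\ne0$, destroying monotonicity. So your perturbation gives no information there, in either direction. Now consider $\Supp(\zeta)=\{q_0\}$ with $\Phi^s$ constant on $[a_0,q_0]$ and increasing to the right, while $\Phi^t$ is increasing on $[a_0,q_0]$ and constant to the right. There is no interval near $q_0$ on which both derivatives are positive, so the $\Phi$-perturbation says nothing; the $\zeta$-perturbation (restricted to $[0,q_0]$) gives only $C^t(q_0)\ge 0$ via the left-sided maximum of $G$. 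Nothing in your two families pins $C^s(q_0)$ or forces $C^t(q_0)=0$. Continuity of $C^s$ does not rescue this: continuity lets you transfer a conclusion from an adjacent interval to its endpoint, but there is no adjacent interval on which you have a conclusion about $C^s$.

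The paper is built around exactly this difficulty. Its up- and down-perturbations (\eqref{up_perturbation} and the analogue in Proposition~\ref{derivative_prop_down}) add a one-sided, shelf-shaped profile to $\Phi^s$ and then reparametrize, which simultaneously moves $\zeta$. These are intrinsically \emph{coupled} $(\zeta,\Phi)$ perturbations that preserve monotonicity and $\vc\lambda$-admissibility while requiring $\Phi^s$ to be strictly increasing on only one side of the perturbation point; this is precisely what lets Claim~\ref{almost_done} probe the two endpoints $a_0,a_1$ of a maximal plateau with complementary one-sided inequalities, which, combined with monotonicity of $\xi^s$, yields the equality \emph{across} the plateau. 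A symmetric bump cannot reproduce this. If you want to pursue your decomposition, you would need to replace the bump with a one-sided ``shelf'' profile (rising inside the plateau, decaying only after $\Phi^s$ becomes strictly increasing), accept that each such perturbation yields only a one-sided inequality, and then assemble equalities by pairing perturbations at plateau endpoints; at that point you have essentially reinvented the paper's up/down scheme.

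Two minor points. On existence, you attribute the divergence to $\log\Delta^s(q_*)$, but this term tends to $-\infty$, not $+\infty$, as $\Phi^s(q_*)\to 1$; the actual control in Lemma~\ref{away_from_1_lemma} comes from the combination with $\int_0^{q_*}(\Phi^s)'/\Delta^s$, via a summation by parts and a first-order argument that produces an explicit $\bar q<1$. On the $\Phi$-perturbation, even off plateaus you would need a quantitative lower bound on $(\Phi^s)',(\Phi^t)'$ on the support of $\phi$ to preserve monotonicity for small fixed $\eps$; since these derivatives only exist a.e.\ and may degenerate, that requires its own argument.
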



To demonstrate just how useful this identity is, let us now prove Theorem \ref{simultaneous_thm_2}.
Note that Theorem \ref{simultaneous_thm_1} is the special case when $\TTT=\{t\}$.

\begin{proof}[Proof of Theorem \ref{simultaneous_thm_2}]
From \eqref{simultaneous_condition_2_species}, choose $t\in\TTT$ such that
\eeq{ \label{simultaneous_condition_2_consequence}
\frac{\partial\xi^s}{\partial q^t}(\Phi(q)) > 0 \qquad \text{whenever} \qquad \parbox{2.2in}{\centering $q\in\Supp(\zeta)$, $\min_{r\in\TTT}\Phi^r(q)>0$, and $\Phi^r(q)>0$ $\forall$ $r\in\SSS_\ext$.}
}
For ease of notation, let us say that $\SSS = \{1,\dots,n\}$, $t=n-1$, and $s=n$.
Given $a<b$ in $[0,1]$, consider the path from $\Phi(a)$ to $\Phi(b)$ which moves one coordinate at a time.
That is, the first coordinate is moved from $\Phi^1(a)$ to $\Phi^1(b)$, then the second coordinate from $\Phi^2(a)$ to $\Phi^2(b)$, and so on.
Let $\vphi^r$ be the restriction $\partial\xi^s/\partial q^r$ to the $r^\text{th}$ line segment in this path, which is just a function of the $r^\text{th}$ coordinate:
\eq{
\vphi^s_r(q) \coloneqq \frac{\partial \xi^s}{\partial q^r}\Big|_{\{\text{$q^r = q$, $q^{j}=\Phi^{j}(b)$ for $j<r$, $q^{j}=\Phi^{j}(a)$ for $j>r$}\}}, \quad q\in[\Phi^r(a),\Phi^r(b)].
}
By the fundamental theorem of calculus, we have
\eeq{ \label{preparatory_ineq}
\xi^s(\Phi(b)) - \xi^s(\Phi(a))
= \sum_{r=1}^n\int_{\Phi^r(a)}^{\Phi^r(b)}\vphi^s_r(q)\ \dd q.
}
Now suppose $a<b$ and $a,b\in\Supp(\zeta)$.
In particular, for any $r\in\SSS_\ext$, the identity \eqref{cs_identity} shows that $\Phi^r(b)>0$.
If $\Phi^r(a) < \Phi^r(b)$ for some $r\in\TTT$, then by hypothesis we have $\Phi^r(a)<\Phi^r(b)$ for all $r\in\TTT$.
In particular, we have $\Phi^r(b)>0$ for all $r\in\TTT\cup\SSS_\ext$.
Therefore, \eqref{simultaneous_condition_2_consequence} tells us that the $(n-1)^\text{th}$ summand in  \eqref{preparatory_ineq} is strictly positive.
Hence $\xi^s(\Phi(a))<\xi^s(\Phi(b))$, and then it follows from \eqref{cs_identity} that $\Phi^s(a)<\Phi^s(b)$.
We have thus argued that if $\Phi^r(a)<\Phi^r(b)$ for some $r\in\TTT$, then $\Phi^s(a)<\Phi^s(b)$.

To establish the reverse implication, we use \eqref{simultaneous_condition_2_species} to identify $t\in\TTT$ such that
\eeq{ \label{simultaneous_condition_2_consequence_2}
\frac{\partial\xi^t}{\partial q^s}(\Phi(q)) > 0 \qquad \text{whenever} \qquad \parbox{1.7in}{\centering $q\in\Supp(\zeta)$, $\Phi^s(q)>0$, and $\Phi^r(q)>0$ $\forall$ $r\in\SSS_\ext$.}
}
Then replace \eqref{preparatory_ineq} with 
\eq{
\xi^t(\Phi(b))-\xi^t(\Phi(a)) = \sum_{r=1}^n\int_{\Phi^r(a)}^{\Phi^r(b)}\vphi^t_r(q)\ \dd q.
}
If $\Phi^s(a)<\Phi^s(b)$, then the $n^\text{th}$ summand is strictly positive by \eqref{simultaneous_condition_2_consequence_2}. 
Hence $\xi^t(\Phi(a))<\xi^t(\Phi(b))$, which forces $\Phi^t(a)<\Phi^t(b)$ thanks to \eqref{cs_identity}.
\end{proof}

As promised, we close this section by proving Lemma  \ref{simultaneous_lemma}.

\begin{proof}[Proof of Lemma \ref{simultaneous_lemma}]
Let us write $\zeta^s =\zeta\circ(\Phi^s)^{-1}$.
If we take the convention
\eeq{ \label{inverse_convention}
(\Phi^s)^{-1}(u) \coloneqq \inf\{q\in\Supp(\zeta):\, \Phi^s(q) \geq u\}, \quad u\in[0,1],
}
then $\zeta^s\big([u,1]) = \zeta\big([(\Phi^s)^{-1}(u),1]\big)$ for all $u\in[0,1]$.
Now, for any two points $q_0\leq q_1$ in $\Supp(\zeta)$, we have
\eq{
(\Phi^s)^{-1}(\Phi^s(q_1)) \leq q_0 \quad 
&\stackref{inverse_convention}{\iff} \quad
\Phi^s(q_0) = \Phi^s(q_1)  \\
&\stackref{simultaneous_def}{\iff} \quad
\Phi^t(q_0) = \Phi^t(q_1)  \quad \stackref{inverse_convention}{\iff} \quad
(\Phi^t)^{-1}(\Phi^t(q_1)) \leq q_0.
}
We thus have $(\Phi^s)^{-1}\circ\Phi^s = (\Phi^t)^{-1}\circ\Phi^t$ on $\Supp(\zeta)$, and so on the domain $\Supp(\zeta^s)$, we have
\eq{
\Phi^s\circ(\Phi^t)^{-1}\circ\Phi^t\circ(\Phi^s)^{-1}
= \Phi^s\circ(\Phi^s)^{-1}\circ\Phi^s\circ(\Phi^s)^{-1}
= \mathrm{Id}.
}
By symmetry, the same statement holds with $s$ and $t$ exchanged, and so $\Phi^t\circ(\Phi^s)^{-1}$ and $\Phi^s\circ(\Phi^t)^{-1}$ are inverses of each other.
To see that these maps are measure-preserving, we simply use the definition of pushforward measures:
For any $u\in\Supp(\zeta^s)$, we have
\eq{
\zeta^t\big([(\Phi^t\circ(\Phi^s)^{-1})(u),1]\big)
&= \zeta\big([((\Phi^t)^{-1}\circ\Phi^t\circ(\Phi^s)^{-1})(u),1]\big) \\
&= \zeta\big([((\Phi^s)^{-1}\circ\Phi^s\circ(\Phi^s)^{-1})(u),1]\big) \\
&= \zeta\big([((\Phi^s)^{-1}\circ\mathrm{Id})(u),1]\big) 
= \zeta^s\big([u,1]\big).
}
Indeed, $\Phi^t\circ(\Phi^s)^{-1}:\Supp(\zeta^s)\to\Supp(\zeta^t)$ is measure-preserving.
\end{proof}


\section{Continuity of the Crisanti--Sommers functional} \label{continuity_proof}
The main goal of this section is to prove continuity of the functional $(\zeta,\Phi)\mapsto B(\zeta,\Phi)$, stated as Proposition \ref{continuity_prop} below.
To make the discussion precise, we consider the same metric space as in \cite{bates-sohn22}, which we now describe.
Identify any $\vc\lambda$-admissible pair $(\zeta,\Phi)$ with the pushforward measure $\zeta\circ\Phi^{-1}$ on the unit hypercube $[0,1]^\SSS$, which is equipped with the $\ell^1$ norm.
Then the distance between $(\zeta_1,\Phi_1)$ and $(\zeta_2,\Phi_2)$ is taken to be the Wasserstein-1 distance between $\zeta_1\circ\Phi_1^{-1}$ and $\zeta_2\circ\Phi_2^{-1}$.
Since each coordinate of $\Phi$ is non-decreasing, this distance has a convenient representation in terms of quantile functions.
Specifically, for a probability measure $\zeta$ on $[0,1]$, define the associated quantile function:
\eq{
Q_\zeta(z) \coloneqq \inf\{q\in[0,1]:\zeta\big([0,q]\big)\geq z\}, \quad z\in[0,1].
}
Then the Wasserstein distance we have just described is given by
\eeq{ \label{pseudometric_def}
\DD\big((\zeta_1,\Phi_1),(\zeta_2,{\Phi}_2)\big)
\coloneqq \int_0^1 \|\Phi_1(Q_{\zeta_1}(z))-{\Phi}_2(Q_{\zeta_2}(z))\|_1\ \dd z.
}
Formally, $\DD$ is a pseudometric on the space of $\vc\lambda$-admissible pairs $(\zeta,\Phi)$.

\begin{remark}
The fact that $\DD$ is a pseudometric rather than a metric underlines the fact that the condition of $\vc\lambda$-admissibility in \eqref{admissible_def} is somewhat artificial.
The ``true'' functional order parameter is the $\SSS$-tuple of measures $(\zeta\circ(\Phi^s)^{-1})_{s\in\SSS}$.
The $\vc\lambda$-admissible pair $(\zeta,\Phi)$ is a mechanism for coupling these measures together in a ``synchronized'' way; see the discussion in \cite[Sec.~1.3]{bates-sohn22}.
This coupling mechanism is canonical up to modifications of $\Phi$ off the support of $\zeta$.
\end{remark}

For $\bar q<1$, let $\AA(\bar q)$ denote the collection of $\vc\lambda$-admissible pairs such that for some $q_*\in[0,1)$, we have $\zeta\big([0,q_*]\big) = 1$ and $\Phi^s(q_*) \leq \bar q$ for all $s\in\SSS$. 
In pushforward notation, this means
\eeq{ \label{A_q_bar_def}
\AA(\bar q) \coloneqq \{(\zeta,\Phi):\, \Supp(\zeta\circ(\Phi^s)^{-1})\subset[0,\bar q] \text{ for each $s\in\SSS$}\}, \quad \bar q<1.
}
When we wish to refer to all pairs satisfying \eqref{gap_assumption} for some $q_*$, we will simply write 
\eeq{ \label{AA_def}
\AA \coloneqq \bigcup_{\bar q<1}\AA(\bar q)
= \{(\zeta,\Phi):\, \Supp(\zeta\circ(\Phi^s)^{-1})\subset[0,1) \text{ for each $s\in\SSS$}\}.
}
With these definitions, we can state our continuity result.

\begin{prop} \label{continuity_prop}
Assume \eqref{decay_condition} and $\bar q<1$.
On the set $\AA(\bar q)$, the map $(\zeta,\Phi)\mapsto B(\zeta,\Phi)$ is Lipschitz continuous with respect to $\DD$ (with a Lipschitz constant depending on $\bar q$).
\end{prop}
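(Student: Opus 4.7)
The plan is to reparametrize each $(\zeta,\Phi)\in\AA(\bar q)$ by the quantile map $\phi:=\Phi\circ Q_\zeta\colon[0,1]\to[0,\bar q]^\SSS$, which captures the pushforward $\zeta\circ\Phi^{-1}$, and to exhibit $B$ as an explicit functional of $\phi$ that is manifestly Lipschitz in the $L^1$ sense. Since the pseudometric \eqref{pseudometric_def} reduces to $\DD((\zeta_1,\Phi_1),(\zeta_2,\Phi_2))=\sum_{s\in\SSS}\|\phi_1^s-\phi_2^s\|_{L^1([0,1])}$, this suffices.

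First, $\vc\lambda$-admissibility \eqref{admissible_def} forces $\Phi^s(0)=0$ and $\Phi^s(1)=1$ for each $s$, so integration by parts against $F(q)=\zeta([0,q])$, combined with the quantile identity $\int g\,\dd\zeta=\int_0^1 g\circ Q_\zeta\,\dd z$, turns the ``easy'' pieces of $B$ into
\eq{
\Delta^s(0)=1-\int_0^1\phi^s(z)\,\dd z,\qquad \int_0^1\zeta([0,q])(\xi\circ\Phi)'(q)\,\dd q=\xi(\vc 1)-\int_0^1\xi(\phi(z))\,\dd z.
}
The first is affine in $\phi^s$, and the second is Lipschitz in $\phi$ because $\xi$ is $C^1$ on a neighborhood of $[0,1]^\SSS$ by \eqref{decay_condition}.

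The main effort concerns the combined term $T^s:=\int_0^{q_*}(\Phi^s)'(q)/\Delta^s(q)\,\dd q+\log\Delta^s(q_*)$, which a priori is not obviously continuous because both the upper endpoint and the denominator depend on $(\zeta,\Phi)$. The key step is the identity
\eq{
\Delta^s(q)=D^s(\Phi^s(q)),\qquad D^s(y):=\int_y^1\nu^s([0,w])\,\dd w=1-\int_0^1\max(y,\phi^s(z))\,\dd z,
}
where $\nu^s:=\zeta\circ(\Phi^s)^{-1}$; this is obtained either by Fubini or by the change of variable $y=\Phi^s(q)$, using that flat regions of $\Phi^s$ carry no $\dd\Phi^s$-mass. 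It reduces $T^s$ to $\int_0^{\Phi^s(q_*)}\dd y/D^s(y)+\log D^s(\Phi^s(q_*))$. Since $D^s(y)=1-y$ whenever $y\geq\esssup\phi^s$, Remark \ref{q_star_remark} manifests as invariance of $T^s$ under replacing the upper endpoint by any $\alpha\in[\Phi^s(q_*),1)$; taking $\alpha=\bar q$ (uniformly valid on $\AA(\bar q)$) yields the clean representation
\eq{
T^s=\int_0^{\bar q}\frac{\dd y}{D^s(y)}+\log(1-\bar q).
}

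Finally, on $\AA(\bar q)$ one has $D^s(y)\geq 1-\bar q$ for $y\in[0,\bar q]$, while $\phi^s\mapsto D^s$ is $1$-Lipschitz from $L^1([0,1])$ into $L^\infty([0,\bar q])$ by the $1$-Lipschitz property of $\max(y,\cdot)$. Combining these with $|1/a-1/b|\leq|a-b|/(ab)$ yields $|T_1^s-T_2^s|\leq\bar q(1-\bar q)^{-2}\|\phi_1^s-\phi_2^s\|_{L^1}$, and summing all contributions delivers the Lipschitz bound on $B$. The main subtlety is the identity $\Delta^s(q)=D^s(\Phi^s(q))$, which must accommodate flat regions of $\Phi^s$ and atoms of $\zeta$; once this and the uniform choice $\alpha=\bar q$ are in place, the Lipschitz estimates are routine.
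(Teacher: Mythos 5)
Your proof is correct, and it takes a genuinely different route from the paper's. The paper proves Proposition \ref{continuity_prop} by first establishing a Lipschitz estimate for finitely supported measures (Lemma \ref{continuity_lemma}), which it obtains from the partial-derivative computation \eqref{Bderivative_2}, and then transferring it to general measures via a density argument (Lemma \ref{density_lemma}); by contrast, you reparametrize by the quantile function $\phi=\Phi\circ Q_\zeta$ and exhibit $B$ directly as a functional of $\phi$ that is manifestly Lipschitz in $L^1$. The pivotal identity $\Delta^s(q)=D^s(\Phi^s(q))$ with $D^s(y)=1-\int_0^1\max(y,\phi^s(z))\,\dd z$ is correct (the Fubini computation checks out, and $\Phi^s$ is Lipschitz by $\vc\lambda$-admissibility so the change of variables $y=\Phi^s(q)$ is legitimate even through flat pieces), the shift to the uniform upper endpoint $\alpha=\bar q$ is a valid instance of the invariance recorded in Remark \ref{q_star_remark}, and the final Lipschitz estimates follow from $D^s\geq 1-\bar q$ on $[0,\bar q]$ together with the $1$-Lipschitz map $\phi^s\mapsto D^s$ from $L^1$ to $L^\infty$. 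Your approach is the more streamlined one for proving Lipschitz continuity in isolation, and it yields an explicit Lipschitz constant in closed form. What it does not supply is the auxiliary machinery that the paper's route produces as a byproduct and reuses elsewhere: the discrete derivative formula \eqref{Bderivative_2} underlies Lemma \ref{away_from_1_lemma} (bounding a minimizer's support away from $1$), and the density Lemma \ref{density_lemma} is invoked again in the existence part of Theorem \ref{cs_min_thm}; in the paper's architecture those lemmas are not free to dispense with, so the discretization-plus-density route is pulling double duty.
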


A key consequence is the weaker statement that $B(\zeta,\Phi)$ is invariant under different representations of the measure $\zeta\circ\Phi^{-1}$.
In other words, the functional $B$ is well-defined on the quotient space of $\AA$ obtained by identifying elements $(\zeta_1,\Phi_1)$ and $(\zeta_2,\Phi_2)$ such that $\DD\big((\zeta_1,\Phi_1),(\zeta_2,\Phi_2)\big) = 0$.
This fact is crucially used in the proof of Theorem \ref{cs_min_thm}, as it allows one to modify $\Phi$ anywhere not belonging to the support of $\zeta$, without changing the value of $B(\zeta,\Phi)$.

Our strategy for proving Proposition \ref{continuity_prop} is to restrict to measures with finite support, and then appeal to a density argument.
Since calculations are easier in the finite-support case, it will be advantageous for us to use this section as an opportunity to analyze how close a minimizer's support can be to $1$.
Indeed, since \eqref{gap_assumption} is not maintained under closure, it will be necessary for us to keep these supports separated from $1$.
This is accomplished by Lemma \ref{away_from_1_lemma}.



Given any $\bar q\in[0,1)$, let us consider $(\zeta,\Phi)\in\AA(\bar q)$ such that $\zeta$ is supported on finitely many points.
Every such pair corresponds to a sequence of weights
\begin{subequations}
\label{discrete_1}
\eeq{ \label{weights}
0 = m_0 < m_1 < \dots < m_k = 1,
}
together with sequences of points for each species:
\eeq{ \label{points}
0 = q_0^s \leq q_1^s \leq \cdots \leq q_k^s \leq \bar q < q_{k+1}^s = 1.
}
Namely, if we define the convex combination
\eeq{
q_r \coloneqq \sum_{s\in\SSS}\lambda^sq^s_r,
}
then \eqref{weights} and \eqref{points} collectively encode the measure
\eeq{ \label{discrete_zeta}
\zeta = \sum_{r=1}^k m_r\delta_{q_r},
}
\end{subequations}
where $\delta_x$ denotes the Dirac delta measure at $x$.
Furthermore, if $\vc q_r = (q_r^s)_{s\in\SSS}$, then $\Phi$ can be any $\vc\lambda$-admissible map such that $\Phi(q_r) = \vc q_r$ for each $r\in\{1,\dots,k\}$.
For instance, $\Phi$ could be the piecewise linear map satisfying these constraints.
Writing the quantities \eqref{delta_def} and \eqref{B_def} in terms of \eqref{weights} and \eqref{points}, we have
\eeq{ \label{discretized_delta_def}
\Delta^s_r &\coloneqq \Delta^s(q_r)
= \sum_{\ell=r}^k m_\ell(q_{\ell+1}^s-q_\ell^s),
}
\eeq{
B(\zeta,\Phi) = \sum_{s\in\SSS}\frac{\lambda^s}{2}\bigg(\red{h_s^2}\Delta^s_1+\frac{q_1^s}{\Delta_1^s}+\sum_{r=1}^{k-1}\frac{1}{m_r}\log\frac{\Delta_r^s}{\Delta_{r+1}^s}+\log\Delta_k^s&\bigg) \\
+ \frac{1}{2}\sum_{r=1}^k m_r(\xi(\vc q_{r+1})-\xi(\vc q_r)&).
\label{B_discrete_def}
}
Let us define
\eeq{ \label{deriv_discrete_def}
\delta_{r,\ell}^s \coloneqq \frac{\partial\Delta^s_r}{\partial q^{s}_\ell}
= m_{\ell-1}\one_{\{\ell>r\}}-m_\ell\one_{\{\ell\geq r\}}, \quad \ell\in\{1,\dots,k\},
}
so that differentiating \eqref{B_discrete_def} results in
\eq{ 
\frac{\partial B}{\partial q_\ell^s}
= \frac{\lambda^s}{2}\bigg(\red{h_s^2}\delta_{1,\ell}^s+\frac{\one_{\{\ell=1\}}}{\Delta_1^s}-\frac{q_1^s}{(\Delta_1^s)^2}\delta^s_{1,\ell}+\sum_{r=1}^{k-1}\frac{1}{m_r}\Big(\frac{\delta^s_{r,\ell}}{\Delta^s_r}-\frac{\delta^s_{r+1,\ell}}{\Delta^s_{r+1}}\Big)-\frac{\one_{\{\ell=k\}}}{\Delta_k^s}&\bigg)
\\
+ \frac{m_{\ell-1}-m_\ell}{2}\lambda^s\xi^s(\vc q_\ell&).
}
Making appropriate substitutions using \eqref{deriv_discrete_def}, we have
\eeq{ \label{Bderivative_2}
\frac{\partial B}{\partial q_\ell^s}
= \frac{\lambda^s}{2}(m_{\ell-1}-m_\ell)\bigg(\red{h_s^2}-\frac{q_1^s}{(\Delta_1^s)^2}+\sum_{r=1}^{\ell-1}\frac{1}{m_r}\Big(\frac{1}{\Delta^s_r}-\frac{1}{\Delta^s_{r+1}}\Big)+\xi^s(\vc q_\ell)\bigg).
}
Since $\Delta_r^s\geq\Delta_{r+1}^s\geq\cdots\geq\Delta_k^s=1-q_k^s\geq 1-\bar q$, we have
\eq{
\Big|-\frac{q_1^s}{(\Delta_1^s)^2}+\sum_{r=1}^{\ell-1}\frac{1}{m_r}\Big(\frac{1}{\Delta_r^s}-\frac{1}{\Delta_{r+1}^s}\Big)\Big|
&= \frac{q_1^s}{(\Delta_1^s)^2}+\sum_{r=1}^{\ell-1}\frac{q_{r+1}^s-q_{r}^s}{\Delta_r^s\Delta_{r+1}^s} \\
&\leq \frac{1}{(1-\bar q)^2}\Big(q_1^s+\sum_{r=1}^{\ell-1}(q_{r+1}^s-q_{r}^s)\Big)
\leq \frac{1}{(1-\bar q)^2}.
}
Now \eqref{Bderivative_2} reads as
\eeq{ \label{Bderivative_2_rewrite}
\frac{\partial B}{\partial q_\ell^s} = \frac{\lambda^s}{2}(m_{\ell}-m_{\ell-1})D_\ell^s(\vc q),
\quad \text{where} \quad
|D_\ell^s(\vc q)| \leq \red{h_s^2}+\frac{1}{(1-\bar q)^2} + \xi^s(\vc 1).
}
This identity results in the following precursor to Proposition \ref{continuity_prop}.

\begin{lemma} \label{continuity_lemma}
Having fixed \eqref{weights} and \eqref{points},
consider any sequences of the form
\begin{subequations}
\label{discrete_2}
\eeq{ \label{tilde_points}
0 = p_0^s \leq p_1^s \leq \cdots \leq p_k^s \leq \bar q < p_{k+1}^s= 1, \quad s\in\SSS.
}
Let $\vc p_\ell = (p_\ell^s)_{s\in\SSS}$ and $p_\ell = \sum_{s\in\SSS}\lambda^sp^s_\ell$, and then consider the measure
\eeq{
\zeta_2 = \sum_{\ell=1}^k m_\ell\delta_{p_\ell}.
}
\end{subequations}
Let $\Phi_2$ be any $\vc\lambda$-admissible map such that $\Phi_2^s(p_\ell) = p_\ell^s$ for each $\ell$ and $s$.
We then have
\eeq{ \label{lipschitz_ineq}
|B(\zeta,\Phi) - B(\zeta_2,\Phi_2)| \leq C\DD\big((\zeta,\Phi),(\zeta_2,\Phi_2)\big),
}
where $C$ is a constant depending only on $\xi$, $(h_s)_{s\in\SSS}$, and $\bar q$.
\end{lemma}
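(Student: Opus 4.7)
My plan is to use the derivative bound already displayed in \eqref{Bderivative_2_rewrite} and integrate along a straight-line interpolation in the point variables. Concretely, fix the weights \eqref{weights} and define, for $t\in[0,1]$,
\eq{
q_\ell^s(t) \coloneqq (1-t)q_\ell^s + t p_\ell^s, \qquad \ell\in\{1,\dots,k\},\ s\in\SSS.
}
Because $0\le q_\ell^s,p_\ell^s\le\bar q$ and both sequences are non-decreasing in $\ell$, the interpolated sequence also satisfies \eqref{points} (with the same $\bar q$); moreover, the combination $q_\ell(t)\coloneqq \sum_s\lambda^s q_\ell^s(t)$ is the corresponding weighted average, so the data $(m_r,\vc q(t))$ defines a valid admissible pair along the entire path. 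Denote by $B(t)$ the value of $B$ at this pair (using any $\vc\lambda$-admissible $\Phi_t$ interpolating the prescribed values, e.g.\ piecewise linear; by Remark \ref{q_star_remark} the choice is immaterial).

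Next I apply the fundamental theorem of calculus:
\eq{
B(\zeta_2,\Phi_2)-B(\zeta,\Phi) = B(1)-B(0) = \int_0^1 \sum_{s\in\SSS}\sum_{\ell=1}^{k}\frac{\partial B}{\partial q_\ell^s}\big(\vc q(t)\big)(p_\ell^s-q_\ell^s)\ \dd t.
}
Since the interpolation remains in $\AA(\bar q)$, the bound \eqref{Bderivative_2_rewrite} applies uniformly in $t$: with $C_0\coloneqq \max_{s\in\SSS}\big(h_s^2+(1-\bar q)^{-2}+\xi^s(\vc 1)\big)$, which is finite thanks to \eqref{decay_condition}, I get
\eq{
|B(\zeta,\Phi)-B(\zeta_2,\Phi_2)| \le \frac{C_0}{2}\sum_{s\in\SSS}\lambda^s\sum_{\ell=1}^{k}(m_\ell-m_{\ell-1})|q_\ell^s-p_\ell^s|.
}
It then remains to identify the right-hand side with (a multiple of) $\DD((\zeta,\Phi),(\zeta_2,\Phi_2))$. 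Because $\zeta$ and $\zeta_2$ share the same weights, their quantile functions are both step functions with $Q_{\zeta}(z)=q_\ell$ and $Q_{\zeta_2}(z)=p_\ell$ on $(m_{\ell-1},m_\ell]$, and by construction $\Phi(q_\ell)=\vc q_\ell$, $\Phi_2(p_\ell)=\vc p_\ell$. Substituting into \eqref{pseudometric_def} yields
\eq{
\DD\big((\zeta,\Phi),(\zeta_2,\Phi_2)\big) = \sum_{\ell=1}^k (m_\ell-m_{\ell-1})\sum_{s\in\SSS}|q_\ell^s-p_\ell^s|,
}
which, together with $\lambda^s\le 1$, closes the proof with $C = C_0/2$.

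The main technical point is to verify that the interpolation stays inside the region on which the derivative formula \eqref{Bderivative_2_rewrite} was derived and is bounded; this amounts to checking that monotonicity in $\ell$, the upper cap $\bar q$, and hence the lower bound $\Delta_r^s(t)\ge 1-\bar q$ are preserved along the path, which all follow from convex combination. Everything else reduces to comparing two explicit expressions (a Riemann-type sum and the Wasserstein integral against a piecewise-constant quantile), so no further obstacle is expected.
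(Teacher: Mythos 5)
Your proof is correct and follows essentially the same route as the paper's: linearly interpolate the point data $q_\ell^s(t)=(1-t)q_\ell^s+tp_\ell^s$ while keeping the weights fixed, bound the $t$-derivative of $B$ via \eqref{Bderivative_2_rewrite}, and identify the resulting sum with the explicit quantile-function form of $\DD$. The only cosmetic differences are that the paper bounds $|B(1)-B(0)|$ by $\sup_t|\dd B/\dd t|$ rather than integrating the derivative via the fundamental theorem of calculus, and that the paper keeps the $\lambda^s/2$ prefactor inside the constant rather than discarding it with $\lambda^s\leq 1$; both are immaterial.
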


\begin{proof}
Let us begin by understanding the right-hand side of \eqref{lipschitz_ineq}.
Observe that
\eq{
Q_\zeta(z) = q_\ell \quad \text{and} \quad
Q_{\zeta_2}(z) = p_\ell \quad \text{for $z\in(m_{\ell-1},m_{\ell}]$, $1\leq \ell\leq k$}.
}
Since $\Phi(q_\ell) = \vc q_\ell$ and $\Phi_2(p_\ell) = \vc p_\ell$, we thus have
\eeq{ \label{DD_identity}
\DD\big((\zeta,\Phi),(\zeta_2,\Phi_2)\big)
&\stackref{pseudometric_def}{=}
\int_0^1 \|\Phi(Q_{\zeta}(z))-\Phi_2(Q_{\zeta_2}(z))\|_1\ \dd z \\
&\stackrefp{pseudometric_def}{=} \sum_{\ell=1}^k\int_{m_{\ell-1}}^{m_\ell}\|\vc q_\ell - \vc p_\ell\|_1\ \dd z
=\sum_{\ell=1}^k(m_\ell-m_{\ell-1})\|\vc q_\ell-\vc p_\ell\|_1.
}
This identity gives us a target as we next 
study the left-hand side of \eqref{lipschitz_ineq}.

Consider the linear interpolation between \eqref{discrete_1} and \eqref{discrete_2}:
\eq{
q_\ell^s(t) \coloneqq (1-t)q_\ell^s + tp_\ell^s, \qquad
\zeta_t \coloneqq (1-t)\zeta + t\zeta_2, \qquad
\Phi_t \coloneqq (1-t)\Phi + t\Phi_2, \qquad t\in[0,1].
}
By differentiating with the chain rule, we have
\eq{
&|B(\zeta,\Phi)-B(\zeta_2,\Phi_2)|
\stackrefp{DD_identity}{\leq} \sup_{t\in(0,1)}\Big|\frac{\dd B(\zeta_t,\Phi_t)}{\dd t}\Big| \\
&\stackrefpp{Bderivative_2_rewrite}{DD_identity}{=}\sup_{t\in(0,1)}\Big|\sum_{s\in\SSS}\sum_{\ell=1}^k\frac{\lambda^s}{2}(m_\ell-m_{\ell-1})D_{\ell}^s(\vc q(t))\frac{\dd q_\ell^s(t)}{\dd t}\Big| \\
&\stackrefp{DD_identity}{=} \sup_{t\in(0,1)}\Big|\sum_{s\in\SSS}\sum_{\ell=1}^k\frac{\lambda^s}{2}(m_\ell-m_{\ell-1})D_{\ell}^s(\vc q(t))(p_\ell^s- q_\ell^s)\Big| \\
&\stackrefpp{Bderivative_2_rewrite}{DD_identity}{\leq} \Big(\max_{s\in\SSS}\frac{\lambda^s}{2}\Big[\red{h_s^2}+\frac{1}{(1-\bar q)^2}+\xi^s(\vc 1)\Big]\Big)\sum_{\ell=1}^k(m_\ell-m_{\ell-1})\sum_{s\in\SSS}|p_\ell^s-q_\ell^s| \\
&\stackref{DD_identity}{=} C\DD\big((\zeta,\Phi),(\zeta_2,\Phi_2)\big).
}
We have proved the desired Lipschitz inequality \eqref{lipschitz_ineq}.
\end{proof}
%

Let us pause to obtain another consequence of the derivative calculation \eqref{Bderivative_2}.
Let $\PP_k$ denote the set of probability measures on $[0,1]$ which are supported on at most $k$ points.
Recall the set $\AA(\bar q)$ from \eqref{A_q_bar_def}.
For $\bar q<1$, consider the following subset: 
\eq{
\AA_k(\bar q) \coloneqq \{(\zeta,\Phi)\in\AA(\bar q):\, \zeta\in\PP_k\}.
}
This is exactly the set of $\vc\lambda$-admissible pairs of the form \eqref{discrete_1}.
Also define 
\eeq{ \label{AA_k_def}
\AA_k \coloneqq \bigcup_{\bar q<1}\AA_k(\bar q).
}
We then have the following result, which will ultimately lead to the existence of a minimizer claimed in Theorem \ref{cs_min_thm}. 

\begin{lemma} \label{away_from_1_lemma}

There exists $\bar q<1$ such that for any positive integer $k$, we have
\eeq{ \label{mins_away_from_1}
\inf_{(\zeta,\Phi)\in\AA_k(\bar q)}B(\zeta,\Phi)
= \inf_{(\zeta,\Phi)\in\AA_k}B(\zeta,\Phi).
}
More precisely, we can take
\eeqs{ \label{bar_q_bound}
\bar q &= \max_{s\in\SSS}\frac{(1-u^s)(h_s^2+\xi^s(\vc 1))+u^s}{(1-u^s)(h_s^2+\xi^s(\vc 1))+1}, \quad \text{where} \\
u^s &\coloneqq 1-\frac{\sqrt{1+4(h_s^2+\xi^s(\vc 1))}-1}{2(h_s^2+\xi^s(\vc 1))}. \label{u_s_def}
}



\end{lemma}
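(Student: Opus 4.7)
The plan is to establish that any minimizer of $B$ over $\AA_k$ satisfies $q_k^s \leq \bar q_s$ for each $s \in \SSS$, where $\bar q_s$ denotes the $s$-indexed expression inside the max in \eqref{bar_q_bound}. Setting $\bar q := \max_s \bar q_s < 1$ then yields \eqref{mins_away_from_1}, as the inclusion $\AA_k(\bar q) \subset \AA_k$ is automatic. The engine of the argument is a first-order analysis combined with a telescoping identity obtained via partial fractions.

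First I would verify that $\inf_{\AA_k} B$ is attained. The functional $B$ of \eqref{B_discrete_def} is continuous in the parameters $(m_r, q_r^s)$, and rewriting
\[
\tfrac{1}{m_{k-1}}\log\tfrac{\Delta_{k-1}^s}{\Delta_k^s} + \log\Delta_k^s \;=\; \tfrac{1}{m_{k-1}}\log\Delta_{k-1}^s \;-\; \tfrac{1-m_{k-1}}{m_{k-1}}\log\Delta_k^s
\]
exhibits a negative coefficient on $\log\Delta_k^s$, forcing $B \to +\infty$ whenever $q_k^s \to 1$ along a bounded sequence with $m_{k-1}$ separated from $1$. A standard compactness argument then produces a minimizer; without loss of generality I work in the generic regime $0 < q_1^s < \cdots < q_k^s < 1$, since coincident atoms for some species collapse the problem to a strictly smaller $k$ to which the same analysis applies.

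At such a minimizer, \eqref{Bderivative_2_rewrite} forces $D_\ell^s = 0$, and the identity $\Delta_r^s - \Delta_{r+1}^s = m_r(q_{r+1}^s - q_r^s)$ cancels the $1/m_r$ factor in \eqref{Bderivative_2}, giving
\[
h_s^2 + \xi^s(\vc q_\ell) \;=\; \frac{q_1^s}{(\Delta_1^s)^2} + \sum_{r=1}^{\ell-1}\frac{q_{r+1}^s - q_r^s}{\Delta_r^s\,\Delta_{r+1}^s}\,, \qquad 1 \leq \ell \leq k, \ s \in \SSS.
\]
Writing $a^s := h_s^2 + \xi^s(\vc 1)$ and using $\Delta_r^s = \sum_{\ell \geq r} m_\ell(q_{\ell+1}^s - q_\ell^s) \leq 1-q_r^s$, the $\ell = 1$ case yields $q_1^s \leq a^s(1-q_1^s)^2$, a quadratic inequality whose smaller root is precisely $u^s$ of \eqref{u_s_def} while the larger root exceeds $1$, so $q_1^s \leq u^s$. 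For $\ell = k$, I majorize each denominator by $(1-q_r^s)(1-q_{r+1}^s)$ and apply the partial-fraction identity $\frac{q_{r+1}^s - q_r^s}{(1-q_r^s)(1-q_{r+1}^s)} = \frac{1}{1-q_{r+1}^s} - \frac{1}{1-q_r^s}$, which telescopes the sum. Dropping the nonnegative $q_1^s/(\Delta_1^s)^2$ and using $\xi^s(\vc q_k) \leq \xi^s(\vc 1)$ then gives
\[
a^s \;\geq\; \frac{1}{1-q_k^s} - \frac{1}{1-q_1^s} \;\geq\; \frac{1}{1-q_k^s} - \frac{1}{1-u^s},
\]
which rearranges exactly to $q_k^s \leq \bar q_s$ as in \eqref{bar_q_bound}.

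The algebra above is clean once one spots the telescoping via partial fractions and recognizes $u^s$ as the critical root of the quadratic from the $\ell = 1$ condition; the main technical care therefore lies in the existence step together with the accompanying edge cases---verifying that $B$ does not escape to $-\infty$ along pathological limits where both $m_{k-1} \to 1$ and $\Delta_k^s \to 0$, and handling degenerate configurations such as $q_1^s = 0$ (where $q_1^s \leq u^s$ is immediate), coincident atoms for one species (which reduce $k$), or species with $a^s = 0$ (where $\xi$ is independent of $q^s$ and the bound is vacuous).
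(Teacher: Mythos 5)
Your key computations are the same as the paper's: the quadratic inequality in $q_1^s$ whose smaller root is $u^s$, and the partial-fraction telescoping at $\ell=k$. But there is a genuine gap in the reduction to the generic regime $0 < q_1^s < \cdots < q_k^s < 1$. The claim that ``coincident atoms for some species collapse the problem to a strictly smaller $k$'' is false: if $q_\ell^s = q_{\ell+1}^s$ for a single species $s$ while $q_\ell^t < q_{\ell+1}^t$ for some other $t$, then $\vc q_\ell \neq \vc q_{\ell+1}$, the convex combinations $q_\ell < q_{\ell+1}$ remain distinct, and $\zeta$ still has $k$ atoms. In that configuration $q_\ell^s$ lies on a face of the constraint polytope, so your exact first-order identity $D_\ell^s = 0$ need not hold---only a one-sided KKT inequality is available---and your $\ell=1$ step in particular is not justified.

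The paper avoids this by never asserting equality: it works only with the sign of $\partial B/\partial q_k^s$, bounding the right side of \eqref{Bderivative_2} from below. Dropping the nonnegative telescoping sum gives $q_1^s \leq u^s$ directly, and re-inserting it with its lower bound gives $q_k^s \leq \bar q$; both follow from the single requirement that decreasing $q_k^s$ cannot lower $B$. (Even that requires a brief check on degenerate blocks $q_\ell^s = \cdots = q_k^s$, where monotonicity of $\xi^s$ forces the derivatives $\partial B/\partial q_j^s$ on the block to share a sign, so the joint decrease is still admissible.) Finally, you flag but do not close the coercivity gap when $m_{k-1} \to 1$ and $q_k^s \to 1$ simultaneously. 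The paper sidesteps this entirely: it first fixes the weight sequence $(m_r)$, for which a divergence argument produces a constrained minimizer, and then shows the resulting bound on $q_k^s$ is uniform over all weight sequences. No global minimizer over $\AA_k$ is needed, since the lemma asserts only equality of infima.
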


\begin{proof}
For the sake of argument, let us temporarily fix the sequence \eqref{weights} and vary only the elements of \eqref{points}.
Moreover, we relex \eqref{points} to
\eeq{ \label{points_no_q_bar}
0 = q_0^s \leq q_1^s \leq \cdots \leq q_k^s \leq q_{k+1}^s = 1, \quad s\in\SSS.
}
Consider the following summation by parts:
\eq{
\frac{q_1^s}{\Delta_1^s}+\sum_{r=1}^{k-1}\frac{1}{m_r}\log\frac{\Delta_r^s}{\Delta_{r+1}^s}+\log\Delta_k^s
= \frac{q_1^s}{\Delta_1^s}-\frac{1}{m_1}\log\frac{1}{\Delta_1^s}+\sum_{r=1}^{k-1}\Big(\frac{1}{m_r}-\frac{1}{m_{r+1}}\Big)\log\frac{1}{\Delta_{r+1}^s}.
}
Since $\Delta_1^s \leq 1-q_1^s$, it is clear that the right-hand side diverges to $\infty$ as $q_1^s\nearrow1$, uniformly in $(q_r^s)_{r\geq2}$.
So in order to realize a minimal value for $B$, we may assume $q_1^s$ is at most some fixed number $u^s<1$. 
But then the expression displayed above is at least
\eq{
\Big(\frac{1}{m_{k-1}}-1\Big)\log\frac{1}{1-q_k^s}
- \frac{1}{m_1}\log\frac{1}{1-u^s}.
}
Clearly this quantity diverges to $\infty$ as $q_k^s\nearrow1$, and so a minimal value is achieved only when $q_k^s$ is at most some fixed number $\bar q^s < 1$.
Upon taking $\bar q = \max_{s\in\SSS}\bar q^s$, we have argued that the minimum value of $B$ over all sequences \eqref{points_no_q_bar} must be obtained on some collection of the form \eqref{points}.

What remains to be shown is that $\bar q$ can be chosen independently of the sequence \eqref{weights}.
Observe from \eqref{Bderivative_2} that
\eq{ 
\frac{2}{\lambda^s(m_k-m_{k-1})}\frac{\partial B}{\partial q_k^s}
&= -\red{h_s^2}+\frac{q_1^s}{(\Delta_1^s)^2}+\sum_{r=1}^{k-1}\frac{1}{m_{r}}\Big(\frac{1}{\Delta_{r+1}^s}-\frac{1}{\Delta_{r}^s}\Big) - \xi^s(\vc q_k).
}
Again because $\Delta_r^s \leq 1 - q_r^s$, the right-hand side is at least
\eeq{ \label{just_q1}
-\red{h_s^2}+\frac{q_1^s}{(1-q_1^s)^2} - \xi^s(\vc 1).
}
If $q_1^s$ is sufficiently close to one, or more specifically $q_1^s$ exceeds the value $u^s$ given in \eqref{u_s_def},
then \eqref{just_q1} is positive, meaning we are not at a minimum of $B$.
Therefore, any minimum must have $q_1^s \leq u^s$, and so
\eq{
\sum_{r=1}^{k-1}\frac{1}{m_r}\Big(\frac{1}{\Delta_{r+1}^s}-\frac{1}{\Delta_r^s}\Big)
\stackref{discretized_delta_def}{=} \sum_{r=1}^{k-1}\frac{q_{r+1}^s-q_r^s}{\Delta_{r+1}^s\Delta_{r}^s}
&\geq \sum_{r=1}^{k-1}\frac{(1-q_r^s)-(1-q_{r+1}^s)}{(1-q_{r+1}^s)(1-q_{r}^s)} \\
&= \sum_{r=1}^{k-1}\Big(\frac{1}{1-q_{r+1}^s}-\frac{1}{1-q_{r}^s}\Big) \\
&= \frac{1}{1-q_k^s} - \frac{1}{1-q_1^s}
\geq \frac{1}{1-q_k^s} - \frac{1}{1-u^s}.
}
If $q_k^s$ is larger than the value $\bar q$ given in \eqref{bar_q_bound}, then this quantity is larger than $\red{h_s^2}+\xi^s(\vc 1)$, which in light of \eqref{Bderivative_2} would again rule out the possibility of a critical point.
This conclusion, combined with the earlier argument that every sequence \eqref{weights} admits a minimizer, yields \eqref{mins_away_from_1}.
%
\end{proof}

To obtain Proposition \ref{continuity_prop} from Lemma \ref{continuity_lemma}, we just need to approximate an arbitrary $\zeta$ with measures supported on finitely many points.
This is accomplished through the following result.

\begin{lemma} \label{density_lemma}
Fix $\bar q\in[0,1)$ and assume $(\zeta,\Phi)\in\AA(\bar q)$.
Then for any $\eps_1,\eps_2>0$, there is a measure $\wt\zeta$ of the form \eqref{discrete_zeta}, i.e.~ $(\wt\zeta,\Phi)\in\AA_k(\bar q)$ for some finite $k$, such that the following inequalities hold:
\eeqs{
\DD\big((\zeta,\Phi),(\wt\zeta,\Phi)\big)
&\leq \eps_1, \label{need_for_density_1} \\
|B(\zeta,\Phi)-B(\wt\zeta,{\Phi})| &\leq \eps_2. \label{need_for_density_2}
}
\end{lemma}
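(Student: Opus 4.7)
The strategy is to discretize $\zeta$ via its quantile function. Fix a large integer $k$, set $z_\ell := \ell/k$ for $\ell = 0, 1, \ldots, k$, let $q_\ell := Q_\zeta(z_\ell)$, and define
\eq{
\wt\zeta := \frac{1}{k} \sum_{\ell=1}^k \delta_{q_\ell}.
}
Since $\zeta([0, q_*]) = 1$ for some $q_* \leq \bar q$, every $q_\ell \leq q_*$; after merging coincident atoms, $(\wt\zeta, \Phi) \in \AA_{k'}(\bar q)$ for some $k' \leq k$. I claim that both estimates \eqref{need_for_density_1} and \eqref{need_for_density_2} hold once $k$ is sufficiently large.

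For \eqref{need_for_density_1}, use the representation \eqref{pseudometric_def} with $\Phi_1 = \Phi_2 = \Phi$. Since $Q_{\wt\zeta}(z) = q_\ell = Q_\zeta(\ell/k)$ for $z \in (z_{\ell-1}, z_\ell]$ and $Q_\zeta$ is non-decreasing, $Q_{\wt\zeta}(z) \to Q_\zeta(z)$ at every continuity point of $Q_\zeta$ as $k \to \infty$. The constraint $\sum_s \lambda^s \Phi^s(q) = q$ forces each $\Phi^s$ to be Lipschitz with constant $1/\lambda^s$, so bounded convergence gives $\DD((\zeta,\Phi),(\wt\zeta,\Phi)) \to 0$.

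For \eqref{need_for_density_2}, the workhorse is a uniform estimate on CDFs. Write $F(q) := \zeta([0,q])$ and $\wt F(q) := \wt\zeta([0,q])$. The quantile duality $Q_\zeta(z) \leq q \iff z \leq F(q)$ gives $q_\ell \leq q$ iff $\ell \leq k F(q)$, and hence
\eq{
\wt F(q) = \frac{\lfloor k F(q) \rfloor}{k}, \qquad 0 \leq F(q) - \wt F(q) \leq \frac{1}{k}.
}
Substituting into \eqref{delta_def} and using the Lipschitz bound on $\Phi^s$ yields $|\Delta^s(q) - \wt\Delta^s(q)| \leq 1/k$ uniformly in $q$. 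A term-by-term comparison in \eqref{B_def}---the last integral involves $(\xi \circ \Phi)'$ and is controlled directly by $\|F - \wt F\|_\infty \leq 1/k$, while the terms involving $\Delta^s$ use the aforementioned uniform bound---then yields $|B(\zeta,\Phi) - B(\wt\zeta,\Phi)| = O(1/k)$, with the implicit constant depending only on $\bar q$, $\xi$, and $(h_s)_{s\in\SSS}$.

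The main obstacle is controlling $\int_0^{q_*} (\Phi^s)'(q)/\Delta^s(q)\ \dd q$, whose integrand could blow up if $\Delta^s$ approached zero. The hypothesis $(\zeta, \Phi) \in \AA(\bar q)$ rescues us: on $[0, q_*]$, monotonicity yields $\Delta^s(q) \geq \Delta^s(q_*) = 1 - \Phi^s(q_*) \geq 1 - \bar q > 0$, and for $k$ large $\wt\Delta^s$ enjoys the same lower bound up to an additive $O(1/k)$. A secondary wrinkle is that \eqref{B_def} nominally depends on $q_*$, but Remark \ref{q_star_remark} lets us use the same $q_*$ for both $\zeta$ and $\wt\zeta$, which is permissible since $\Supp(\wt\zeta) \subset \Supp(\zeta) \cup \{0\} \subset [0, q_*]$.
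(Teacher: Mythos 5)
Your proof is correct, and it takes a genuinely different route from the paper's. The paper discretizes spatially: it lays down a mesh $\{j/L\}_{j=0}^J$ on $[0,q_*]$, selects those grid cells carrying positive $\zeta$-mass, and painstakingly tracks the induced error via the inequalities $q_r - 1/L \leq Q_\zeta(z) \leq q_r$ together with a chain of $\eps_1$-bookkeeping (see \eqref{integrate_to_find_mr}--\eqref{close_on_qs}, which control $|\Delta^s - \wt\Delta^s|$ up to $7\eps_1$). You instead discretize in the \emph{quantile} variable: the uniform grid $\{\ell/k\}$ on the mass axis. The payoff is the clean identity $\wt F(q) = \lfloor kF(q)\rfloor/k$ via the quantile duality $Q_\zeta(z) \leq q \iff z \leq F(q)$, which instantly gives $\|F - \wt F\|_\infty \leq 1/k$ and hence $\|\Delta^s - \wt\Delta^s\|_\infty \leq 1/k$. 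After that, the term-by-term comparison of \eqref{B_def} is mechanical, aided by the observation that $\Delta^s(q_*) = \wt\Delta^s(q_*) = 1-\Phi^s(q_*)$ so the log terms match exactly. The $\DD$-bound \eqref{need_for_density_1} you obtain only by a soft (Lebesgue-a.e.\ plus bounded convergence) argument rather than an explicit rate, but that suffices for the lemma as stated. In exchange for this softness on the $\DD$ side, you get a cleaner and more transparent handle on the $B$ side; the paper's spatial construction, conversely, is built to feed directly into the $\eps_1$-machinery needed in other parts of Section \ref{continuity_proof}. Both are valid; yours is arguably the more economical route to the lemma itself.
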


During the proof of Lemma \ref{density_lemma}, we will use the following integration by parts identity.

\begin{lemma} \label{ibp_lemma}
\textup{\cite[Lem.~2.21]{bates-sohn22}} 
For any Borel probability measure $\zeta$ on $[0,1]$, any Lipschitz continuous, non-decreasing function $f\colon[0,1]\to[0,\infty)$, and any $q\in[0,1]$, we have
\eeq{ \label{by_parts_with_quantiles}
\int_{q}^1 \zeta\big([0,u]\big)f'(u)\ \dd u
= f(1)-\zeta\big([0,q]\big)f(q)-\int_{\zeta([0,q])}^1 f(Q_\zeta(z))\ \dd z.
}
In particular,
\eeq{ \label{by_parts_with_quantiles_0}
\int_0^1 \zeta\big([0,u]\big)f'(u)\ \dd u
= f(1) - \int_{0}^1 f(Q_\zeta(z))\ \dd z.
}
\end{lemma}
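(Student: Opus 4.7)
The plan is to treat this as a Lebesgue--Stieltjes integration by parts, using the fact that $f$ is Lipschitz (hence absolutely continuous with a bounded $f'$ defined a.e.) and that $F(u) \coloneqq \zeta([0,u])$ is a right-continuous, non-decreasing function with $F(1)=1$. The key identity I would first establish is
\eq{
\int_q^1 F(u) f'(u)\ \dd u = f(1) - F(q)f(q) - \int_{(q,1]} f(u)\ \dd\zeta(u),
}
after which the quantile representation follows from a change of variables.

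To derive the displayed identity, I would avoid citing a black-box Stieltjes IBP (since $F$ may have jumps) and instead prove it directly by Fubini. Writing $f(u) = f(q) + \int_q^u f'(s)\ \dd s$, I would compute
\eq{
\int_{(q,1]} f(u)\ \dd\zeta(u) = f(q)(1-F(q)) + \int_q^1 f'(s)\bigl(1 - F(s-)\bigr)\ \dd s,
}
where exchanging the order of integration is justified by nonnegativity of $f'$ and boundedness of everything in sight. Since $F$ has at most countably many jumps, $F(s-) = F(s)$ for Lebesgue-a.e.\ $s$, so the integrand may be rewritten with $F(s)$. Combining this with $\int_q^1 f'(s)\ \dd s = f(1) - f(q)$ and solving for $\int_q^1 F(s) f'(s)\ \dd s$ yields the displayed identity.

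Next, I would invoke the change of variables that characterizes $Q_\zeta$: if $U$ is uniform on $[0,1]$, then $Q_\zeta(U) \sim \zeta$, and the key inequality $Q_\zeta(z) > q \iff z > F(q)$ (which follows directly from the definition $Q_\zeta(z) = \inf\{u : F(u) \geq z\}$) gives
\eq{
\int_{(q,1]} f(u)\ \dd\zeta(u) = \int_0^1 f(Q_\zeta(z))\one_{\{Q_\zeta(z) > q\}}\ \dd z = \int_{F(q)}^1 f(Q_\zeta(z))\ \dd z.
}
Substituting this back yields \eqref{by_parts_with_quantiles}. For the particular case \eqref{by_parts_with_quantiles_0}, I would set $q=0$ and observe that if $z \in (0, F(0)]$ then $F(0) \geq z$, so $Q_\zeta(z) = 0$; hence $\int_0^{F(0)} f(Q_\zeta(z))\ \dd z = F(0) f(0)$, which absorbs the boundary term $F(0)f(0)$ into the integral and leaves $f(1) - \int_0^1 f(Q_\zeta(z))\ \dd z$.

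The only mildly delicate point is the Fubini step together with the replacement of $F(s-)$ by $F(s)$; everything else is bookkeeping. Because $f' \geq 0$ (monotonicity) and bounded, and because $\zeta$ is a probability measure, integrability is automatic, so no further approximation by truncation or smoothing is needed.
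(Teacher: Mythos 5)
Your proof is correct. Note, however, that the paper itself does not supply a proof of this lemma---it is imported wholesale from the companion paper \cite[Lem.~2.21]{bates-sohn22}, so there is no in-paper argument to compare against. Your route (Fubini--Tonelli to establish the Lebesgue--Stieltjes integration-by-parts identity, then the quantile change of variables via $Q_\zeta(z)>q\iff z>\zeta([0,q])$, then absorbing the boundary term using $Q_\zeta(z)=0$ on $(0,\zeta(\{0\})]$) is sound: the Fubini exchange is justified by nonnegativity of $f'$, the replacement of the left limit $F(s^-)$ by $F(s)$ is harmless since a monotone $F$ has at most countably many discontinuities, and the absolute continuity of a Lipschitz $f$ licenses the use of the fundamental theorem of calculus in writing $f(u)=f(q)+\int_q^u f'$. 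This is likely essentially the same argument as in the companion paper.
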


\begin{proof}[Proof of Lemma \ref{density_lemma}]
The argument is virtually identical to that of \cite[Prop.~2.17]{bates-sohn22}, but we include it for the reader's convenience.
Let $(\zeta,\Phi)\in\AA(\bar q)$ be given. 
That is, there is some $q_*\in[0,1)$ such that $\zeta\big([0,q_*]\big) = 1$ and $\Phi^s(q_*)\leq\bar q$ for all $s\in\SSS$.
We fix $q_*$ and $\bar q$ for the remainder of the proof.

Given any $\eps_1>0$, let $L$ be an integer so large that 
\eeq{ \label{L_choice_1}
\frac{1}{L}\sum_{s\in\SSS}\frac{1}{\lambda^s} \leq \eps_1.
}
The left-hand side is motivated by the fact that for any $\Phi$ satisfying Definition \ref{lambda_av_def}, we have
\eeq{ \label{lambda_av_consequence}
|\Phi^s(q)-\Phi^s(u)| \leq |q-u|/\lambda^s
\quad \text{for any $q,u\in[0,1]$}.
}
Let $J$ be the smallest integer such that $J/L\geq q_*$.
Based on $\zeta$, we choose a sequence 
\eeq{ \label{tilde_qs}
0= q_0 \leq  q_1 < \cdots <  q_k \leq q_* < q_{k+1}= 1
}
in the following manner:
\begin{itemize}
\item If $\zeta(\{0\}) > 0$, then set $q_1 = 0$.
\item For $j\in\{1,\dots,J-1\}$, if $\zeta\big(((j-1)/L,j/L]\big)>0$, then include $ q=j/L$ as one of the elements $ q_r$ of \eqref{tilde_qs}.
\item Finally, if $\zeta\big(((J-1)/L,J/L]\big)>0$, then set $q_k = q_*$ (otherwise, $q_k$ will be the largest number obtained from the two previous steps).
\end{itemize}
Once \eqref{tilde_qs} has been formed, define $m_r = \zeta\big([0, q_r]\big)$ for $r\in\{1,\dots,k\}$. 
The condition that $\zeta$ assign positive mass to the interval $(q_r-1/L,q_r]$ ensures that
\eq{ 
0=m_0 < m_1 < \cdots < m_k = 1.
}
Furthermore, since all zero-mass intervals are excluded, we have
\eeq{ \label{why_L}
 q_r - 1/L \leq Q_{\zeta}(z) \leq  q_r \quad \text{whenever $z\in(m_{r-1},m_r]$, $1\leq r\leq k$}.
}
Equivalently, the following implication is true:
\eeq{ \label{why_L_equivalent}
q_r \leq u \leq q_{r+1}- 1/L \quad \implies \quad \zeta\big([0,u]\big)=\zeta\big([0,q_r])=m_r.
}
Now take the approximating measure to be
\eq{
\wt\zeta = \sum_{r=1}^k(m_r-m_{r-1})q_r.
}
As before, given ${\Phi}$ we will write $\vc q_r=\Phi(q_r)$ so that for $z\in(m_{r-1},m_r]$, we have
\eeq{ \label{why_L_2}
\|{\Phi}(Q_{\zeta}(z))-\vc q_r\|
\stackref{lambda_av_consequence}{\leq} | Q_{\zeta}(z)-q_r|\sum_{s\in\SSS}\frac{1}{\lambda^s}
\stackref{why_L}{\leq} \frac{1}{L}\sum_{s\in\SSS}\frac{1}{\lambda^s}
\stackref{L_choice_1}{\leq} \eps_1.
}
Since $Q_{\wt\zeta}(z) = q_r$ for $z\in(m_{r-1},m_{r}]$, this inequality leads to
\eeq{ \label{L_choice_consequence}
\int_0^1\|\Phi(Q_\zeta(z))-{\Phi}(Q_{\wt\zeta}(z))\|_1\ \dd z
&= \sum_{r=1}^k \int_{m_{r-1}}^{m_r} \|{\Phi}(Q_{ \zeta}(z))-\vc q_r\|_1\ \dd z
\stackref{why_L_2}{\leq} \eps_1.
}
Finally, note that 
\eq{
\Supp(\wt\zeta\circ(\Phi^s)^{-1})=\{\Phi^s(q_1),\dots,\Phi^s(q_k)\}\subset[0,\Phi^s(q_*)]\subset[0,\bar q] \quad \text{for all $s\in\SSS$},
}
and so $(\wt\zeta,\Phi)\in\AA_k(\bar q)$.
This completes the proof of \eqref{need_for_density_1}.

Now we turn our attention to showing \eqref{need_for_density_2}. 
In order to distinguish between \eqref{delta_def} applied to $(\wt\zeta,{\Phi})$ as opposed to $(\zeta,\Phi)$, we will write
\eq{
\wt\Delta^s_r \coloneqq \int_{q_r}^1\zeta\big([0,u]\big)(\Phi^s)'(u)\ \dd u
= \sum_{\ell=r}^km_\ell(q_{\ell+1}^s-q_\ell^s).
}
Note that because $\zeta\big([0,q_k])=1=\wt\zeta\big([0,q_k]\big)$, we have
\eeq{ \label{comparing_deltas_at_q_star}
\Delta^s(q_k) = 1 - \Phi^s(q_k) = \wt\Delta^s_k.
}
Set $\alpha = \min_{s\in\SSS}(1-\Phi^s(q_k))$.
Given $\eps_2>0$, let $\eps_1\in(0,\alpha/7)$ be so small that
\eeq{ \label{eps_alpha_choice}
\frac{1}{\alpha-7\eps_1}-\frac{1}{\alpha} \leq \frac{\eps_2}{\max_{s\in\SSS}1/\lambda^s}.
}
Given $\eps_1$, take $L$ as above so that \eqref{L_choice_consequence} holds and whenever $|q-u|\leq1/L$, we have
\eeq{ \label{summarizing_choices}
|\Phi^s(q)-\Phi^s(u)|
\stackref{lambda_av_consequence}{\leq}
|q-u|\sum_{s\in\SSS}\frac{1}{\lambda^s}
\stackref{L_choice_1}{\leq} \eps_1.
}
%
Regardless of $\eps_1$, a simple calculus exercise shows
\eq{
\sup_{y\geq \alpha,x\in[-7\eps_1,7\eps_1]}\Big|\frac{1}{y-x}-\frac{1}{y}\Big| 
= \sup_{y\geq \alpha}\Big(\frac{1}{y-7\eps_1}-\frac{1}{y}\Big) 
= \frac{1}{\alpha-7\eps_1}-\frac{1}{\alpha}.
}
Since $\Delta^s(q)\geq \Delta^s(q_k)\geq\alpha$ for any $q\in[0,q_k]$ and $s\in\SSS$, it thus follows from \eqref{eps_alpha_choice} and \eqref{lambda_av_consequence} that
\eeq{ \label{uniform_inequality}
\sup_{x\in[-7\eps_1,7\eps_1]}\Big|\frac{(\Phi^s)'(q)}{\Delta^s(q)-x}-\frac{(\Phi^s)'(q)}{\Delta^s(q)}\Big| \leq \eps_2 \quad \text{whenever $q\in[0,q_*]$ and $(\Phi^s)'(q)$ exists}.
}
%
In light of \eqref{why_L_equivalent}, we have the following for $q\in[q_{r}, q_{r+1}]$, $0\leq r\leq k-1$:
\eq{
\Delta^s(q) 
&= \Delta^s(q_{r+1})
+\int_{q}^{(q_{r+1}-\frac{1}{L})\vee q}\zeta\big([0,u]\big)(\Phi^s)'(u)\ \dd u 
+\int_{(q_{r+1}-\frac{1}{L})\vee q}^{q_{r+1}}\zeta\big([0,u]\big)(\Phi^s)'(u)\ \dd u\\
&= \Delta^s(q_{r+1}) + m_r\big(\Phi^s((q_{r+1}-1/L)\vee q)-\Phi^s(q)\big)
+\int_{(q_{r+1}-\frac{1}{L})\vee q}^{q_{r+1}}\zeta\big([0,u]\big)(\Phi^s)'(u)\ \dd u.
}
Now, it is immediate from \eqref{summarizing_choices} that
\eq{
|\Phi^s(q_{r+1})-\Phi^s((q_{r+1}-1/L)\vee q)| \leq \eps_1.
}
In addition, by using the trivial inequality $0\leq\zeta\big([0,u]\big)\leq 1$, we obtain
\eq{
0\leq \int_{(q_{r+1}-\frac{1}{L})\vee q}^{q_{r+1}}\zeta\big([0,u]\big)(\Phi^s)'(u)\ \dd u
\leq \Phi^s(q_{r+1})-\Phi^s((q_{r+1}-1/L)\vee q) \leq \eps_1.
}
Since we defined $q_{r+1}^s$ to be ${\Phi}^s(q_{r+1})$, the three previous displays together show
\eeq{ \label{integrate_to_find_mr}
\big|\Delta^s(q) - \Delta^s( q_{r+1})-m_r\big(q_{r+1}^s -{\Phi}^s(q)\big)\big| \leq 2\eps_1
\quad \text{for $q\in[q_r,q_{r+1}]$.}
}
Next recall from \eqref{why_L} that $q_{r+1}-1/L \leq Q_{\zeta}(m_{r+1})\leq  q_{r+1}$.
Therefore, by yet another application of \eqref{summarizing_choices}, we have
\eeq{ \label{q_r1_to_Phi}
0 \leq q_{r+1}^s-\Phi^s(Q_\zeta(m_{r+1})) \leq \eps_1.
}
Using $Q_{\zeta}(m_{r+1})$ as the value of $q$ in \eqref{integrate_to_find_mr}, we now obtain the following special case:
\eq{
|\Delta^s(Q_{\zeta}(m_{r+1}))-\Delta^s(q_{r+1})|
\leq 3\eps_1, \quad 0\leq r\leq k-1.
}
Since $Q_{\wt\zeta}(m_{r+1}) = q_{r+1}$, we can employ Lemma \ref{ibp_lemma} to make the following comparison:
\eq{
&|\Delta^s(Q_{\zeta}(m_{r+1}))- \wt\Delta^s_{r+1}| \\
&\stackrefp{q_r1_to_Phi}{=}
\Big|\int_{Q_\zeta(m_{r+1})}^1\zeta\big([0,u]\big)(\Phi^s)'(u)\ \dd u - \int_{Q_{\wt\zeta}(m_{r+1})}^1\wt\zeta\big([0,u]\big)(\Phi^s)'(u)\ \dd u\Big| \\
&\stackrefpp{by_parts_with_quantiles}{q_r1_to_Phi}{\leq} \Big|\int_{m_{r+1}}^1\Phi^s(Q_\zeta(z))\ \dd z-\int_{m_{r+1}}^1\Phi^s(Q_{\wt \zeta}(z))\ \dd z\Big|
+m_{r+1}\big|\Phi^s(Q_{\zeta}(m_{r+1}))-q_{r+1}^s\big| \\
&\stackref{q_r1_to_Phi}{\leq} \int_{m_{r+1}}^1\|\Phi(Q_\zeta(z))-{\Phi}(Q_{\wt \zeta}(z))\|_1\ \dd z + \eps_1
\stackref{L_choice_consequence}{\leq} 2\eps_1.
}
The two previous displays combine to show that
\eeq{ \label{close_on_qs}
|\Delta^s( q_{r+1}) - \wt\Delta^s_{r+1}| \leq 5\eps_1, \quad 0\leq r\leq k-1.
}
Putting together \eqref{integrate_to_find_mr} and \eqref{close_on_qs}, we find
\eq{
\big|\Delta^s(q) - \wt\Delta^s_{r+1} - m_r\big(q_{r+1}^s-\Phi^s(q)\big)\big| \leq 7\eps_1 \quad \text{for all $q\in[q_r, q_{r+1}]$, $0\leq r\leq k-1$}.
}
It thus follows from \eqref{uniform_inequality} that whenever ${\Phi}'(q)$ exists and $q\in[q_r, q_{r+1}]$, we have
\eeq{ \label{follows_from_uniform_inequality}
\bigg|\frac{(\Phi^s)'(q)}{\Delta^s(q)}
-\frac{(\Phi^s)'(q)}{\wt\Delta^s_{r+1}+ m_{r}\big(q_{r+1}^s-\Phi^s(q)\big)}\bigg| \leq \eps_2.
}
Upon integration, this inequality yields the following for $r\in\{1,\dots,k-1\}$:
\eq{
\bigg|\int_{ q_{r}}^{ q_{r+1}}
\frac{(\Phi^s)'(q)}{\Delta^s(q)}\ \dd q
- \frac{1}{m_{r}}\log\frac{\wt\Delta^s_{r}}{\wt\Delta^s_{r+1}}\bigg|
\leq \eps_2( q_{r+1}- q_r).
}
When $r=0$, we have $m_0=0$, and so our conclusion from \eqref{follows_from_uniform_inequality} is instead
\eq{
\bigg|\int_0^{ q_{1}}
\frac{(\Phi^s)'(q)}{\Delta^s(q)}\ \dd q - \frac{q_1^s}{\wt\Delta^s_1}\bigg|
\leq \eps_2 q_1.
}
We conclude from the two previous displays, together with \eqref{comparing_deltas_at_q_star}, that
\eeq{ \label{for_density_final_1}
\bigg|\int_0^{q_k} \frac{(\Phi^s)'(q)}{\Delta^s(q)}\ \dd q+\log\Delta^s(q_k)
- \bigg(\frac{q_1^s}{\wt\Delta_1^s}+\sum_{r=1}^{k-1}\frac{1}{m_r}\log\frac{\wt\Delta^s_{r}}{\wt\Delta^s_{r+1}}+\log\wt\Delta^s_k\bigg)\bigg| \leq q_k\eps_2<\eps_2.
}
In addition, we can apply Lemma \ref{ibp_lemma} once more, specifically \eqref{by_parts_with_quantiles_0}, to see that
\eeq{ \label{for_density_final_1plus}
|\Delta^s(0)-\Delta^s_1|&=\Big|\int_0^1 \zeta\big([0,q]\big)(\Phi^s)'(q)\ \dd q-\int_0^1 \wt\zeta\big([0,q]\big)(\Phi^s)'(q)\ \dd q\Big| \\
&= \Big|\int_{0}^1 \Phi^s(Q_{\zeta}(z))\ \dd z-\int_0^1 \Phi^s(Q_{\wt\zeta}(z))\ \dd z\Big| \\
&\leq \int_0^1\|\Phi(Q_{\zeta}(z))-{\Phi}(Q_{\wt\zeta}(z))\|_1\ \dd z
\stackref{L_choice_consequence}{\leq} \eps_1.
}
Finally, we use \eqref{by_parts_with_quantiles_0} again to determine that
\eeq{ \label{for_density_final_2}
&\Big|\int_0^1 \zeta\big([0,q]\big)(\xi\circ \Phi)'(q)\ \dd q-\int_0^1 \wt\zeta\big([0,q]\big)(\xi\circ {\Phi})'(q)\ \dd q\Big| \\
&= \Big|\int_{0}^1 \xi(\Phi(Q_{\zeta}(z)))\ \dd z-\int_0^1 \xi({\Phi}(Q_{\wt\zeta}(z)))\ \dd z\Big| \\
&\leq C\int_0^1\|\Phi(Q_{\zeta}(z))-{\Phi}(Q_{\wt\zeta}(z))\|_1\ \dd z
\stackref{L_choice_consequence}{\leq} C\eps_1,
}
where $C$ depends only on $\xi$.
Once we recall the definition \eqref{B_def} of $B(\zeta,\Phi)$ and the value of $B(\wt\zeta,\Phi)$ from \eqref{B_discrete_def},
it follows from \eqref{for_density_final_1}--\eqref{for_density_final_2}
that
\eeq{ \label{last_inequality_for_density}
|B(\zeta,\Phi)-B(\wt\zeta,\Phi)| \leq \frac{1}{2}\Big(\eps_2 + \sum_{s\in\SSS}\lambda^s\red{h_s^2}\eps_1 + C\eps_1\Big).
}
By replacing $\eps_1$ with
\eq{
\min\Big\{\eps_1,\frac{\eps_2}{2}\Big(\sum_{s\in\SSS}\red{\lambda^sh_s^2} \Big)^{-1},\frac{\eps_2}{2C}\Big\},
}
we can ensure that the right-hand side of \eqref{last_inequality_for_density} is at most $\eps_2$, as needed for \eqref{need_for_density_2}.
\end{proof}

We are now ready to prove that $(\zeta,\Phi)\mapsto B(\zeta,\Phi)$ is locally Lipschitz with respect to $\DD$.

\begin{proof}[Proof of Proposition \ref{continuity_prop}]
Let $(\zeta_1,\Phi_1),(\zeta_2,\Phi_2)\in\AA(\bar q)$.
Given any $\eps>0$, use Lemma \ref{density_lemma} to identify finitely supported measures $\wt\zeta_1$ and $\wt\zeta_2$ such that
\eq{ 
\DD\big((\zeta_i,\Phi_i),(\wt\zeta_i,{\Phi}_i)\big) \leq \eps \quad \text{and} \quad
|B(\zeta_i,\Phi_i)-B(\wt\zeta_i,{\Phi}_i)|\leq \eps \quad \text{for $i\in\{1,2\}$},
}
and also $(\wt\zeta_i,\Phi_i)\in\AA(\bar q)$.
We may assume that $(\wt\zeta_1,\Phi_1)$ and $(\wt\zeta_2,\Phi_2)$ are of the form \eqref{discrete_1} and \eqref{discrete_2}, respectively, by possibly adding duplicate $q$'s and $p$'s in \eqref{points} and \eqref{tilde_points} so that the two representations use the same sequence \eqref{weights}.
Having reduced to this case, we appeal to Lemma \ref{continuity_lemma} to conclude that
\eq{
|B(\zeta_1,\Phi_1)-B(\zeta_2,\Phi_2)|
\leq 2\eps + C\Big(\DD\big((\zeta_1,\Phi),(\zeta_2,\Phi_2))+2\eps\Big).
}
By sending $\eps\to0$, we obtain the desired Lipschitz continuity.
\end{proof}

\section{Proof of minimizer identities} \label{identities_proof}
In this section, we prove Theorems \ref{parisi_min_thm} and \ref{cs_min_thm}.

\subsection{Identity satisfied by Parisi minimizers} \label{parisi_min_proof}

Here we prove Theorem \ref{parisi_min_thm}.
First we show that $A$ admits a minimizer, assuming only \eqref{decay_condition}.
The result \cite[Cor.~1.6]{bates-sohn22} says that there exists a $\vc\lambda$-admissible pair $(\zeta,\Phi)$ such that
\eeq{ \label{inf_b_inf}
\inf_{\vc b}A(\zeta,\Phi,\vc b) = \inf A.
}
The task now is to prove that there is $\vc b$ achieving this infimum, for which the following statement suffices.
For any $s\in\SSS$, the following inequalities hold uniformly in $(b^t)_{t\neq s}$:
\eeq{ \label{b_derivative_needs}
\frac{\dd A(\zeta,\Phi,\vc b)}{\dd b^s} &> 0 \quad \text{for all $b^s$ sufficiently large, and}\\
\frac{\dd A(\zeta,\Phi,\vc b)}{\dd b^s} &< 0 \quad \text{for all $b^s$ sufficiently close to $d^s(0)$.}
}
To this end, we can differentiate both sides of \eqref{A_def} to obtain
\eeq{  \label{b_derivative_calculation}
\frac{2}{\lambda^s}\frac{\dd A(\zeta,\Phi,\vc b)}{\dd b^s}
= -\frac{h_s^2+\red{\xi^s(\vc 0)}}{(b^s-d^s(0))^2}+ 1 - \frac{1}{b^s}-\int_0^1\frac{(\xi^s\circ\Phi)'(q)}{(b^s-d^s(q))^2}\ \dd q.
}
The right-hand side clearly tends to $1$ as $b^s\to\infty$, and so the first line of \eqref{b_derivative_needs} is true.
If $d^s(0)<1$, then the second line is also true, since in this case $1-1/b^s < 0$ for all $b^s$ sufficiently close to $d^s(0)$.
If instead $d^s(0)\geq 1$, then consider the point
\eq{
q_0 = \sup\{q\geq0:d^s(q)=d^s(0)\}.
}
As $d^s$ is non-increasing and continuous with $d^s(1)=0$, we must have $q_0<1$ and $d^s(q_0)=d^s(0)$.
By maximality of $q_0$, given any $\eps\in(0,1]$, we can identify $\delta>0$ such that $d^s(q_0+\delta)=d^s(q_0)-\eps$.
Therefore, if we choose $b^s = d^s(q_0)+\eps$,
then
\eq{
\int_{q_0}^{q_0+\delta}\frac{(\xi^s\circ\Phi)'(q)}{(b^s-d^s(q))^2}\ \dd q
&\geq \frac{1}{(2\eps)^2}\int_{q_0}^{q_0+\delta}(\xi^s\circ\Phi)'(q)\ \dd q \\
&\geq \frac{1}{(2\eps)^2}\int_{q_0}^{q_0+\delta}\zeta\big([0,q]\big)(\xi^s\circ\Phi)'(q)\ \dd q
= \frac{1}{4\eps}.
}
Consequently, for any $\eps<1/4$, the right-hand side of \eqref{b_derivative_calculation} is negative.
We have thus demonstrated the second line of \eqref{b_derivative_needs}, and so a minimizing $\vc b$ exists.
Furthermore, from \eqref{b_derivative_calculation} it is clear that any such $\vc b$ satisfies \eqref{parisi_identity_1}.

Now we assume \eqref{xi_strictly_convex} and look to prove that any minimizing triple $(\zeta,\Phi,\vc b)$ satisfies  \eqref{parisi_identity_2}.
That is, if we define
\eeq{ \label{little_phi_def}
\phi^s(q) \coloneqq \frac{h_s^2+\red{\xi^s(\vc0)}}{(b^s-d^s(0))^2}+\int_0^q \frac{(\xi^s\circ\Phi)'(q)}{(b^s-d^s(u))^2}\ \dd u,
}
then we wish to show that $\phi^s(q) = \Phi^s(q)$ for all $q\in\Supp(\zeta)$.
So let us write $\phi = (\phi^s)_{s\in\SSS}$ and consider the following function (mapping $[0,1]$ into $[0,1]^\SSS$) for any $\eps\in[0,1)$:
\eeq{ \label{wt_Phi_def}
\wt\Phi_\eps \coloneqq (1-\eps)\Phi + \eps\phi.
}
Now, each coordinate map $\wt\Phi_\eps^s = (1-\eps)\Phi^s+\eps\phi^s$ is non-decreasing, nonnegative, and satisfies the inequality
\eq{
\wt\Phi_\eps^s(1) = (1-\eps) + \eps\Big(1-\frac{1}{b^s}\Big) < 1.
}
In particular, $\Phi_\eps$ is not $\vc\lambda$-admissible.
To amend this, we define the scalar function
\eq{
\alpha_\eps(q) \coloneqq \sum_{s\in\SSS}\lambda^s\wt\Phi^s_\eps(q), \quad q\in[0,1],
}
which is clearly continuous, and also strictly increasing in $q$ because $\Phi$ is $\vc\lambda$-admissible.
Therefore, we can reparameterize $\wt\Phi_\eps$ as
\eq{
\Phi_\eps(q) \coloneqq \wt\Phi_\eps(\alpha_\eps^{-1}(q)) \quad \text{for $q\in[\alpha_\eps(0),\alpha_\eps(1)]$}.
}
On the intervals $[0,\alpha_\eps(0)]$ and $[\alpha_\eps(1),1]$, simply perform a linear interpolation 
to the endpoints $\Phi_\eps(0) = \vc 0$ and $\Phi_\eps(1) = \vc 1$.
By definition of $\alpha_\eps$, this new map $\Phi_\eps$ is $\vc\lambda$-admissible.
Corresponding to this reparameterization, we also define a new probability measure $\zeta_\eps$ by
\eeq{ \label{new_zeta}
\zeta_\eps\big([0,q]\big) \coloneqq \begin{cases}
0 &\text{if $q\in[0,\alpha_\eps(0))$}, \\
\zeta\big([0,\alpha_\eps^{-1}(q)]\big) &\text{if $q\in[\alpha_\eps(0),\alpha_\eps(1))$}, \\
1 &\text{if $q\in[\alpha_\eps(1),1]$}.
\end{cases}
}
The claimed identity \eqref{parisi_identity_2} will follow from the following claim.

\begin{claim} \label{parisi_derivative_claim}
We have the following right derivative:
\eeq{ \label{parisi_perturb_derivative}
\frac{\dd A(\zeta_\eps,\Phi_\eps,\vc b)}{\dd\eps}\Big|_{\eps=0^+} = -\frac{1}{2}\int\iprod{\Phi(q)-\phi(q)}{\nabla^2\xi(\Phi(q))(\Phi(q)-\phi(q))}\ \zeta(\dd q).
}
\end{claim}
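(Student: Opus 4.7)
The plan is to expand $A(\zeta_\eps,\Phi_\eps,\vc b)$ to first order in $\eps$ and exploit an algebraic cancellation engineered by the defining formula \eqref{little_phi_def} for $\phi$. The natural perturbation direction \emph{before} the reparameterization is $\wt\Phi_\eps = \Phi + \eps\psi$ with $\psi := \phi - \Phi$; the reparameterization by $\alpha_\eps$ and the linear interpolation on $[0,\alpha_\eps(0)]\cup[\alpha_\eps(1),1]$ will be tracked through boundary corrections. I would first split the integrals defining $A(\zeta_\eps,\Phi_\eps,\vc b)$ over the three intervals $[0,\alpha_\eps(0)]$, $[\alpha_\eps(0),\alpha_\eps(1)]$, $[\alpha_\eps(1),1]$; on the core interval, a change of variable $u = \alpha_\eps(v)$ converts $(\Phi_\eps,\zeta_\eps)$-integrals into $(\wt\Phi_\eps,\zeta)$-integrals, at the cost of a shift $d^s_\eps(\alpha_\eps(v)) = \wt d^s_\eps(v) + \delta^s_\eps$, where $\wt d^s_\eps(v) := \int_v^1 \zeta\big([0,u]\big)(\xi^s\circ\wt\Phi_\eps)'(u)\,\dd u$ and $\delta^s_\eps := \xi^s(\vc 1) - \xi^s(\wt\Phi_\eps(1))$.

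Writing $V^s(q) := \nabla\xi^s(\Phi(q))\cdot\psi(q)$, one has $\partial_\eps|_0(\xi^s\circ\wt\Phi_\eps)(q) = V^s(q)$ and $\dot d^s(q) := \partial_\eps|_0 \wt d^s_\eps(q) = \int_q^1\zeta\big([0,u]\big)(V^s)'(u)\,\dd u$. The derivative $\partial_\eps|_0 A(\zeta_\eps,\Phi_\eps,\vc b)$ then produces four main pieces (plus explicit boundary-interval corrections): a $d^s(0)$-term reducing via $\phi^s(0) = (h_s^2+\xi^s(\vc 0))/(b^s-d^s(0))^2$ to $\phi^s(0)\dot d^s(0)$; a chain-rule term from the denominator that rewrites as $\int (\phi^s)'(q)\dot d^s(q)\,\dd q$ via $(\phi^s)'(q) = (\xi^s\circ\Phi)'(q)/(b^s - d^s(q))^2$; a numerator term $\int (V^s)'(q)/(b^s-d^s(q))\,\dd q$; and a $\theta$-term which, using $\nabla\theta = \nabla^2\xi\cdot\vc q$, becomes $-\tfrac{1}{2}\int \zeta\big([0,q]\big)\tfrac{\dd}{\dd q}\iprod{\Phi}{\nabla^2\xi(\Phi)\psi}\,\dd q$. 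A Fubini swap combines the first three into $\sum_s\tfrac{\lambda^s}{2}\int (V^s)'(q)\,k(q)\,\dd q$ with $k(q) := 1/(b^s-d^s(q)) + \zeta\big([0,q]\big)\phi^s(q)$.

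The heart of the argument is the identity $\dd k(q) = \phi^s(q)\,\dd\zeta\big([0,q]\big)$: the absolutely continuous part of $\dd k$ vanishes because the defining ODE for $\phi^s$ exactly cancels the contribution of $(d^s)'(q) = -\zeta\big([0,q]\big)(\xi^s\circ\Phi)'(q)$. Stieltjes integration by parts then reduces the combined integral to $V^s(1)k(1) - V^s(0)k(0) - \int V^s\phi^s\,\dd\zeta$, with $k(1) = 1$ using $\phi^s(1) = 1 - 1/b^s$ (which follows from \eqref{parisi_identity_1}, already established in the first half of Theorem \ref{parisi_min_thm}). Integration by parts on the $\theta$-piece produces a matching $q=1$ boundary $-\tfrac{1}{2}\iprod{\vc 1}{\nabla^2\xi(\vc 1)\psi(1)}$ plus $\tfrac{1}{2}\int\iprod{\Phi}{\nabla^2\xi(\Phi)\psi}\,\dd\zeta$. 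The identity $\sum_s\lambda^s\nabla\xi^s(\vc q)\cdot\psi = \iprod{\vc 1}{\nabla^2\xi(\vc q)\psi}$ (immediate from $\lambda^s\xi^s = \partial_s\xi$) cancels the $q=1$ boundaries between the two pieces, and the identity $\lambda^s V^s = [\nabla^2\xi(\Phi)\psi]_s$ collapses the remaining integrals into $-\tfrac{1}{2}\int\iprod{\psi}{\nabla^2\xi(\Phi)\psi}\,\dd\zeta$, the claimed expression.

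The final step is the bookkeeping verification that all the remaining boundary pieces cancel: the $q=0$ residue $-\sum_s\tfrac{\lambda^s}{2}\nabla\xi^s(\vc 0)\cdot\psi(0)/(b^s-d^s(0))$ from the Stieltjes IBP must cancel against the derivative of the $[0,\alpha_\eps(0)]$ boundary-interval contribution (on which $\Phi_\eps$ interpolates linearly from $\vc 0$ to $\eps\phi(0)$), and analogous matching must hold at $q=1$ against the $[\alpha_\eps(1),1]$ contribution and the $\delta^s_\eps$-shift in the core denominator. The main obstacle in executing this plan is precisely this bookkeeping--tracking every boundary term from three separate sources and verifying they pair up via the key identities above--together with careful handling of possible atoms of $\zeta$ at $0$ in the Stieltjes IBP. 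The overall strategy generalizes Talagrand's single-species argument \cite[Sec.~4]{talagrand06II}, where the absence of a reparameterization and the discreteness of $\zeta$ make the algebra much more immediate; the clean factorization $\dd k = \phi^s\,\dd\zeta$ is what enables the extension to general measures and multiple species.
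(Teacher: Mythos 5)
Your plan follows essentially the same route as the paper: split $A(\zeta_\eps,\Phi_\eps,\vc b)$ over $[0,\alpha_\eps(0)]\cup[\alpha_\eps(0),\alpha_\eps(1)]\cup[\alpha_\eps(1),1]$, change variables on the core so that $d^s_\eps(\alpha_\eps(v)) = \wt d^s_\eps(v) + \xi^s(\vc 1) - \xi^s(\wt\Phi_\eps(1))$, differentiate each piece at $\eps=0^+$, and let the defining ODE of $\phi^s$ (i.e.\ $(1/(b^s-d^s))' = -\zeta([0,\cdot])(\phi^s)'$) drive the cancellations; your $V^s$ is exactly the paper's $X^s$. The one genuine repackaging is the auxiliary $k(q)=(b^s-d^s(q))^{-1}+\zeta([0,q])\phi^s(q)$ with $\dd k = \phi^s\,\dd\zeta$: this compresses into a single Stieltjes integration by parts what the paper does in two passes after Fubini (compare \eqref{part_1_after_derive} and the displays following it), and the pairing of the $q=1$ boundaries is then cross-piece (your $\tfrac12 Y(1)$ from the Stieltjes boundary against the $\theta$-IBP boundary $-\tfrac12 Y(1)$) rather than intra-piece as in the paper. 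Both pairings close, so this is a legitimate streamlining.

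There is, however, one boundary correction your description omits, and it is needed to make your pairing balance. The reparameterization of the $\theta$-term contributes not only the core $-\tfrac12\int_0^1 \zeta([0,q])(\theta\circ\wt\Phi_\eps)'(q)\,\dd q$ but also the tail $-\tfrac12\big(\theta(\vc 1) - (\theta\circ\wt\Phi_\eps)(1)\big)$, whose $\eps$-derivative is $+\tfrac12\iprod{\vc 1}{\nabla^2\xi(\vc 1)\psi(1)} = +\tfrac12 Y(1)$. In your accounting, after cancelling the Stieltjes boundary $V^s(1)k(1)$ against the $\theta$-core IBP boundary, the $\delta^s_\eps$-shift contributions $-\sum_s\tfrac{\lambda^s}{2}\phi^s(1)V^s(1)$ and the $[\alpha_\eps(1),1]$ contributions $-\sum_s\tfrac{\lambda^s}{2}V^s(1)/b^s$ sum to $-\tfrac12 Y(1)$, not to zero; it is exactly the omitted $\theta$-tail $+\tfrac12 Y(1)$ that absorbs them. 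With that term restored, the $q=0$ and $q=1$ boundaries do all vanish and the remaining absolutely continuous integrals collapse as you say, giving \eqref{parisi_perturb_derivative}. So the plan is sound, but the bookkeeping you flag as the obstacle is not optional and must include this $\theta$-tail.
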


Before checking Claim \ref{parisi_derivative_claim}, let us explain how to complete the proof of Theorem \ref{parisi_min_thm}.
Since $(\zeta,\Phi,\vc b)$ is a minimizer, the left-hand side of \eqref{parisi_perturb_derivative} is nonnegative.
But in light of \eqref{xi_strictly_convex}, the only way the right-hand side can be nonnegative is if
\eeq{ \label{equality_ae}
\Phi(q)=\phi(q) \quad \text{for $\zeta$-a.e. $q\neq0$}.
}
In particular, if $\zeta(\{q\})>0$ and $q\neq0$, then $\Phi(q)=\phi(q)$.
If $q\in\Supp(\zeta)$ but $\zeta(\{q\})=0$, then  \eqref{equality_ae} tells us that $q$ is a limit point of the locations at which $\Phi$ and $\phi$ coincide.
Since $\Phi$ and $\phi$ are both continuous, this is enough to  conclude $\Phi(q)=\phi(q)$.
Similarly, if $\Supp(\zeta)$ contains positive numbers arbitrarily close to $0$, then $\Phi(0)=\phi(0)$.

The only remaining scenario to consider is when $\Supp(\zeta)\setminus \{0\} \subset [q_1,1]$ for some $q_1>0$.
Let us choose $q_1$ maximally, so that $q_1\in\Supp(\zeta)\cup\{1\}$.
Now consider any $s\in\SSS$.
If $h_s=0$, then $\Phi^s(0) = 0 = \phi^s(0)$, as desired.
If instead $h_s^2>0$, then the following claim tells us that $0\notin\Supp(\zeta)$, and so it is not even necessary to check \eqref{parisi_identity_2} at $q=0$.

\begin{claim} \label{jump_claim}
Assuming $\Supp(\zeta)\setminus \{0\} \subset [q_1,1]$ and $h_s^2 > 0$, we must have $\zeta(\{0\})=0$.
\end{claim}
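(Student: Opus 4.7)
The plan is to derive a contradiction by perturbing $\zeta$. Suppose $\zeta(\{0\}) = m > 0$; I would move the atom at $0$ to a small $\eps > 0$ by setting $\zeta_\eps := \zeta - m\delta_0 + m\delta_\eps$ (with $\eps \in (0, q_1)$), keeping $\Phi$ and $\vc b$ fixed, and show that $A(\zeta_\eps, \Phi, \vc b) < A(\zeta, \Phi, \vc b)$, contradicting minimality of the triple $(\zeta, \Phi, \vc b)$.

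Direct expansion of \eqref{A_def}, using $d_\eps^t(q) = d^t(q) - m(G_t(\eps) - G_t(q))\mathbf{1}_{q<\eps}$ with $G_t := \xi^t\circ\Phi$, decomposes $\Delta A := A(\zeta_\eps,\Phi,\vc b) - A(\zeta,\Phi,\vc b)$ into a manifestly non-positive contribution (collecting the first two parts of \eqref{A_def}, each proportional to $G_t(\eps) - G_t(q) \geq 0$) and a non-negative contribution equal to $\frac{m}{2}(H(\eps) - H(0))$, where $H := \theta\circ\Phi$. Using the identity $H'(u) = \sum_t \lambda^t \Phi^t(u) G_t'(u)$ together with $\Phi^t(0) = 0$, the positive $H$-term admits the bound $H(\eps) - H(0) \leq \max_t \Phi^t(\eps) \cdot \sum_t \lambda^t (G_t(\eps) - G_t(0))$, which is strictly smaller order than the leading negative contribution as $\eps \to 0^+$.

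The dominant negative term comes from the $t = s$ summand, which carries the strictly positive coefficient $h_s^2 + \xi^s(\vc 0) \geq h_s^2 > 0$; it therefore strictly dominates, provided $G_s(\eps) > G_s(0)$. To rule out the degenerate alternative that $G_s$ is constant near $0$, I would argue as follows: if $G_s$ were constant on $[0, \eps_0]$, then $(\xi^s\circ\Phi)'(u) = \sum_t \partial_t\xi^s(\Phi(u))(\Phi^t)'(u) = 0$ a.e.\ on $(0, \eps_0)$. Each summand is nonnegative, and under \eqref{xi_strictly_convex} we have $\partial_s\xi^s(\Phi(u)) = \partial_s^2\xi(\Phi(u))/\lambda^s > 0$ whenever $\Phi(u) \neq \vc 0$, so the $t = s$ term forces $(\Phi^s)'(u) = 0$ and thus $\Phi^s \equiv 0$ on $[0, \eps_0]$. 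Letting $\eps_0$ be maximal would contradict the strict inequality $\Phi^s(q_1) \geq h_s^2/(b^s - d^s(0))^2 > 0$, which follows from \eqref{parisi_identity_2} at $q_1 \in \Supp(\zeta)\setminus\{0\}$ (already established, since $\Phi(q_1) \neq \vc 0$ by $\vc\lambda$-admissibility), or from $\Phi^s(1) = 1$ in the edge case $\Supp(\zeta) = \{0\}$. This yields $G_s(\eps) > G_s(0)$ for all sufficiently small $\eps > 0$, and the dominance of the $s$-term over the $H$-correction gives $\Delta A < 0$. The main technical challenge will be making the asymptotic comparison uniform across regimes in which $G_s(\eps) - G_s(0)$ vanishes at various rates (e.g., quadratically in $\eps$ when $\nabla^2\xi(\vc 0)$ is degenerate), where the additional integral term in the negative sum must be used to close the argument.
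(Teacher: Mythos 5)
The high-level strategy---derive a contradiction from $\zeta(\{0\})>0$ by perturbing the measure to move mass off the atom at $0$---is the same as the paper's, and most of the bookkeeping (the formula for $d^t_\eps$, the decomposition of $\Delta A$ into a non-positive piece and a $\theta$-piece, the bound on $H(\eps)-H(0)$ via \eqref{combine_identity_2}) is correct. But there is a genuine gap, and it is exactly the point where the paper's proof differs from yours: the paper first \emph{modifies} $\Phi$ on the gap $(0,q_1)$ (which is off $\Supp(\zeta)$, so $\inf_{\vc b}A$ is unaffected by \cite[Thm.~1.5]{bates-sohn22}) so that $\Phi^s(q)=q/\lambda^s$ and $\Phi^t\equiv 0$ for $t\neq s$ on $[0,p]$, after which $\xi^s(\Phi(p))-\xi^s(\vc 0)>0$ holds by construction. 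You keep $\Phi$ fixed, and your attempt to rule out the degenerate case fails: if $G_s=\xi^s\circ\Phi$ is constant on $[0,\eps_0]$, your deduction that $\Phi^s\equiv 0$ on $[0,\eps_0]$ is fine, but letting $\eps_0$ be maximal does \emph{not} contradict $\Phi^s(q_1)>0$ --- it merely forces $\eps_0<q_1$, which is perfectly consistent. For instance, in a fully decoupled model $\xi(\vc q)=\sum_t\psi^t(q^t)$ satisfying \eqref{xi_strictly_convex}, a minimizer could have $\Phi^s\equiv 0$ on some $[0,\eps_0]\subset(0,q_1)$, in which case $G_s(\eps)=G_s(0)$ for all $\eps\leq\eps_0$ and your dominant negative term vanishes identically in that range.

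Because you also send $\eps\to 0^+$ (rather than freezing a positive $p$ as the paper does), the argument cannot be repaired by simply choosing $\eps\in(\eps_0,q_1)$: the asymptotic comparison between $m(G_s(\eps)-G_s(0))$ and $\tfrac{m}{2}\max_t\Phi^t(\eps)\sum_t\lambda^t(G_t(\eps)-G_t(0))$ is a small-$\eps$ statement, and the rate at which $G_s(\eps)-G_s(0)$ vanishes is not controlled without some control over $\Phi$ near $0$. You flag the ``various rates'' problem yourself, but the more fundamental issue is that the quantity can be identically zero. The paper's modification of $\Phi$ is not a cosmetic convenience; it is what guarantees the negative term is actually present. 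If you import that step---replace $\Phi$ on $[0,p]$ by $\Phi^s(q)=q/\lambda^s$, $\Phi^t\equiv 0$ for $t\neq s$, and then differentiate $A(\zeta-\eps\delta_0+\eps\delta_p,\Phi,\vc b)$ in $\eps$ at $\eps=0^+$ for a small fixed $p$---your calculation lands where the paper's does.
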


\begin{proofclaim}
First observe that $\Phi^s(q_1) > 0$.
Indeed, if $q_1\in\Supp(\zeta)$, then we already know $\Phi^s(q_1)=\phi^s(q_1)$, and $\phi^s(q_1)>0$ since $h_s^2>0$.
If instead $q_1=1$, then we trivially have $\Phi^s(q_1)>0$ because $\Phi^s(1)=1$.

Suppose toward a contradiction that $\zeta(\{0\}) > 0$.
We will argue that $(\zeta,\Phi)$ cannot satisfy \eqref{inf_b_inf}.
Thanks to \cite[Thm.~1.5]{bates-sohn22}, we can modify $\Phi$ off the support of $\zeta$ without changing the value of the left-hand side of \eqref{inf_b_inf}.
So let us fix some $p\in(0,\lambda^s\Phi^s(q_1)\wedge q_1)$, and then assume that
\eq{
\Phi^s(q) = \begin{cases}
\displaystyle q/\lambda^s &\text{if $q\in[0,p]$}, \\
\displaystyle\frac{q_1-q}{q_1-p}(p/\lambda^s)+\frac{q-p}{q_1-p}\Phi^s(q_1)
&\text{if $q\in(p,q_1]$}.
\end{cases}
}
Correspondingly, for $t\neq s$, we assume that
\eq{
\Phi^t(q) = \begin{cases}
 0 &\text{if $q\in[0,p]$}, \\
\displaystyle\frac{q-p}{q_1-p}\Phi^t(q_1)
&\text{if $q\in(p,q_1]$}.
\end{cases}
}
It is easy to check that these assumptions preserve $\vc\lambda$-admissibility.
Now consider the following perturbed measure for sufficiently small $\eps>0$:
\eq{
\wt\zeta_\eps \coloneqq \zeta - \eps\delta_0 + \eps\delta_{p}.
}
If we define
\eq{
d^t_\eps(q) \coloneqq \int_q^1\wt\zeta_\eps\big([0,u]\big)(\xi^t\circ\Phi)'(u)\ \dd u,
}
then by construction we have
\eq{
d^t_\eps(q) - d^t(q) = -\eps\one_{\{q<p\}}\int_q^{ p}(\xi^t\circ\Phi)'(u)\ \dd u
=-\eps\one_{\{q<p\}}\big(\xi^t(\Phi(p))-\xi^t(\Phi(q))\big).
}
This gives the derivative calculation
\eeq{ \label{meas_pert_derivative}
\frac{\dd}{\dd\eps} d^t_\eps(q)
= -\one_{\{q<p\}}\big(\xi^t(\Phi(p))-\xi^t(\Phi(q))\big).
}
Very similarly, we have
\eeq{ \label{meas_pert_theta}
\frac{\dd}{\dd\eps}\int_0^1\wt\zeta_\eps\big([0,q]\big)(\theta\circ\Phi)'(q)\ \dd q
= -\int_0^{p}(\theta\circ\Phi)'(q)\ \dd q
= -\theta(\Phi(p)). 
}
Now choose $\vc b$ such that 
\eq{
A(\zeta,\Phi,\vc b) = \inf_{\wt{\vc b}} A(\zeta,\Phi,\wt{\vc b}).
}
Referring to \eqref{A_def}, \eqref{meas_pert_derivative}, and \eqref{meas_pert_theta}, we have
\eq{
\frac{\dd A(\wt\zeta_\eps,\Phi,\vc b)}{\dd\eps}
=\sum_{t\in\SSS} \frac{\lambda^t}{2}\bigg[&-\frac{h_t^2+\red{\xi^t(\vc0)}}{(b^t-d^t(0))^2}\big(\xi^t(\Phi(p))-\xi^t(\vc 0)\big) \\
&- \int_0^{p}\frac{(\xi^t\circ \Phi)'(q)}{(b^t-d^t(q))^2}\big(\xi^t(\Phi(p))-\xi^t(\Phi(q))\big)\ \dd q\bigg]+ \frac{\theta(\Phi(p))}{2}.
}
Ignoring all species but $s$ and recalling the definition of $\theta$ from \eqref{gamma_theta_def}, we see from this identity that
\eq{
\frac{2}{\lambda^s}\frac{\dd A(\wt\zeta_\eps,\Phi,\vc b)}{\dd\eps}
\leq -\frac{h_s^2+\red{\xi^s(\vc0)}}{(b^s-d^s(0))^2}\big(\xi^s(\Phi(p))-\xi^s(\vc 0)\big)
+ \frac{p\xi^s(\Phi(p)) - \xi(\Phi(p))}{\lambda^s}. 
}
Since $\Phi^t(p) = (p/\lambda^s)\one_{\{t=s\}}$, the last term satisfies
\eq{
-\frac{\xi(\Phi(p))}{\lambda^s} = -\int_0^{p/\lambda^s}\xi^s\big|_{\{q^s=q,\, q^t=0\text{ for all $t\neq s$}\}}\ \dd q \leq -\frac{p\xi^s(\vc 0)}{\lambda^s}.
}
In light of this inequality, the previous estimate becomes
\eeq{ \label{A_deriv_meas}
\frac{2}{\lambda^s}\frac{\dd A(\wt\zeta_\eps,\Phi,\vc b)}{\dd\eps}
\leq\Big(-\frac{h_s^2+\red{\xi^s(\vc0)}}{(b^s-d^s(0))^2} + \frac{p}{\lambda^s}\Big)\big(\xi^s(\Phi(p))-\xi^s(\vc 0)\big).
}
Once again, since $\Phi^t(p) = (p/\lambda^s)\one_{\{t=s\}}$, we have
\eq{
\xi^s(\Phi(p))-\xi^s(\vc 0)
= \int_0^{p/\lambda^s}\frac{\partial \xi^s}{\partial q^s}\Big|_{\{q^s=q,\, q^t=0\text{ for all $t\neq s$}\}}\ \dd q \stackref{xi_strictly_convex}{>} 0.
}
Consequently, the derivative in \eqref{A_deriv_meas} is strictly negative whenever $p$ satisfies $p/\lambda^s < (h_s^2+\red{\xi^s(\vc0)})/((b^s-d^s(0))^2)$.
In particular, $A(\zeta,\Phi,\vc b)$ cannot be equal to $\inf A$.
\end{proofclaim}

\begin{proof}[Proof of Claim \ref{parisi_derivative_claim}]
First we make two preliminary calculations that will be used several times.
First, for any $C^1$ function $f:[0,1]^\SSS\to\R$,
by the chain rule together with the definition \eqref{wt_Phi_def} of $\wt\Phi_\eps$, we have
\eeq{ \label{general_eps_func}
\Big(\frac{\dd}{\dd\eps}(f\circ\wt\Phi_\eps)(q)\Big)\Big|_{\eps=0^+}
&= \sum_{t\in\SSS}\Big(\frac{\partial f}{\partial q^t}(\wt\Phi_\eps(q))\frac{\dd\wt\Phi_\eps^t(q)}{\dd\eps}\Big)\Big|_{\eps=0^+} \\
&= \sum_{t\in\SSS}\Big((\partial^t f\circ\Phi)\cdot(\phi^t-\Phi^t)\Big)(q).
}
Second, if $f$ is also $C^2$, then we include add an application of the product rule to obtain 
\eeq{ \label{general_eps_deriv}
&\Big(\frac{\dd}{\dd\eps}(f\circ\wt\Phi_\eps)'(q)\Big)\Big|_{\eps=0^+}
= \Big(\frac{\dd}{\dd\eps}\sum_{r\in\SSS} \frac{\partial f}{\partial q^r}(\wt\Phi_\eps(q))\cdot(\wt\Phi^r_\eps)'(q)\Big)\Big|_{\eps=0^+} \\
&=\sum_{r,t\in\SSS}\frac{\partial^2 f}{\partial q^t\partial q^r}(\wt\Phi_\eps(q))\frac{\dd\wt\Phi_\eps^t(q)}{\dd\eps}(\wt\Phi^r_\eps)'(q)\Big|_{\eps=0^+}
+ \sum_{r\in\SSS}\frac{\partial f}{\partial q^r}(\wt\Phi_\eps(q))\frac{\dd(\wt\Phi^r_\eps)'(q)}{\dd\eps}\Big|_{\eps=0^+} \\
&= \sum_{r,t\in\SSS}\frac{\partial^2 f}{\partial q^t\partial q^r}(\Phi(q))\big(\phi^t(q)-\Phi^t(q)\big)(\Phi^r)'(q)
+ \sum_{r\in\SSS}\frac{\partial f}{\partial q^r}(\Phi(q))\big((\phi^r)'(q)-(\Phi^r)'(q)\big) \\
&=\sum_{t,r\in\SSS}\frac{\partial^2 f}{\partial q^r\partial q^t}(\Phi(q))\big(\phi^t(q)-\Phi^t(q)\big)(\Phi^r)'(q)
+ \sum_{t\in\SSS}\frac{\partial f}{\partial q^t}(\Phi(q))\big((\phi^t)'(q)-(\Phi^t)'(q)\big) \\
&= \sum_{t\in\SSS}\Big((\partial^t f\circ\Phi)\cdot(\phi^t-\Phi^t)\Big)'(q).
}
Let us note once and for all that in subsequent calculations, whenever we differentiate an integral, the derivative of the integrand will be uniformly bounded.
Therefore, there is never any issue exchanging differentiation and integration.

In order to write down an expression for $A(\zeta_\eps,\Phi_\eps,\vc b)$, we define the following function:
\eq{
\wt d^s_\eps(q) \coloneqq \int_q^1\zeta\big([0,u]\big)(\xi^s\circ\wt\Phi_\eps)'(u)\ \dd u, \quad q\in[0,1].
}
Note that from \eqref{general_eps_deriv} we have
\eeq{ \label{d_eps_deriv}
\Big(\frac{\dd}{\dd\eps}\wt d^s_\eps(q)\Big)\Big|_{\eps=0+}
= \int_q^1\zeta\big([0,u]\big)\sum_{t\in\SSS}\Big((\partial^t\xi^s\circ\Phi)\cdot(\phi^t-\Phi^t)\Big)'(u)\ \dd u.
}
For convenience, we will write
\eeq{ \label{Xs_def}
X^s(q) \coloneqq \sum_{t\in\SSS}\Big((\partial^t\xi^s\circ\Phi)\cdot(\phi^t-\Phi^t)\Big)(q), \quad q\in[0,1].
}
Now, by a change of variables $v = \alpha_\eps^{-1}(u)$, we obtain the following identity for all $q\in[\alpha_\eps(0),\alpha_\eps(1)]$:
\eeq{ \label{ds_to_tilde}
d^s_\eps(q) &\coloneqq \int_q^1\zeta_\eps\big([0,u]\big)(\xi^s\circ\Phi_\eps)'(u)\ \dd u \\
&= \int_{\alpha_\eps^{-1}(q)}^{1}\zeta\big([0,u]\big)(\xi^s\circ\wt\Phi_\eps)'(v)\ \dd v
+ \int_{\alpha_\eps(1)}^1(\xi^s\circ\Phi_\eps)'(u)\ \dd u \\
&= \wt d^s_\eps(\alpha_\eps^{-1}(q))
+ \xi^s(\vc 1)-(\xi^s\circ\wt\Phi_\eps)(1).
}
If $q\in[0,\alpha_\eps(0)]$, then from the definition \eqref{new_zeta} of $\zeta_\eps$, we trivially have
\eeq{ \label{d_eps_constant}
d^s_\eps(q) = d^s_\eps(\alpha_\eps(0)) = \wt d^s_\eps(0)+\xi^s(\vc 1) - (\xi^s\circ\wt\Phi_\eps)(1).
}
Let us assume henceforth that $\eps$ is small enough that when $q=0$, this last expression is sufficiently close to $d^s(0)$ so as to be less than $b^s$.
For the remaining values of $q\in[\alpha_\eps(1),1]$, we simply have
\eeq{ \label{ds_late}
d^s_\eps(q) = \int_q^1(\xi^s\circ\Phi_\eps)'(u)\ \dd u
= \xi^s(\vc 1) - (\xi^s \circ \Phi_\eps)(q).
}
Putting together these observations, we obtain the desired expression:
\eeq{ \label{expressing_A}
A(\zeta_\eps,\Phi_\eps,\vc b)
= &\sum_{s\in\SSS}\frac{\lambda^s}{2}\bigg[\frac{h_s^2+\red{\xi^s(\vc0)}}{b^s-d^s_\eps(0)}+\int_0^{\alpha_\eps(0)}\frac{(\xi^s\circ\Phi_\eps)'(q)}{b^s-d^s_\eps(q)}\ \dd q+\int_{\alpha_\eps(0)}^{\alpha_\eps(1)}\frac{(\xi^s\circ\Phi_\eps)'(q)}{b^s-d^s_\eps(q)}\ \dd q\\
&+\int_{\alpha_\eps(1)}^1\frac{(\xi^s\circ\Phi_\eps)'(q)}{b^s-d^s_\eps(q)}\ \dd q+b^s-1-\log b^s\bigg] -\frac{1}{2}\int_0^1 \zeta_\eps\big([0,u]\big)(\theta\circ\Phi_\eps)'(u)\ \dd u.
}
Now we differentiate each term on the right-hand side.
First, by applying \eqref{general_eps_func} and \eqref{d_eps_deriv} to the right-hand side \eqref{d_eps_constant}, we have
\eeq{ \label{zero_outcome}
\frac{\dd}{\dd\eps}\Big(\frac{h_s^2+\red{\xi^s(\vc0)}}{b^s-d_\eps^s(0)}\Big)\Big|_{\eps=0^+}
= \frac{h_s^2+\red{\xi^s(\vc0)}}{(b^s-d^s(0))^2}\Big(\int_0^1\zeta\big([0,u]\big)(X^s)'(u)\ \dd u-X^s(1)\Big).
}
Next we consider the first integral on the right-hand side of \eqref{expressing_A}.
Recalling the constant from \eqref{d_eps_constant}, we trivially obtain
\eq{
\int_0^{\alpha_\eps(0)}\frac{(\xi^s\circ\Phi_\eps)'(q)}{b^s-d^s_\eps(q)}\ \dd q
= \frac{(\xi^s\circ\wt\Phi_\eps)(0)-\xi^s(0)}{b^s-\wt d_\eps^s(0)-\xi^s(\vc 1)+(\xi^s\circ\wt\Phi_\eps)(1)}.
}
We now differentiate the right-hand side with respect to $\eps$, and then evaluate at $\eps=0^+$.
In light of \eqref{general_eps_func}--\eqref{d_eps_deriv}, and the fact that $(\wt d^s_\eps,\wt\Phi_\eps)\to(d^s,\Phi)$ as $\eps\to0$, the result of this calculation is
\eeq{ \label{next_outcome}
&\Big(\frac{\dd}{\dd\eps}\int_0^{\alpha_\eps(0)}\frac{(\xi^s\circ\Phi_\eps)'(q)}{b^s-d^s_\eps(q)}\ \dd q\Big)\Big|_{\eps=0^+} 
= \frac{X^s(0)}{b^s-d^s(0)}.
}
Now consider the second integral on the right-hand side of \eqref{expressing_A}.
By using the same change of variables $\alpha_\eps^{-1}(q)\mapsto q$ as before and recalling \eqref{ds_to_tilde}, we deduce that
\eq{ 
\int_{\alpha_\eps(0)}^{\alpha_\eps(1)}\frac{(\xi^s\circ\Phi_\eps)'(q)}{b^s-d^s_\eps(q)}\ \dd q
&= \int_0^1\frac{(\xi^s\circ\wt\Phi_\eps)'(q)}{b^s-\wt d^s_\eps(q)-\xi^s(\vc 1)+(\xi^s\circ\wt\Phi_\eps)(1)}\ \dd q.
}
We now differentiate the right-hand side with respect to $\eps$, using the formulas \eqref{general_eps_func}, \eqref{general_eps_deriv}, and  \eqref{d_eps_deriv} to evaluate at $\eps=0^+$.
This results in
\eeq{ \label{part_1_after_derive}
\int_0^1\frac{(X^s)'(q)}{b^s- d^s(q)}\ \dd q+\int_0^1\frac{(\xi^s\circ\Phi)'(q)}{(b^s-d^s(q))^2}\Big(\int_q^1\zeta\big([0,u]\big)(X^s)'(u)\ \dd u-X^s(1)\Big)\, \dd q. 
}
The first integral appearing here can be rewritten using integration by parts:
\eq{
\int_0^1\frac{(X^s)'(q)}{b^s- d^s(q)}\ \dd q
&=\frac{X^s(1)}{b^s- d^s(1)}-\frac{X^s(0)}{b^s-d^s(0)} +\int_0^1\frac{X^s(q)}{(b^s- d^s(q))^2}\zeta\big([0,q])(\xi^s\circ\Phi)'(q)\ \dd q \\
&=\frac{X^s(1)}{b^s}-\frac{X^s(0)}{b^s-d^s(0)} +\int_0^1(\phi^s)'(q)X^s(q)\zeta\big([0,q]\big)\ \dd q.
}
Meanwhile, the second integral appearing in \eqref{part_1_after_derive} can be rewritten using \eqref{little_phi_def} as
\eq{
&\int_0^1\frac{(\xi^s\circ\Phi)'(q)}{(b^s-d^s(q))^2}\int_q^1\zeta\big([0,u]\big)(X^s)'(u)\ \dd u\, \dd q - \Big(\phi^s(1)-\frac{h_s^2+\red{\xi^s(\vc0)}}{(b^s-d^s(0))^2}\Big)X^s(1) \\
&=\int_0^1\Big(\phi^s(u)-\frac{h_s^2+\red{\xi^s(\vc0)}}{(b^s-d^s(0))^2}\Big)\zeta\big([0,u]\big)(X^s)'(u)\  \dd u - \Big(\phi^s(1)-\frac{h_s^2+\red{\xi^s(\vc0)}}{(b^s-d^s(0))^2}\Big)X^s(1) \\
&= -\frac{h_s^2+\red{\xi^s(\vc0)}}{(b^s-d^s(0))^2}\int_0^1\zeta\big([0,u]\big)(X^s)'(u)\ \dd u -\int_0^1(\phi^s)'(u)\zeta\big([0,u]\big)X^s(u)\ \dd u \\
&\phantom{=}\, -\int\phi^s(u)X^s(u)\ \zeta(\dd u) +  \frac{h_s^2+\red{\xi^s(\vc0)}}{(b^s-d^s(0))^2}X^s(1).
}
The outcome of the four previous displays, together with \eqref{zero_outcome} and  \eqref{next_outcome}, is
\eeq{ \label{first_outcome_1}
\frac{\dd}{\dd\eps}\Big(\frac{h_s^2+\red{\xi^s(\vc0)}}{b^s-d^s_\eps(0)}+\int_0^{\alpha_\eps(0)}\frac{(\xi^s\circ\Phi_\eps)'(q)}{b^s-d^s_\eps(q)}\ \dd q+\int_{\alpha_\eps(0)}^{\alpha_\eps(1)}\frac{(\xi^s\circ\Phi_\eps)'(q)}{b^s-d^s_\eps(q)}\ \dd q\Big)\Big|_{\eps=0^+}& \\
= \frac{X^s(1)}{b^s} - \int\phi^s(u)X^s(u)\ \zeta(\dd u&).
}
Next, because of \eqref{ds_late}, the third integral in \eqref{expressing_A} is
\eq{
\int_{\alpha_\eps(1)}^1\frac{(\xi^s\circ\Phi_\eps)'(q)}{b^s-d^s_\eps(q)}\ \dd q
= \log b^s - \log\big(b^s-\xi^s(\vc 1)+(\xi^s\circ\wt\Phi_\eps)(1)\big).
}
Performing the relevant differentiation on the right-hand side, and applying \eqref{general_eps_func}, we obtain
\eeq{ \label{first_outcome_2}
\Big(\frac{\dd}{\dd\eps}\int_{\alpha_\eps(1)}^1\frac{(\xi^s\circ\Phi_\eps)'(q)}{b^s-d^s_\eps(q)}\ \dd q\Big)\Big|_{\eps=0^+}
= \frac{-X^s(1)}{b^s}.
}
Finally, by the same change of variables $\alpha_\eps^{-1}(q)\mapsto q$ as before, the fourth integral in \eqref{expressing_A} is
\eq{
\int_0^1\zeta_\eps\big([0,q]\big)(\theta\circ\Phi_\eps)'(q)\ \dd q
= \int_0^{1}\zeta\big([0,q]\big)(\theta\circ\wt\Phi_\eps)'(q)\ \dd q
+ \theta(\vc 1)-(\theta\circ\wt\Phi_\eps)(1).
}
Thanks to \eqref{general_eps_deriv}, performing the relevant differentiation results in
\eq{
\Big(\frac{\dd}{\dd\eps}\int_0^1\zeta_\eps\big([0,q]\big)(\theta\circ\Phi_\eps)'(q)\ \dd q\Big)\Big|_{\eps=0^+}
&=\int_0^1\zeta\big([0,q]\big)Y'(q)\ \dd q - Y(1), }
where
\eq{
Y(q) &\stackrefp{gamma_theta_def}{\coloneqq} \sum_{t\in\SSS}\Big((\partial^t\theta\circ\Phi)\cdot(\phi^t-\Phi^t)\Big)(q) \\
&\stackref{gamma_theta_def}{=} \sum_{t,s\in\SSS}\lambda^s\Phi^s(q)(\partial^t\xi^s\circ\Phi)(q)\cdot(\phi^t-\Phi^t)(q) \stackref{Xs_def}{=}\sum_{s\in\SSS}\lambda^s\Phi^s(q)X^s(q).
}
Then integration by parts gives
\eq{ 
\int_0^1\zeta\big([0,q]\big)Y'(q)\ \dd q - Y(1)
= -\int_0^1 Y(q)\ \zeta(\dd q).
}
The outcome of the three previous displays is
\eeq{ \label{second_outcome}
\Big(\frac{\dd}{\dd\eps}\int_0^1\zeta_\eps\big([0,q]\big)(\theta\circ\Phi_\eps)'(q)\ \dd q\Big)\Big|_{\eps=0^+}
= -\int_0^1\sum_{s\in\SSS}\lambda^s\Phi^s(q)X^s(q)\ \zeta(\dd q).
}
Putting together \eqref{first_outcome_1}, \eqref{first_outcome_2}, and \eqref{second_outcome} results in
\eq{
\frac{\dd A(\zeta_\eps,\Phi_\eps,\vc b)}{\dd\eps}\Big|_{\eps=0^+}
&\stackrefp{Xs_def}{=} - \frac{1}{2}\int\sum_{s\in\SSS}\lambda^s(\phi^s(q)-\Phi^s(q))X^s(q)\ \zeta(\dd q) \\
&\stackref{Xs_def}{=} -\frac{1}{2}\int\sum_{s,t\in\SSS}(\phi^s(q)-\Phi^s(q))\cdot(\partial^t\partial^s\xi\circ\Phi)(q)\cdot(\phi^t(q)-\Phi^t(q))\ \zeta(\dd q).
}
Of course, the final line is simply the right-hand side of \eqref{parisi_perturb_derivative}.
\renewcommand{\qedsymbol}{$\square$ (Claim and Theorem)}
\end{proof}
\renewcommand{\qedsymbol}{$\square$}

\subsection{Identity satisfied by Crisanti--Sommers minimizers} \label{cs_min_proof}
The identity \eqref{cs_identity} is found by making two types of perturbations to $(\zeta,\Phi)$, which we call ``up-perturbations'' and ``down-perturbations''.
We will initially consider just up-perturbations, as down-perturbations will have a very similar treatment.

Fix $(\zeta,\Phi)$ which satisfies \eqref{gap_assumption} for some $q_*<1$.
Fix $s\in\SSS$, and consider a point $a\in(0,1)$ such that
\begin{subequations}
\label{up_assumptions}
\eeq{ \label{not_reach_1}
\Phi^t(a) < 1 \quad \text{for all $t\in\SSS$}.
}
Fix any $\delta\geq 0$ small enough that
\eeq{ \label{delta_small_assumption}
a-\delta>0.
}
We also assume that $a^+$ is a point of increase for $\Phi^s$:
\eeq{ \label{at_a_assumption}
\Phi^s(a) < \Phi^s(q)  \quad \text{for all $q\in(a,1]$}.
}
\end{subequations}
For sufficiently small $\eps>0$ (this parameter we will ultimately send to $0$), we can define
\eeq{ \label{a_hat_def}
\wh a_\eps \coloneqq \inf\{q\geq a:\, \Phi^s(q) = \Phi^s(a) + \eps + (q-a)\eps^2\}.
}
Then consider the new function
\eeq{ \label{up_perturbation}
 \wh\Phi_\eps^s(q) &\coloneqq \begin{cases}
\Phi^s(q) + (q-a+\delta+\eps) &\text{if }q\in(a-\delta-\eps,a-\delta], \\
\Phi^s(q) + \eps &\text{if } q\in(a-\delta,a], \\
\Phi^s(a) + \eps + (q-a)\eps^2 &\text{if } q\in(a,\wh a_\eps], \\
\Phi^s(q) &\text{otherwise}.
\end{cases}
}
See Figure \ref{up_fig} for an illustration.
Note that $ \wh\Phi^s$ retains the continuity and monotonicity of $\Phi^s$.
For any $t\in\SSS\setminus\{s\}$, we simply take $ \wh\Phi^t = \Phi^t$.

\begin{figure}
\subfloat[Up-perturbation at $a$, for species $s$]{
\label{up_fig}
\centering
\includegraphics[clip, trim = 0 0 1.3in 0, width=0.8\textwidth]{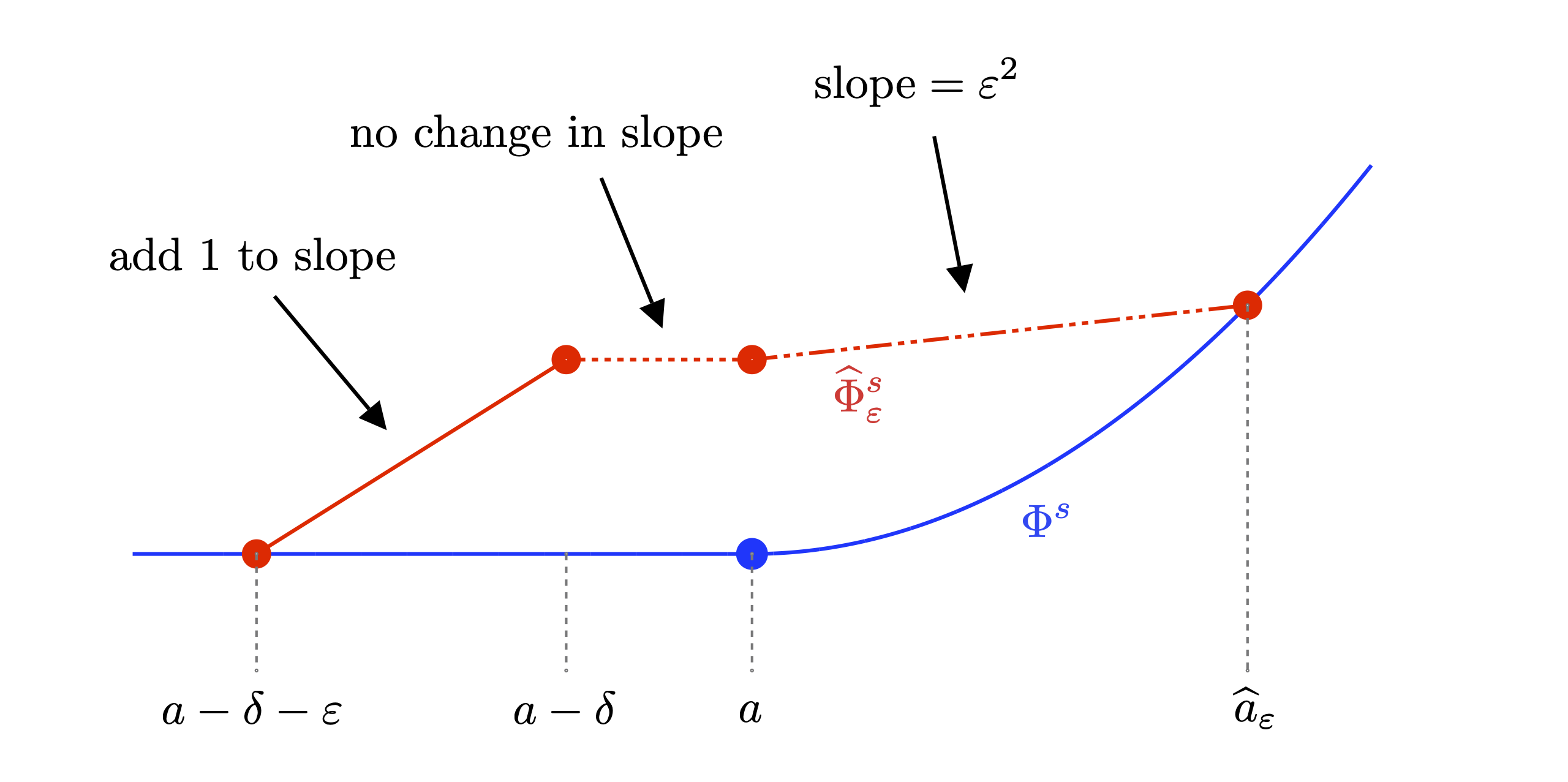}
} \\
\subfloat[Down-perturbation at $a$, for species $s$]{
\label{down_fig}
\centering
\includegraphics[clip, trim = 1.3in 0 0 0, width=0.8\textwidth]{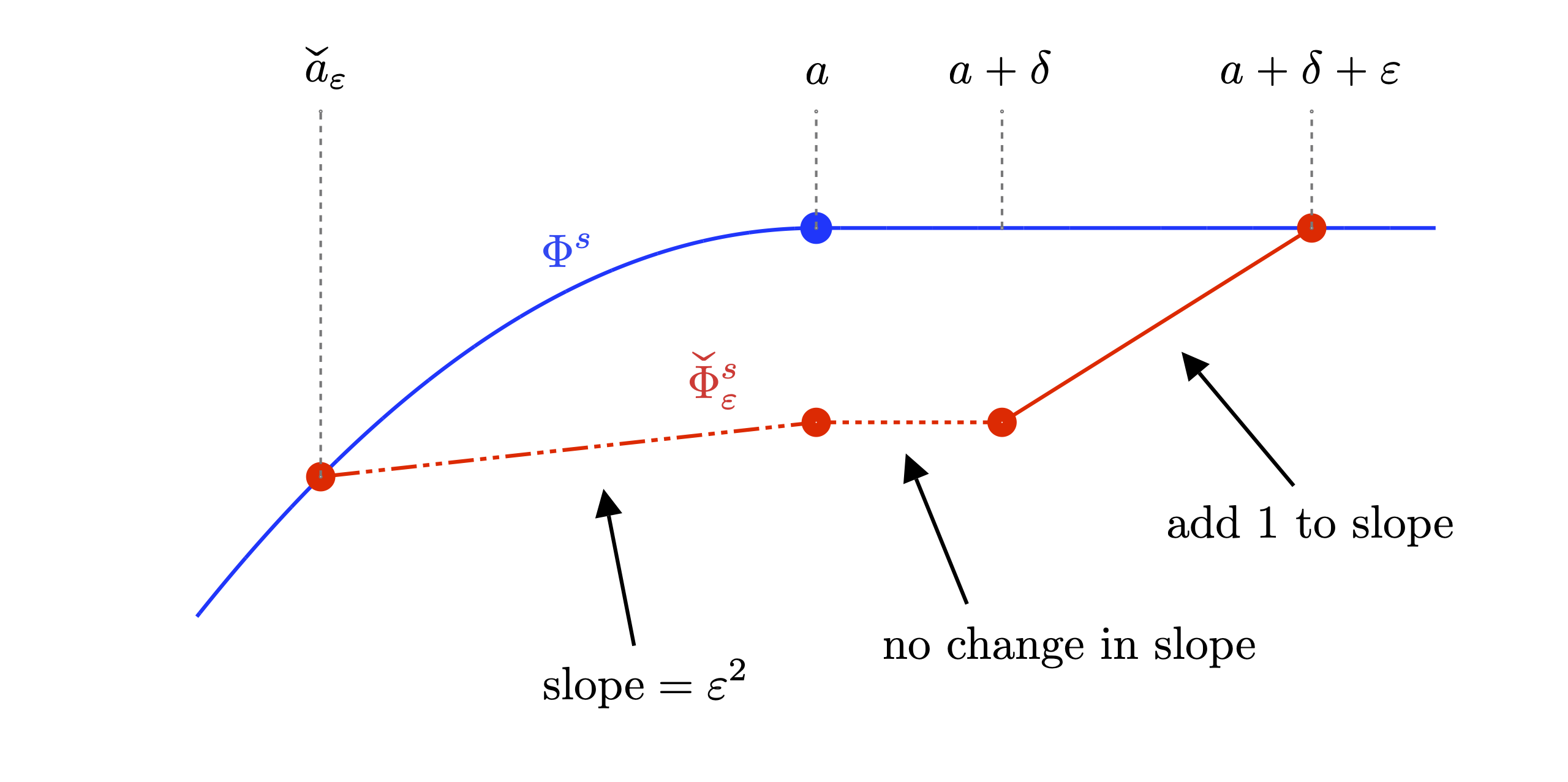}
}
\caption{Perturbations considered in the proof of Theorem \ref{cs_min_thm}. In (a), $\Phi^s$ must be strictly increasing to the right of $a$.
In (b), $\Phi^s$ must be strictly increasing to the left of $a$.
This is to ensure that $\wh a_\eps$ or $\wc a_\eps$ tends to $a$ as $\eps\to0$.}
\end{figure}

It is no longer the case that $\wh\Phi = (\wh\Phi^t)_{t\in\SSS}$ is $\vc\lambda$-admissible.
Therefore, we must perform a reparameterization as follows.
For $q\in[0,1]$, define
\eeq{ \label{alpha_up_def}
\alpha_\eps(q) \coloneqq \sum_{t\in\SSS}\lambda^t\wh\Phi_\eps^t(q)
&\stackref{admissible_def}{=}q + \lambda^s(\wh\Phi^s_\eps(q)-\Phi^s(q)).
}
From \eqref{up_perturbation} we have
\eeq{ \label{function_change}
\wh\Phi^s_\eps(q)-\Phi^s(q)
&= \one_{\{q\in(a-\delta-\eps,a-\delta]\}}(q-a+\delta+\eps)
+ \eps\one_{\{q\in(a-\delta,a]\}} \\
&\phantom{=}+ \one_{\{q\in(a,\wh a_\eps]\}}(\Phi^s(a)+\eps + (q-a)\eps^2-\Phi^s(q)).
}
In particular, for any $q\notin\{a-\delta-\eps,a-\delta,a,\wh a_\eps\}$ at which $\Phi^s$ is differentiable, we have
\eeq{ \label{derivative_change}
( \wh\Phi^s_\eps)'(q) - (\Phi^s)'(q)
= \one_{\{q\in(a-\delta-\eps,a-\delta)\}}
+ \eps^2\one_{\{q\in(a,\wh a_\eps)\}}
- \one_{\{q\in(a,\wh a_\eps)\}}(\Phi^s)'(q).
}
Notice that \eqref{admissible_def} forces $(\Phi^s)'(q) \leq 1/\lambda^s$, and so
\eq{
( \wh\Phi^s_\eps)'(q) - (\Phi^s)'(q)
> -1/\lambda^s.
}
Returning to \eqref{alpha_up_def}, we now see that
$q\mapsto \alpha_\eps(q)$ is strictly increasing.
Since this map is also continuous, the inverse $\alpha_\eps^{-1}$ is well-defined, strictly increasing, and continuous.
Clearly $\alpha_\eps(0)=0$, and so in order for the domain of $\alpha_\eps^{-1}$ to be all of $[0,1]$, we just need that $\alpha_\eps(1) = 1$.
Indeed, this follows from the simple observation that $\alpha_\eps$ disagrees with the identity function only on the interval $(a-\delta-\eps,a+\wh a_\eps)$.
The following lemma---which also demonstrates the purpose of assuming \eqref{at_a_assumption}---then suffices.

\begin{lemma} \label{a_hat_lemma}
As $\eps\searrow0$, we have $\wh a_\eps\searrow a$.
\end{lemma}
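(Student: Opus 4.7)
The plan is to establish two things: that $\wh a_\eps$ is well-defined and strictly greater than $a$ for all sufficiently small $\eps > 0$, and that every subsequential limit of $\wh a_\eps$ as $\eps \searrow 0$ must equal $a$. The strict inequality in \eqref{at_a_assumption} is what drives the second point, and I expect it to be the crux of the argument.

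First I would verify well-definedness. Consider the auxiliary function
\eq{
g_\eps(q) \coloneqq \Phi^s(q) - \Phi^s(a) - \eps - (q-a)\eps^2, \qquad q\in[a,1].
}
At $q=a$ we have $g_\eps(a) = -\eps < 0$. At $q=1$, admissibility forces $\Phi^s(1)=1$, and since $\Phi^s(a)<1$ by \eqref{not_reach_1}, we have $g_\eps(1) = 1 - \Phi^s(a) - \eps - (1-a)\eps^2 > 0$ once $\eps$ is small enough. Continuity of $\Phi^s$ together with the intermediate value theorem then gives $g_\eps(q)=0$ somewhere in $(a,1]$, so the infimum in \eqref{a_hat_def} is taken over a nonempty set. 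Moreover, because $g_\eps$ is continuous and strictly negative at $q=a$, it remains strictly negative on some neighborhood of $a$, which forces $\wh a_\eps > a$.

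Next I would show $\wh a_\eps \to a$ as $\eps \to 0^+$. Suppose toward a contradiction that this fails; then there is a sequence $\eps_k \searrow 0$ and some $b > a$ with $\wh a_{\eps_k}\to b$. By continuity, the infimum $\wh a_{\eps_k}$ is actually attained, so
\eq{
\Phi^s(\wh a_{\eps_k}) = \Phi^s(a) + \eps_k + (\wh a_{\eps_k}-a)\eps_k^2.
}
Sending $k\to\infty$ and using continuity of $\Phi^s$ yields $\Phi^s(b) = \Phi^s(a)$, which directly contradicts \eqref{at_a_assumption} (since $b\in(a,1]$). Combined with the previous paragraph, this gives $\wh a_\eps \searrow a$ in the sense that $\wh a_\eps > a$ for all small $\eps$ and $\wh a_\eps\to a$.

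The hard part will be pinpointing where \eqref{at_a_assumption} is indispensable: without strict increase of $\Phi^s$ immediately to the right of $a$, the equation defining $\wh a_\eps$ could be solved only at points that stay bounded away from $a$ (for instance, if $\Phi^s$ is constant on $[a,a']$ with $a'>a$, then $\wh a_\eps \geq a'$ no matter how small $\eps$ is, since the right-hand side $\Phi^s(a)+\eps+(q-a)\eps^2$ sits strictly above $\Phi^s(a)$). That is precisely the scenario \eqref{at_a_assumption} rules out, and making this role explicit is the only subtle point in the argument. Everything else is standard continuity and the intermediate value theorem.
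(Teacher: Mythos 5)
Your proof is correct. The key ingredient is the same as the paper's—continuity of $\Phi^s$ combined with the strict-increase hypothesis \eqref{at_a_assumption}—but the logical packaging differs: you argue by contradiction, extracting a convergent subsequence $\wh a_{\eps_k}\to b>a$ and passing to the limit in the defining equation to get $\Phi^s(b)=\Phi^s(a)$, which contradicts \eqref{at_a_assumption}. The paper instead gives a direct $\eps$--$\eta$ estimate: for any $\eta>0$ it chooses $\eps$ small enough that $\Phi^s(a+\eta)>\Phi^s(a)+2\eps$, observes $\eps+\eps^2(q-a)\leq 2\eps$ for $\eps\leq 1$ and $q\leq 1$, and concludes $\wh a_\eps<a+\eta$ straight from the definition \eqref{a_hat_def}. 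Both are elementary and valid; the direct version avoids subsequence extraction, while yours has the advantage of making the well-definedness of $\wh a_\eps$ for small $\eps$ explicit (via IVT on $[a,1]$, using $\Phi^s(a)<1$ from \eqref{not_reach_1}), a point the paper treats as implicit. One small thing worth noting explicitly in your argument: the infimum defining $\wh a_\eps$ is attained because the zero set of the continuous function $g_\eps$ is closed—you implicitly rely on this when you equate $\Phi^s(\wh a_{\eps_k})$ with the right-hand side of the defining equation.
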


\begin{proof}
By \eqref{at_a_assumption}, for any $\eta>0$, there exists $\eps>0$ small enough that $\Phi^s(a+\eta) > \Phi^s(a)+2\eps$.
Whenever $\eps\leq1$, we trivially have $\eps + \eps^2(q-a) \leq 2\eps$.
Hence $\Phi^s(a+\eta)>\Phi^s(a)+\eps+\eps^2\eta$, which means $\wh a_\eps < a+\eta$ by definition \eqref{a_hat_def}.
\end{proof}

Now that we know $\alpha_\eps^{-1}\colon [0,1]\to[0,1]$ is an increasing bijection, we can define the \textit{up-perturbation} $(\zeta_\eps,\Phi_\eps)$ by
\eeq{ \label{up_perturbation_rescaled}
\zeta_\eps\big([0,q]\big) \coloneqq \zeta\big([0,\alpha_\eps^{-1}(q)]\big), \qquad
\Phi_\eps(q) \coloneqq \wh\Phi_\eps(\alpha_\eps^{-1}(q)).
}
By definition \eqref{alpha_up_def} of $\alpha_\eps$, the map $\Phi_\eps \colon[0,1]\to[0,1]^\SSS$ is $\vc\lambda$-admissible.
The principal calculation needed to prove Theorem \ref{cs_min_thm} is the following.

\begin{prop}[Up-perturbation] \label{derivative_prop}
Assuming \eqref{up_assumptions}, let $(\zeta_\eps,\Phi_\eps)$ be as in \eqref{up_perturbation_rescaled}.
We then have the following right derivative:
\eeq{ \label{B_deriv}
\frac{\dd B(\zeta_\eps,\Phi_\eps)}{\dd\eps}\Big|_{\eps=0^+}
=\frac{\lambda^s}{2}\bigg[&\Big(\zeta\big([0,a-\delta)\big)-\zeta\big([0,a]\big)\Big)\Big(\red{h_s^2}-\int_0^{a-\delta}\frac{(\Phi^s)'(q)}{(\Delta^2(q))^2}\ \dd q\Big)\\
&-\int_{a-\delta}^a\Big(\zeta\big([0,q]\big)-\zeta\big([0,a]\big)\Big)\frac{(\Phi^s)'(q)}{(\Delta^2(q))^2}\ \dd q \\
&+\int_{a-\delta}^a \zeta\big([0,q]\big)(\xi^s\circ\Phi)'(q)\ \dd q \\
&+\zeta\big([0,a-\delta)\big)\xi^s(\Phi(a-\delta))
-\zeta\big([0,a]\big)\xi^s(\Phi(a))\bigg].
\raisetag{3.5\baselineskip}
}
\end{prop}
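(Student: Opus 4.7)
The plan is to first eliminate the reparameterization by $\alpha_\eps$, then differentiate $B(\zeta,\wh\Phi_\eps)$ term by term at $\eps=0^+$, and finally reconcile with the stated formula via two integration-by-parts identities.

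Substituting $v = \alpha_\eps^{-1}(u)$ in each integral defining $B(\zeta_\eps,\Phi_\eps)$ (as was done in \eqref{ds_to_tilde}) yields
\eq{
B(\zeta_\eps,\Phi_\eps) = \sum_{t\in\SSS}\frac{\lambda^t}{2}\Big[h_t^2\wh\Delta^t_\eps(0) + \int_0^{q_*}\frac{(\wh\Phi^t_\eps)'(q)}{\wh\Delta^t_\eps(q)}\ \dd q + \log\wh\Delta^t_\eps(q_*)\Big] + \frac{1}{2}\int_0^1\zeta\big([0,q]\big)(\xi\circ\wh\Phi_\eps)'(q)\ \dd q,
}
where $\wh\Delta^t_\eps(q) \coloneqq \int_q^1\zeta\big([0,u]\big)(\wh\Phi^t_\eps)'(u)\,\dd u$, and $q_*\in(\wh a_\eps,1)$ can be fixed for all small $\eps$ by \eqref{not_reach_1} and Lemma \ref{a_hat_lemma}. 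Because $\wh\Phi^t_\eps=\Phi^t$ for $t\neq s$, only the $s$-summand depends on $\eps$, and the chain rule likewise restricts the derivative of the $\xi$-integral to the $s$-coordinate.

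The technical core is the identification
\eq{
\frac{\dd\wh\Delta^s_\eps(q)}{\dd\eps}\Big|_{\eps=0^+} = \begin{cases}\zeta\big([0,a-\delta)\big)-\zeta\big([0,a]\big) & q\in[0,a-\delta),\\ -\zeta\big([0,a]\big) & q\in(a-\delta,a],\\ 0 & q\in(a,q_*],\end{cases}
}
which follows from \eqref{derivative_change} together with the identity $\int_a^{\wh a_\eps}\big[(\Phi^s)'(u)-\eps^2\big]\,\dd u = \eps$ (immediate from \eqref{a_hat_def}). The open/closed asymmetry reflects the fact that $u\mapsto\zeta\big([0,u]\big)$ is right-continuous but may jump from the left: the perturbation interval $(a-\delta-\eps,a-\delta]$ samples its left limit at $a-\delta$, namely $\zeta\big([0,a-\delta)\big)$, while the shrinking interval $(a,\wh a_\eps]$ samples its right value at $a$, namely $\zeta\big([0,a]\big)$. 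Given this, the $h_s^2$ and $\log$ terms differentiate immediately (the latter to $0$). The integral $\int_0^{q_*}(\wh\Phi^s_\eps)'/\wh\Delta^s_\eps\,\dd q$ splits via \eqref{derivative_change} into a bulk contribution (handled by the quotient rule against $\partial_\eps\wh\Delta^s_\eps$) plus two edge integrals on $(a-\delta-\eps,a-\delta]$ and $(a,\wh a_\eps]$ whose right derivatives at $\eps=0^+$ are $1/\Delta^s(a-\delta)$ and $-1/\Delta^s(a)$ respectively (the second using the same identity as above). For the $\xi$-term, Lemma \ref{ibp_lemma} rewrites it as $\xi(\vc 1)-\int\xi\circ\wh\Phi_\eps\,\dd\zeta$; a chain-rule computation, handled with the same care for atoms, yields the derivative $-\lambda^s\int_{[a-\delta,a]}\xi^s(\Phi(q))\,\zeta(\dd q)$.

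To match the stated formula, apply two integration-by-parts identities. The Lebesgue--Stieltjes identity
\eq{
\int_{a-\delta}^a\zeta\big([0,q]\big)(\xi^s\circ\Phi)'(q)\,\dd q = \zeta\big([0,a]\big)\xi^s(\Phi(a)) - \zeta\big([0,a-\delta)\big)\xi^s(\Phi(a-\delta)) - \int_{[a-\delta,a]}\xi^s(\Phi(q))\,\zeta(\dd q)
}
converts the $\xi$-contribution into the third and fourth terms of the statement, while the identity $\tfrac{\dd}{\dd q}(1/\Delta^s(q)) = \zeta\big([0,q]\big)(\Phi^s)'(q)/(\Delta^s(q))^2$ gives $\int_{a-\delta}^a\zeta\big([0,q]\big)(\Phi^s)'(q)/(\Delta^s(q))^2\,\dd q = 1/\Delta^s(a)-1/\Delta^s(a-\delta)$, converting the edge contributions into the second summand of the statement. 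The principal obstacle is the careful accounting for atoms of $\zeta$ at $a-\delta$ and $a$; this is precisely what forces the open-versus-closed interval conventions that pervade the final formula.
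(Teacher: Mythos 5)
Your proposal is correct and follows essentially the same route as the paper's proof: eliminate $\alpha_\eps$ by the substitution $v=\alpha_\eps^{-1}(u)$, establish the piecewise derivative of $\wh\Delta^s_\eps$ (this is the paper's Claim~\ref{delta_derivative_claim}), split $\int_0^{q_*}(\wh\Phi^s_\eps)'/\wh\Delta^s_\eps$ into a quotient-rule ``bulk'' term plus edge contributions near $a-\delta$ and $a$ (this mirrors the paper's $X_\eps$/$Y_\eps$ decomposition in Claim~\ref{B_deriv_1_claim}), and finally assemble via the identity $(1/\Delta^s)'=\zeta([0,\cdot])(\Phi^s)'/(\Delta^s)^2$ and a Lebesgue--Stieltjes integration by parts. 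The one genuinely different detail is your treatment of the $\xi$-term: you apply Lemma~\ref{ibp_lemma} first, reducing the integrand to $\xi\circ\wh\Phi_\eps$ before differentiating, and then reverse the integration by parts to obtain the stated form, whereas the paper (Claim~\ref{B_deriv_2_claim}) differentiates $\zeta([0,q])(\xi\circ\wh\Phi_\eps)'(q)$ directly, which requires expanding the derivative of the product $\xi^t(\wh\Phi_\eps)\cdot(\wh\Phi^t_\eps)'$ and handling each piece separately. Your variant is marginally cleaner since it sidesteps the product-rule expansion, at the cost of making the final match with the statement a two-step IBP gymnastic rather than an explicit accounting; the two are computationally equivalent. (Trivial remark: your piecewise formula for $\partial_\eps\wh\Delta^s_\eps$ should assign $-\zeta([0,a])$ on the \emph{closed} interval $[a-\delta,a]$ rather than $(a-\delta,a]$, as a direct check at $q=a-\delta$ shows; this has zero effect on the final answer since it only enters an $\dd q$-integral.)
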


To streamline the proof of Proposition \ref{derivative_prop}, we make one calculation beforehand.

\begin{lemma} \label{dirac_delta_lemma}
Suppose that $(f_\eps)_{\eps\geq0}$ is a family of real-valued functions on $[0,1]$ such that $f_\eps\to f_0$ uniformly as $\eps\searrow0$.
If $f_0$ is right-continuous at $a$, then
\eeq{ \label{dirac_delta}
\lim_{\eps\searrow0}\frac{1}{\eps}\int_a^{\wh a_\eps} f_\eps(q)(\Phi^s)'(q)\ \dd q = f_0(a).
}
\end{lemma}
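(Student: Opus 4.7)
The key observation is that the definition \eqref{a_hat_def} of $\wh a_\eps$ gives an exact formula for the integral of $(\Phi^s)'$ over $[a,\wh a_\eps]$. Indeed, by continuity of $\Phi^s$, the infimum in \eqref{a_hat_def} is attained, so
\eq{
\int_a^{\wh a_\eps}(\Phi^s)'(q)\ \dd q = \Phi^s(\wh a_\eps)-\Phi^s(a) = \eps + (\wh a_\eps - a)\eps^2.
}
Combined with Lemma \ref{a_hat_lemma}, which guarantees $\wh a_\eps\searrow a$, this yields
\eq{
\frac{1}{\eps}\int_a^{\wh a_\eps}(\Phi^s)'(q)\ \dd q = 1 + (\wh a_\eps - a)\eps \ \xrightarrow{\eps\to 0^+}\ 1.
}
This is the structural fact that drives everything.

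With this in hand, the plan is to split the integrand using the constant $f_0(a)$ as a pivot:
\eq{
\frac{1}{\eps}\int_a^{\wh a_\eps} f_\eps(q)(\Phi^s)'(q)\ \dd q
= \frac{1}{\eps}\int_a^{\wh a_\eps}(f_\eps(q)-f_0(a))(\Phi^s)'(q)\ \dd q
+ \frac{f_0(a)}{\eps}\int_a^{\wh a_\eps}(\Phi^s)'(q)\ \dd q.
}
The second term converges to $f_0(a)$ by the display above, so it remains to show the first term vanishes.

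For the first term, I would bound it in absolute value by
\eq{
\Big(\sup_{q\in[a,\wh a_\eps]}|f_\eps(q)-f_0(a)|\Big)\cdot\frac{1}{\eps}\int_a^{\wh a_\eps}(\Phi^s)'(q)\ \dd q.
}
The second factor is bounded (it tends to $1$), so it suffices to show the supremum tends to $0$. Writing $|f_\eps(q)-f_0(a)|\leq \|f_\eps - f_0\|_\infty + |f_0(q)-f_0(a)|$, the first summand vanishes by uniform convergence $f_\eps\to f_0$, and the second vanishes uniformly over $q\in[a,\wh a_\eps]$ by right-continuity of $f_0$ at $a$ combined with $\wh a_\eps\searrow a$. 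This completes the argument. No step is truly delicate; the only point requiring slight care is verifying that the infimum in \eqref{a_hat_def} is actually attained, which follows from continuity of $\Phi^s$ and the fact that $\Phi^s(q)-\Phi^s(a)-\eps-(q-a)\eps^2$ is nonpositive at $q=a$ and tends to a positive value as $q\to 1$ under \eqref{not_reach_1}.
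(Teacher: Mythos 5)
Your proof is correct and takes essentially the same approach as the paper: both rest on the identity $\Phi^s(\wh a_\eps)-\Phi^s(a)=\eps+\eps^2(\wh a_\eps-a)$ from \eqref{a_hat_def}, Lemma \ref{a_hat_lemma}, and the combination of right-continuity of $f_0$ at $a$ with uniform convergence $f_\eps\to f_0$. The paper phrases the argument as a two-sided $\eta$-sandwich while you phrase it as a decomposition with a vanishing error term, but these are the same estimate written in two ways.
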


\begin{proof}
Fix any $\eta>0$.
By right-continuity of $f_0$ and Lemma \ref{a_hat_lemma}, for all $\eps$ sufficiently small we have
\eq{
f(a) - \eta \leq f(q) \leq f(a)+\eta \quad \text{for all $q\in[a,\wh a_\eps]$}.
}
By the hypothesis of uniform convergence, this can be upgraded to
\eq{
f(a) - 2\eta \leq f_\eps(q) \leq f(a)+2\eta \quad \text{for all $q\in[a,\wh a_\eps]$}.
}
Therefore,
\eq{
\int_a^{\wh a_\eps} f_\eps(q)(\Phi^s)'(q)\ \dd q
&\leq \int_a^{\wh a_\eps} (f(a)+2\eta)(\Phi^s)'(q)\ \dd q \\
&= (f(a)+2\eta)(\Phi^s(\wh a_\eps)-\Phi^s(a))
\stackref{a_hat_def}{=} (f(a)+2\eta)(\eps + \eps^2(\wh a_\eps-a)),
}
and similarly
\eq{
\int_a^{\wh a_\eps} f_\eps(q)(\Phi^s)'(q)\ \dd q
&\geq (f(a)-2\eta)(\eps + \eps^2(\wh a_\eps-a)).
}
Since $\eta$ is arbitrary, \eqref{dirac_delta} follows from the two previous displays.
\end{proof}

\begin{proof}[Proof of Proposition \ref{derivative_prop}]
By \eqref{not_reach_1}, we can choose $q_*\in(a,1)$  such that \eqref{gap_assumption} is satisfied.
Then, thanks to Lemma \ref{a_hat_lemma}, we can assume throughout the proof that $\wh a_\eps < q_*$.
Define
\eq{
\Delta_\eps^t(q) \coloneqq \int_q^1\zeta_\eps\big([0,u]\big)(\Phi_\eps^t)'(u)\ \dd u, \quad q\in[0,1], t\in\SSS,
}
so that \eqref{B_def} reads as
\eeq{ \label{B_eps}
B(\zeta_\eps,\Phi_\eps)= \sum_{t\in\SSS}\frac{\lambda^s}{2}\Big[\red{h_t^2}\Delta_\eps^t(0)+\int_0^{q_*}\frac{(\Phi^t_\eps)'(q)}{\Delta^t_\eps(q)}\ \dd q+\log \Delta^t_\eps(q_*)&\Big]\\+\frac{1}{2}\int_0^1\zeta_\eps\big([0,q]\big)(\xi\circ \Phi_\eps)'(q)\ \dd q&.
}
Observe that by the definition \eqref{up_perturbation_rescaled}, we can execute a change of variables:
\eeq{ \label{first_cov}
\Delta^t_\eps(q) &= \int_q^1 \zeta\big([0,\alpha_\eps^{-1}(u)]\big)\frac{(\wh\Phi^t_\eps)'(\alpha_\eps^{-1}(u))}{\alpha_\eps'(\alpha_\eps^{-1}(u))}\ \dd u 
= \int_{\alpha_\eps^{-1}(q)}^1 \zeta\big([0,u]\big)(\wh\Phi_\eps^t)'(u)\ \dd u.
}
In particular, when $t\neq s$ we have $\wh\Phi^t_\eps = \Phi^t$, and so
\eeq{ \label{delta_unchanged}
\Delta^t_\eps(q) = \Delta^t(\alpha_\eps^{-1}(q)) \quad \text{for $t\neq s$}.
}
When $t=s$, we will instead interpret \eqref{first_cov} as
\eeq{ \label{delta_changed}
\Delta^s_\eps(q) = \wh \Delta_\eps^s(\alpha_\eps^{-1}(q)), \quad \text{where} \quad
\wh\Delta_\eps^s(q) &\coloneqq \int_q^1 \zeta\big([0,u]\big)(\wh\Phi^s_\eps)'(u)\ \dd u,
}
which by \eqref{derivative_change} is equal to
\eeq{ \label{rewriting_hat_delta}
\wh\Delta_\eps^s(q)
&=\int_q^1\zeta\big([0,u]\big)\Big[(\Phi^s)'(u)+\one_{\{u\in(a-\delta-\eps,a-\delta)\}}
+ \eps^2\one_{\{u\in(a,\wh a_\eps)\}}
- \one_{\{q\in(a,\wh a_\eps)\}}(\Phi^s)'(u)\Big]\ \dd u.
}
In any case, since $\alpha_\eps^{-1}(q_*) = q_*$ and $\wh\Phi_\eps(q) = \Phi(q)$ for all $q\in[q_*,1]$, \eqref{first_cov} shows that $\Delta^t_\eps(q_*) = \Delta^t(q_*)$ for all $t\in\SSS$.
That is, the logarithm appearing in \eqref{B_eps} does not depend on $\eps$.
To understand the integral appearing before the logarithm, we will need the following identity.

\begin{claim}  \label{delta_derivative_claim}
We have the following right derivative:
\eq{
\frac{\dd\wh\Delta^s_\eps(q)}{\dd\eps}\Big|_{\eps=0^+}
= \begin{cases}
\zeta\big([0,a-\delta)\big) - \zeta\big([0,a]\big) &\text{if $q\in[0,a-\delta)$}, \\
\phantom{\zeta\big([0,a-\delta)\big) - \zeta\big([0,a]\big)}\mathllap{-\zeta\big([0,a]\big)} &\text{if $q\in[a-\delta,a]$}, \\
\phantom{\zeta\big([0,a-\delta)\big) - \zeta\big([0,a]\big)}\mathllap{0} &\text{if $q\in(a,1]$}.
\end{cases}
}
\end{claim}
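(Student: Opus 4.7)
The plan is to split the bracketed integrand in \eqref{rewriting_hat_delta} into its $\eps$-independent piece $(\Phi^s)'(u)$ (which reconstructs $\Delta^s(q)$) and three $\eps$-dependent contributions, one per indicator. Writing
\eq{
\wh\Delta^s_\eps(q) = \Delta^s(q) + I_1(\eps) + I_2(\eps) + I_3(\eps),
}
with
\eq{
I_1(\eps) &\coloneqq \int_{q \vee (a-\delta-\eps)}^{q \vee (a-\delta)} \zeta\big([0,u]\big)\, \dd u, \\
I_2(\eps) &\coloneqq \eps^2 \int_{q \vee a}^{q \vee \wh a_\eps} \zeta\big([0,u]\big)\, \dd u, \\
I_3(\eps) &\coloneqq -\int_{q \vee a}^{q \vee \wh a_\eps} \zeta\big([0,u]\big)(\Phi^s)'(u)\, \dd u,
}
reduces the claim to computing three right derivatives at $\eps = 0^+$, since each $I_j$ vanishes at $\eps=0$.

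For $I_1$, when $q < a-\delta$ and $\eps$ is small enough, the integration interval reduces to $(a-\delta-\eps, a-\delta)$; dividing by $\eps$ and letting $\eps \searrow 0$ yields $\zeta([0, a-\delta))$ via the standard left-limit identity $\lim_{u \to (a-\delta)^-}\zeta([0,u]) = \zeta([0, a-\delta))$, while when $q \geq a-\delta$ the interval is empty and $I_1 \equiv 0$. For $I_2$, Lemma \ref{a_hat_lemma} gives $\wh a_\eps \to a$, whence $|I_2(\eps)| = O(\eps^2)$ and $I_2'(0^+) = 0$ in every regime. For $I_3$, when $q \leq a$ the integration interval collapses to $(a, \wh a_\eps)$ and Lemma \ref{dirac_delta_lemma} applies directly with the $\eps$-independent family $f_\eps = f_0 = \zeta([0, \cdot])$; right-continuity of the CDF at $a$ is automatic, and the lemma delivers $-\zeta([0, a])$. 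When $q > a$, Lemma \ref{a_hat_lemma} forces $\wh a_\eps < q$ for small $\eps$, so $I_3 \equiv 0$.

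Summing the three contributions in each of the regimes $q \in [0, a-\delta)$, $q \in [a-\delta, a]$, and $q \in (a, 1]$ reproduces the piecewise formula in the claim. The main subtlety---and the principal obstacle---is distinguishing the one-sided CDF values $\zeta([0, a-\delta))$ (arising from approaching $a-\delta$ from the left in $I_1$) and $\zeta([0, a])$ (arising from approaching $a$ from the right in $I_3$); the asymmetry in the final formula is precisely the fingerprint of these opposite directions, and care must be taken when $\zeta$ carries atoms at either $a-\delta$ or $a$. Beyond this bookkeeping, the only nontrivial analytic input is Lemma \ref{dirac_delta_lemma}, whose role is to extract the correct $\eps$-linear contribution from the otherwise problematic vanishing interval $(a, \wh a_\eps)$.
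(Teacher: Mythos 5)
Your proof is correct and follows essentially the same route as the paper: both decompose $\wh\Delta^s_\eps(q)-\Delta^s(q)$ into the three $\eps$-dependent contributions (the two indicator terms and the $-(\Phi^s)'$ term), treat the left-boundary term via left-continuity of $u\mapsto\zeta([0,u))$ to extract $\zeta([0,a-\delta))$, discard the $\eps^2$ term as $O(\eps^2)$, and invoke Lemma \ref{dirac_delta_lemma} with the constant family $f_\eps = \zeta([0,\cdot])$ to extract $-\zeta([0,a])$. The only cosmetic difference is that you package the three limits as named functionals $I_1, I_2, I_3$ of $\eps$, which is harmless bookkeeping rather than a different argument.
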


\begin{proofclaim}
Note that $\wh\Delta_0^s = \Delta^s$, and we can read off from \eqref{rewriting_hat_delta} that
\eeq{ \label{delta_difference}
\wh\Delta_\eps^s(q)-\Delta^s(q)
= \int_q^1\zeta\big([0,u]\big)\Big[\one_{\{u\in(a-\delta-\eps,a-\delta)\}}
+ \eps^2\one_{\{u\in(a,\wh a_\eps)\}}
- \one_{\{q\in(a,\wh a_\eps)\}}(\Phi^s)'(u)\Big]\ \dd u.
}
If $q<a-\delta$, then all indicator functions appearing in \eqref{delta_difference} are equal to $1$ for sufficiently small $\eps$.
By the left-continuity of the map $u\mapsto\zeta\big([0,u)\big)$, we have
\eq{
\lim_{\eps\searrow0}\frac{1}{\eps}\int_{a-\delta-\eps}^{a-\delta}\zeta\big([0,u]\big)\ \dd u
= \zeta\big([0,a-\delta)\big).
}
Of course, we trivially have
\eq{
\lim_{\eps\searrow0}\eps\int_{a}^{\wh a_\eps}\zeta\big([0,u]\big)\ \dd u = 0.
}
Finally, by Lemma \ref{dirac_delta_lemma} we have
\eq{
\lim_{\eps\searrow0}\frac{1}{\eps}\int_a^{\wh a_\eps}\zeta\big([0,u]\big)(\Phi^s)'(u)\ \dd u
= \zeta\big([0,a]\big).
}
These three limits together yield the desired result in the case of $q\in[0,a-\delta)$.
If instead $q\in[a-\delta,a]$, then only the final two indicator functions in \eqref{delta_difference} are nonzero, and so only the two previous displays apply.
Finally, if $q>a$, then for all $\eps$ sufficiently small, all indicators in \eqref{delta_difference} are equal to 0.
Here we have again used Lemma \ref{a_hat_lemma}.
\end{proofclaim}

We now return to analyzing $B(\zeta_\eps,\Phi_\eps)$.
Let us consider the first part of the right-hand side of \eqref{B_eps}.
For any $t\neq s$, nothing has changed:
\eq{
&\red{h_t^2}\Delta^t_\eps(0)+\int_0^{q_*}\frac{(\Phi^t_\eps)'(q)}{\Delta^t_\eps(q)}\ \dd q \\
&\stackref{delta_unchanged}{=}
\red{h_t^2}\Delta^t(0)+\int_0^{q_*}
\frac{1}{\Delta^t(\alpha_\eps^{-1}(q))}\cdot\frac{(\Phi^t_\eps)'(\alpha_\eps^{-1}(q)))}{\alpha_\eps'(\alpha_\eps^{-1}(q))}\ \dd q \\
&\stackrefp{delta_unchanged}{=}\red{h_t^2}\Delta^t(0)+\int_0^{q_*}\frac{(\Phi^t)'(q)}{\Delta^t(q)}\ \dd q.
}
If $t=s$, we instead have
\eq{
&\red{h_s^2}\Delta^s_\eps(0)+\int_0^{q_*}\frac{(\Phi^s_\eps)'(q)}{\Delta^s_\eps(q)}\ \dd q \\
&\stackref{delta_changed}{=}
\red{h_s^2}\wh\Delta^s_\eps(0)+\int_0^{q_*}\frac{1}{\wh\Delta^s_\eps(\alpha_\eps^{-1}(q))}\frac{(\wh\Phi^s_\eps)'(\alpha_\eps^{-1}(q)))}{\alpha_\eps'(\alpha_\eps^{-1}(q))}\ \dd q  \\
&\stackrefp{delta_changed}{=} \red{h_s^2}\wh\Delta^s_\eps(0)+\int_0^{q_*}\frac{(\wh\Phi^s_\eps)'(q)}{\wh\Delta^s_\eps(q)}\ \dd q.
}
Of course, the first term in the final line is subject to Claim \ref{delta_derivative_claim}.
For the second term, we make the following calculation.

\begin{claim} \label{B_deriv_1_claim}
We have the following right derivative:
\eeq{ \label{B_deriv_1}
\frac{\dd}{\dd\eps}\Big(\int_0^{q_*}\frac{(\wh\Phi^s_\eps)'(q)}{\wh\Delta^s_\eps(q)}\ \dd q\Big)\Big|_{\eps=0+}
= &-\Big(\zeta\big([0,a-\delta)\big)-\zeta\big([0,a]\big)\Big)\int_0^{a-\delta}\frac{(\Phi^s)'(q)}{(\Delta^2(q))^2}\ \dd q \\
&-\int_{a-\delta}^a\Big(\zeta\big([0,q]\big)-\zeta\big([0,a]\big)\Big)\frac{(\Phi^s)'(q)}{(\Delta^2(q))^2}\ \dd q.
}
\end{claim}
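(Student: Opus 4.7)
The plan is to split
\eq{
\int_0^{q_*}\frac{(\wh\Phi^s_\eps)'(q)}{\wh\Delta^s_\eps(q)}\,\dd q-\int_0^{q_*}\frac{(\Phi^s)'(q)}{\Delta^s(q)}\,\dd q
}
into contributions from the five regions $[0,a-\delta-\eps]$, $[a-\delta-\eps,a-\delta]$, $[a-\delta,a]$, $[a,\wh a_\eps]$, and $[\wh a_\eps,q_*]$ dictated by the definition \eqref{up_perturbation} of $\wh\Phi^s_\eps$. Outside the two $\eps$-shrinking boundary intervals, $(\wh\Phi^s_\eps)'$ coincides with $(\Phi^s)'$, so the integrand reduces to $(\Phi^s)'(q)\bigl[1/\wh\Delta^s_\eps(q)-1/\Delta^s(q)\bigr]$; on $(a-\delta-\eps,a-\delta)$ the numerator is instead $(\Phi^s)'(q)+1$, and on $(a,\wh a_\eps)$ it is $\eps^2$. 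From \eqref{delta_difference} one has $\wh\Delta^s_\eps\to\Delta^s$ uniformly at rate $O(\eps)$, with $\wh\Delta^s_\eps=\Delta^s$ identically on $[\wh a_\eps,q_*]$ and $\Delta^s(q)\geq 1-\Phi^s(q_*)>0$ on $[0,q_*]$; in particular the fifth region contributes exactly $0$.

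On the two bulk pieces, Claim \ref{delta_derivative_claim} combined with the uniform $O(\eps)$ bound and dominated convergence produces
\eq{
\lim_{\eps\to 0^+}\frac{1}{\eps}\int(\Phi^s)'(q)\Big[\frac{1}{\wh\Delta^s_\eps(q)}-\frac{1}{\Delta^s(q)}\Big]\,\dd q=-\int(\Phi^s)'(q)\frac{\frac{\dd}{\dd\eps}\wh\Delta^s_\eps(q)\big|_{\eps=0^+}}{(\Delta^s(q))^2}\,\dd q,
}
which evaluates to the first term on the right-hand side of \eqref{B_deriv_1} from the interval $[0,a-\delta)$, and to $\zeta\big([0,a]\big)\int_{a-\delta}^{a}(\Phi^s)'(q)/(\Delta^s(q))^2\,\dd q$ from $[a-\delta,a]$. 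On the two shrinking boundary intervals, Lemma \ref{dirac_delta_lemma} extracts the ``delta-function'' contributions: over $(a-\delta-\eps,a-\delta)$, the leading part $\eps^{-1}\int_{a-\delta-\eps}^{a-\delta}\dd q/\wh\Delta^s_\eps(q)$ tends to $1/\Delta^s(a-\delta)$, while the residual $(\Phi^s)'[1/\wh\Delta^s_\eps-1/\Delta^s]$ piece is $O(\eps^2)$ over an interval of length $\eps$ and thus vanishes; over $(a,\wh a_\eps)$, the $\eps^2/\wh\Delta^s_\eps$ term is negligible because $\wh a_\eps-a\to 0$ by Lemma \ref{a_hat_lemma}, and $\eps^{-1}\int_a^{\wh a_\eps}(\Phi^s)'(q)/\Delta^s(q)\,\dd q\to 1/\Delta^s(a)$ directly by Lemma \ref{dirac_delta_lemma}. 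The net boundary contribution is therefore $1/\Delta^s(a-\delta)-1/\Delta^s(a)$.

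Assembling all pieces yields the first term of \eqref{B_deriv_1} plus the extra quantity
\eq{
\zeta\big([0,a]\big)\int_{a-\delta}^{a}\frac{(\Phi^s)'(q)}{(\Delta^s(q))^2}\,\dd q+\frac{1}{\Delta^s(a-\delta)}-\frac{1}{\Delta^s(a)}.
}
The final step rewrites this via the antiderivative identity $(1/\Delta^s)'(q)=\zeta\big([0,q]\big)(\Phi^s)'(q)/(\Delta^s(q))^2$ (immediate from \eqref{delta_def}), which integrates to $1/\Delta^s(a)-1/\Delta^s(a-\delta)=\int_{a-\delta}^{a}\zeta\big([0,q]\big)(\Phi^s)'(q)/(\Delta^s(q))^2\,\dd q$; substitution then produces precisely $-\int_{a-\delta}^{a}\bigl(\zeta\big([0,q]\big)-\zeta\big([0,a]\big)\bigr)(\Phi^s)'(q)/(\Delta^s(q))^2\,\dd q$, matching the second term of \eqref{B_deriv_1}. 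The hard part of the argument is the careful bookkeeping on the two shrinking intervals, where both numerator and denominator depend on $\eps$ and the length $\wh a_\eps-a$ is controlled only indirectly through the implicit equation \eqref{a_hat_def}; this is exactly where Lemmas \ref{a_hat_lemma} and \ref{dirac_delta_lemma} are indispensable.
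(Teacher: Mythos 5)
Your argument is correct and follows essentially the same approach as the paper's. The paper packages the computation as an algebraic split of the integrand into $X_\eps(q) = \frac{(\wh\Phi^s_\eps)'(q)-(\Phi^s)'(q)}{\eps\wh\Delta^s_\eps(q)}$ (numerator change) and $Y_\eps(q)$ (denominator change), while you organize it by the five intervals where the perturbation behaves differently; since $(\wh\Phi^s_\eps)'-(\Phi^s)'$ is supported exactly on the two shrinking boundary intervals and the denominator change governs the bulk, the two decompositions are the same thing viewed differently. All four key ingredients coincide: the $O(\eps)$ uniform closeness of $\wh\Delta^s_\eps$ to $\Delta^s$, dominated convergence via Claim~\ref{delta_derivative_claim} on the bulk, the ``delta-function'' extraction on the boundary via Lemmas~\ref{a_hat_lemma} and~\ref{dirac_delta_lemma}, and the closing rewriting of $1/\Delta^s(a-\delta)-1/\Delta^s(a)$ by the antiderivative identity $(1/\Delta^s)'=\zeta\big([0,\cdot]\big)(\Phi^s)'/(\Delta^s)^2$. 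One small notational remark: the statement's $(\Delta^2(q))^2$ is a typo in the paper for $(\Delta^s(q))^2$, which you implicitly corrected.
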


\begin{proofclaim}
Intuitively, we pass the derivative through the integral, and then apply the quotient rule.
To be perfectly rigorous, though, let us do this carefully.
We begin by writing
\eq{
\frac{1}{\eps}\Big(\frac{(\wh\Phi^s_\eps)'(q)}{\wh\Delta^s_\eps(q)}
- \frac{(\Phi^s)'(q)}{\Delta^s(q)}\Big)
= \underbrace{\frac{(\wh\Phi^s_\eps)'(q)-(\Phi^s)'(q)}{\eps\wh\Delta^s_\eps(q)}}_{X_\eps(q)}
- \underbrace{\frac{(\Phi^s)'(q)}{\Delta^s(q)\wh\Delta^s_\eps(q)}\Big(\frac{\wh\Delta^s_\eps(q)-\Delta^s(q)}{\eps}\Big)}_{Y_\eps(q)}.
}
Now we analyze $X_\eps$ and $Y_\eps$ separately.

We begin with $X_\eps$.
Observe that because of \eqref{derivative_change}, we have
\eq{
\int_0^{q_*}X_\eps(q)\ \dd q
= \frac{1}{\eps}\int_{a-\delta-\eps}^{a-\delta}\frac{1}{\wh\Delta^s_\eps(q)}\ \dd q
+ \eps\int_a^{\wh a_\eps}\frac{1}{\wh\Delta^s_\eps(q)}\ \dd q
- \int_a^{\wh a_\eps}\frac{(\Phi^s)'(q)}{\wh\Delta^s_\eps(q)}\ \dd q.
}
To control the first integral on the right-hand side, we observe that
\eeq{ \label{uniform_convergence}
\Big|\frac{1}{\wh\Delta^s_\eps(q)}-\frac{1}{\Delta^s(q)}\Big|
= \Big|\frac{\wh\Delta^s_\eps(q)-\Delta^s(q)}{\wh\Delta^s_\eps(q)\Delta^s(q)}\Big|
&\stackref{delta_difference}{\leq}
\frac{\eps+\eps^2+(\wh a_\eps-a)/\lambda^s}{\big(\Delta^s(q)-(\wh a_\eps-a)/\lambda^s\big)\Delta^s(q)} \\
&\stackrefp{delta_difference}{\leq}
\frac{\eps+\eps^2+(\wh a_\eps-a)/\lambda^s}{\big(\Delta^s(q_*)-(\wh a_\eps-a)/\lambda^s\big)\Delta^s(q_*)} = o(1).
}
Hence
\eq{
\frac{1}{\eps}\int_{a-\delta-\eps}^{a-\delta}\frac{1}{\wh\Delta^s_\eps(q)}\ \dd q
&= \frac{1}{\eps}\int_{a-\delta-\eps}^{a-\delta}\frac{1}{\Delta^s(q)}\ \dd q + o(1), \quad \text{and} \quad
\eps\int_a^{\wh a_\eps}\frac{1}{\wh\Delta^s_\eps(q)}\ \dd q
= O(\eps).
}
And by the continuity of $q\mapsto\Delta^s(q)$ we have
\eq{
\lim_{\eps\searrow0}\frac{1}{\eps}\int_{a-\delta-\eps}^{a-\delta}\frac{1}{\Delta^s(q)}\ \dd q
= \frac{1}{\Delta^s(a-\delta)}.
}
Finally, thanks to the uniform convergence shown in \eqref{uniform_convergence}, Lemma \ref{dirac_delta_lemma} gives
\eq{
\lim_{\eps\searrow0}\int_{a}^{\wh a_\eps}\frac{(\Phi^s)'(q)}{\wh\Delta^s_\eps(q)}\ \dd q
= \frac{1}{\Delta^s(q)}.
}
In summary, the three previous displays yield
\eeq{ \label{X_deriv}
\lim_{\eps\searrow0}\int_0^{q_*}X_\eps(q)\ \dd q
= \frac{1}{\Delta^s(a-\delta)}-\frac{1}{\Delta^s(a)}
&= -\int_{a-\delta}^a\Big(\frac{1}{\Delta^s(q)}\Big)'\ \dd q \\
&= -\int_{a-\delta}^a\zeta\big([0,q]\big)\frac{(\Phi^s)'(q)}{(\Delta^s(q))^2}\ \dd q.
}
Next we analyze $Y_\eps$.
We have
\eq{
&\bigg|Y_\eps(q) -\frac{(\Phi^s)'(q)}{(\Delta^s(q))^2}\frac{\dd\wh\Delta^s_\eps(q)}{\dd\eps}\Big|_{\eps=0}\bigg| \\
&\leq
\bigg|\frac{(\Phi^s)'(q)}{\Delta^s(q)\wh\Delta^s_\eps(q)}\Big(\frac{\wh\Delta^s_\eps(q)-\Delta^s(q)}{\eps}-\frac{\dd\wh\Delta^s_\eps(q)}{\dd\eps}\Big|_{\eps=0}\Big)\bigg|
+\bigg|\Big(\frac{(\Phi^s)'(q)}{\Delta^s(q)\wh\Delta^s_\eps(q)}-\frac{(\Phi^s)'(q)}{(\Delta^s(q))^2}\Big)\frac{\dd\wh\Delta^s_\eps(q)}{\dd\eps}\Big|_{\eps=0}\bigg|
 \\
&\leq 
\frac{1/\lambda^s}{\Delta^s(q_*)(\Delta^s(q_*)-(\wh a_\eps-a))}\bigg|\frac{\wh\Delta^s_\eps(q)-\Delta^s(q)}{\eps}-\frac{\dd\wh\Delta^s_\eps(q)}{\dd\eps}\Big|_{\eps=0}\bigg|
+\frac{1/\lambda^s}{\Delta^s(q_*)}\Big|\frac{1}{\wh\Delta^s_\eps(q)}-\frac{1}{\Delta^s(q)}\Big|.
}
This final line tends to $0$ as $\eps\searrow0$, and by \eqref{uniform_convergence} is bounded by a constant uniformly in $\eps$ and $q$.
Therefore, by dominated convergence and Claim \ref{delta_derivative_claim},
\eeq{ \label{Y_deriv}
\lim_{\eps\searrow0}\int_0^{q_*}Y_\eps(q)\ \dd q
= \Big(\zeta\big([0,a-\delta)\big) &- \zeta\big([0,a]\big)\Big)\int_0^{a-\delta}\frac{(\Phi^s)'(q)}{(\Delta^s(q))^2}\ \dd q \\
&- \zeta\big([0,a]\big)\int_{a-\delta}^a\frac{(\Phi^s)'(q)}{(\Delta^s(q))^2}\ \dd q.
}
Subtracting \eqref{Y_deriv} from \eqref{X_deriv} results in \eqref{B_deriv_1}.
\end{proofclaim}

We now turn our attention to the second integral in \eqref{B_eps}.

\begin{claim} \label{B_deriv_2_claim}
We have the following right derivative:
\eeq{ \label{B_deriv_2}
&\frac{\dd}{\dd\eps}\Big(\int_0^1\zeta_\eps\big([0,q]\big)(\xi\circ\Phi_\eps)'(q)\ \dd q\Big)\Big|_{\eps=0^+} \\
&=\lambda^s\bigg[\int_{a-\delta}^a \zeta\big([0,q]\big)(\xi^s\circ\Phi)'(q)\ \dd q
+\zeta\big([0,a-\delta)\big)\xi^s(\Phi(a-\delta))
-\zeta\big([0,a]\big)\xi^s(\Phi(a))\bigg].
}
\end{claim}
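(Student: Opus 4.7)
The plan is to eliminate both the reparameterization $\alpha_\eps^{-1}$ and the $\eps$-dependent measure $\zeta_\eps$ in a single integration by parts, and then extract the derivative by Taylor expanding in $\eps$. First, performing the change of variables $u = \alpha_\eps^{-1}(q)$ exactly as in \eqref{first_cov} converts the left-hand side of \eqref{B_deriv_2} to
\begin{equation*}
I(\eps)\coloneqq\int_0^1 \zeta\big([0,u]\big)(\xi\circ\wh\Phi_\eps)'(u)\,\dd u.
\end{equation*}
Because $\xi$ has nonnegative partial derivatives on $[0,1]^\SSS$ by \eqref{xi_def}, and each $\wh\Phi^t_\eps$ is Lipschitz and non-decreasing, the composition $\xi\circ\wh\Phi_\eps$ is Lipschitz, nonnegative, and non-decreasing on $[0,1]$. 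The identity \eqref{by_parts_with_quantiles_0} of Lemma \ref{ibp_lemma}, combined with the pushforward identity $\int_0^1 f(Q_\zeta(z))\,\dd z = \int_0^1 f(q)\,\zeta(\dd q)$ and $\wh\Phi_\eps(1)=\vc 1$, then gives
\begin{equation*}
I(\eps) = \xi(\vc 1) - \int_0^1 \xi(\wh\Phi_\eps(q))\,\zeta(\dd q).
\end{equation*}

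Since $\wh\Phi_\eps$ agrees with $\Phi$ in every coordinate except $s$ and $|\wh\Phi^s_\eps - \Phi^s|\leq \eps$ uniformly, Taylor's theorem yields
\begin{equation*}
\xi(\wh\Phi_\eps(q)) - \xi(\Phi(q)) = \lambda^s\xi^s(\Phi(q))\big(\wh\Phi^s_\eps(q)-\Phi^s(q)\big) + O(\eps^2)
\end{equation*}
uniformly in $q$, and the $O(\eps^2)$ remainder is negligible after integration against the probability measure $\zeta$. Writing $g_\eps(q) \coloneqq (\wh\Phi^s_\eps(q)-\Phi^s(q))/\eps$, the explicit formula \eqref{function_change} shows $\|g_\eps\|_\infty \leq 1$, and direct inspection of each piece there (together with Lemma \ref{a_hat_lemma}) gives the pointwise limit $g_\eps(q)\to \mathbf{1}_{[a-\delta, a]}(q)$ on $[0,1]$: for $q\in[a-\delta,a]$ we have $g_\eps(q)=1$ (including at $q=a-\delta$, where the defining formula yields $\eps/\eps=1$), while the supports $(a-\delta-\eps,a-\delta)$ and $(a,\wh a_\eps]$ both shrink to empty as $\eps \searrow 0$. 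Dominated convergence then produces
\begin{equation*}
\frac{\dd I}{\dd\eps}\Big|_{\eps=0^+} = -\lambda^s\int_{[a-\delta, a]}\xi^s(\Phi(q))\,\zeta(\dd q).
\end{equation*}

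It remains to verify that this equals the right-hand side of \eqref{B_deriv_2}. Lebesgue--Stieltjes integration by parts on $[a-\delta,a]$ with the right-continuous $G(q)=\zeta([0,q])$ and the Lipschitz $g=\xi^s\circ\Phi$ yields
\begin{equation*}
\int_{a-\delta}^{a} \zeta([0,q])(\xi^s\circ\Phi)'(q)\,\dd q = \zeta([0,a])\xi^s(\Phi(a)) - \zeta([0,a-\delta])\xi^s(\Phi(a-\delta)) - \int_{(a-\delta,a]} \xi^s(\Phi(q))\,\zeta(\dd q),
\end{equation*}
and substituting into \eqref{B_deriv_2} together with $\zeta([0,a-\delta])-\zeta([0,a-\delta))=\zeta(\{a-\delta\})$ collapses all boundary terms and reassembles the rest into $-\lambda^s\int_{[a-\delta,a]}\xi^s(\Phi(q))\,\zeta(\dd q)$.

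The main technical point is the bookkeeping of possible atoms of $\zeta$ at the endpoints $a-\delta$ and $a$. The asymmetric open--closed choice $[0,a-\delta)$ versus $[0,a]$ in \eqref{B_deriv_2} is precisely calibrated so that $\zeta(\{a-\delta\})$ enters the boundary term on the same side as the Stieltjes integration by parts places it; correspondingly, on the perturbation side, the interval $(a-\delta-\eps,a-\delta]$ in \eqref{function_change} is designed so that $q=a-\delta$ receives full weight $\eps$ (so $g_\eps(a-\delta)=1$), while its open left tail $(a-\delta-\eps,a-\delta)$ contributes only $o(\eps)$ because both $|q-a+\delta+\eps|\leq\eps$ and $\zeta((a-\delta-\eps,a-\delta))\to 0$.
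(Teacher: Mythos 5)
Your proof is correct, and it takes a genuinely different route from the paper's. The paper keeps the derivative $(\xi\circ\wh\Phi_\eps)'$ in the integrand, expands it via the chain rule, decomposes the difference quotient by add-and-subtract, and then handles the three small intervals by the mean value theorem, explicit estimates, and Lemma \ref{dirac_delta_lemma}. You instead hit the whole integral with Lemma \ref{ibp_lemma} first, which converts it to $\xi(\vc 1) - \int_0^1 \xi(\wh\Phi_\eps(q))\,\zeta(\dd q)$ and thereby removes the derivative of the perturbed map from the integrand entirely; after that, a pointwise Taylor expansion of $\xi$ in the codomain plus dominated convergence (using $g_\eps\to\mathbf{1}_{[a-\delta,a]}$) finishes the computation of the derivative, and a Stieltjes integration by parts on $[a-\delta,a]$ reassembles it into the form stated in the claim. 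This is a cleaner line of attack: it localizes all the endpoint bookkeeping (atoms at $a-\delta$ and $a$, left- vs.\ right-limits) into a single, transparent Lebesgue--Stieltjes integration by parts, rather than spreading it across several limit computations. One small inaccuracy: $\|g_\eps\|_\infty\leq 1$ is not quite right, because on $(a,\wh a_\eps]$ one has $g_\eps(q) = 1 + (q-a)\eps - (\Phi^s(q)-\Phi^s(a))/\eps$, which can slightly exceed $1$ (by $O(\eps)$) when $\Phi^s$ grows very slowly just to the right of $a$; but $\|g_\eps\|_\infty\leq 1 + \eps(\wh a_\eps - a)$ still holds, so any fixed constant eventually dominates and the dominated convergence step is unaffected.
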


\begin{proofclaim}
By the same change of variables $u = \alpha_\eps^{-1}(q)$ we have used before, we can write
\eq{
\int_0^1\zeta_\eps\big([0,q]\big)(\xi\circ\Phi_\eps)'(q)\ \dd q
= \int_0^1 \zeta\big([0,u]\big)(\xi\circ\wh\Phi_\eps)'(u)\ \dd u.
}
Upon recalling the definition of $\xi^t$ from \eqref{gamma_theta_def}, we see that the chain rule gives
\eq{
(\xi\circ\wh\Phi_\eps)'(q)
= \sum_{t\in\SSS}\lambda^t\xi^t(\wh\Phi_\eps(q))(\wh\Phi^t_\eps)'(q).
}
Therefore, we wish to understand the quantity
\eeq{ \label{part_2}
&\sum_{t\in\SSS}\lambda^t\Big[\xi^t(\wh\Phi_\eps(q))(\wh\Phi^t_\eps)'(q)-\xi^t(\Phi(q))(\Phi^t)'(q)\Big] \\
&= \sum_{t\in\SSS}\lambda^t\Big[\big[\xi^t(\wh\Phi_\eps(q))-\xi^t(\Phi(q))\big](\wh\Phi^t_\eps)'(q) + \xi^t(\Phi(q))\big[(\wh\Phi^t_\eps)'(q)-(\Phi^t)'(q)\big] \\
&= \sum_{t\in\SSS}\lambda^t\big[\xi^t(\wh\Phi_\eps(q))-\xi^t(\Phi(q))\big](\wh\Phi^t_\eps)'(q) +
\lambda^s\xi^s(\Phi(q))\big[(\wh\Phi^s_\eps)'(q)-(\Phi^s)'(q)\big]\Big],
}
where in the last line we used the fact that $(\wh\Phi_\eps^t)'(q) - (\Phi^t)'(q) = 0$ for $t\neq s$.
Given any $t\in\SSS$, by the mean value theorem, there is some vector $\vc x_{\eps,t} = (x^r_{\eps,t})_{r\in\SSS}$ such that $x^r_{\eps,t} = \Phi^r(q)$ for all $r\neq s$, $\Phi^s(q) \leq x^s_{\eps,t} \leq \wh\Phi^s_\eps(q)$, and
\eq{
\xi^t(\wh\Phi_\eps(q))-\xi^t(\Phi(q))
= (\wh\Phi^s_\eps(q)-\Phi^s(q))\frac{\partial\xi^t}{\partial q^s}(\vc x_{\eps,t}).
}
Recalling \eqref{function_change} and using the fact that $\lambda^t\partial\xi^t/\partial q^s = \lambda^s\partial\xi^s/\partial q^t$, we have
\eeq{ \label{part_2_1}
\frac{1}{\eps}\int_0^1 \zeta\big([0,q]\big)&\sum_{t\in\SSS}\lambda^t\big[\xi^t(\wh\Phi_\eps(q))-\xi^t(\Phi(q))\big](\wh\Phi^t_\eps)'(q)\ \dd q \\
= \lambda^s\bigg[
\frac{1}{\eps}&\int_{a-\delta-\eps}^{a-\delta}\zeta\big([0,q]\big)(q-a+\delta+\eps)\frac{\partial\xi^s}{\partial q^t}(\vc x_{\eps,t})(\wh\Phi^t_\eps)'(q)\ \dd q \\
+&\int_{a-\delta}^a\zeta\big([0,q]\big)\sum_{t\in\SSS}\frac{\partial\xi^s}{\partial q^t}(\vc x_{\eps,t})(\wh\Phi^t_\eps)'(q)\ \dd q \\
-\frac{1}{\eps}&\int_{a}^{\wh a_\eps}\zeta\big([0,q]\big)(\Phi^s(a)+\eps+(q-a)\eps^2-\Phi^s(q))(\wh\Phi^t_\eps)'(q)\ \dd q\bigg].
}
The first integrand on the right-hand side is bounded by $C\eps$ for some constant $C$ depending only on $\xi$.
Since the interval of integration is itself of length $\eps$, we thus have
\eq{
\lim_{\eps\searrow0}\frac{1}{\eps}\int_{a-\delta-\eps}^{a-\delta}\zeta\big([0,q]\big)(q-a+\delta+\eps)\frac{\partial\xi^s}{\partial q^t}(\vc x_{\eps,t})(\wh\Phi^t_\eps)'(q)\ \dd q = 0.
}
Meanwhile, the third integrand on the right-hand side of \eqref{part_2_1} is at most $C(\eps + \eps^2)$, and the interval of integration has vanishing length by Lemma \ref{a_hat_lemma}.
Therefore, this integral also vanishes in the limit:
\eq{
\lim_{\eps\searrow0}\frac{1}{\eps}\int_{a}^{\wh a_\eps}\zeta\big([0,q]\big)(\Phi^s(a)+\eps+(q-a)\eps^2-\Phi^s(q))(\wh\Phi^t_\eps)'(q)\ \dd q = 0.
}
The second (and only remaining) integrand on the right-hand side of \eqref{part_2_1} is bounded by a constant, and converges (as $\eps\searrow0$) to
\eq{
\zeta\big([0,q]\big)\sum_{t\in\SSS}\frac{\partial\xi^s}{\partial q_t}(\Phi(q))(\Phi^t)'(q)
= \zeta\big([0,q]\big)(\xi^s\circ\Phi)'(q).
}
By dominated convergence, we have now argued that
\eeq{ \label{part_2_1_complete}
\lim_{\eps\searrow0}\frac{1}{\eps}\int_0^1 \zeta\big([0,q]\big)\sum_{t\in\SSS}\lambda^t\big[\xi^t(\wh\Phi_\eps(q))-\xi^t(\Phi(q))\big](\wh\Phi^t_\eps)'(q)\ \dd q
= \lambda^s\int_{a-\delta}^a\zeta\big([0,q]\big)(\xi^s\circ\Phi)'(q)\ \dd q.
}
It remains to consider the second term on the right-hand side of \eqref{part_2}.
Using \eqref{derivative_change}, we find that
\eq{
\frac{1}{\eps}
\int_0^1\zeta\big([0,q]\big)\xi^s(\Phi(q))\big[(\wh\Phi^s_\eps)'(q)-(\Phi^s)'(q)\big]\ \dd q
= \frac{1}{\eps}&\int_{a-\delta-\eps}^{a-\delta}\zeta\big([0,q]\big)\xi^s(\Phi(q))\ \dd q \\
+ \eps&\int_a^{\wh a_\eps}\zeta\big([0,q]\big)\xi^s(\Phi(q))\ \dd q \\
- \frac{1}{\eps}&\int_a^{\wh a_\eps} \zeta\big([0,q]\big)\xi^s(\Phi(q))(\Phi^s)'(q)\ \dd q.
}
By continuity of $\xi^s\circ\Phi$ and left-continuity of $q\mapsto\zeta\big([0,q)\big)$, the first integral on the right-hand side converges to $\zeta([0,a-\delta))\xi^s(\Phi(a-\delta))$.
The second integral clearly converges to $0$.
For the third and final integral, we can appeal to Lemma \ref{dirac_delta_lemma}.
Putting these facts together, we conclude
\eeq{ \label{part_2_2_complete}
&\lim_{\eps\searrow0}\frac{1}{\eps}
\int_0^1\zeta\big([0,q]\big)\xi^s(\Phi(q))\big[(\wh\Phi^s_\eps)'(q)-(\Phi^s)'(q)\big]\ \dd q \\
&= \zeta([0,a-\delta))\xi^s(\Phi(a-\delta))
- \zeta\big([0,a]\big)\xi^s(\Phi(a)).
}
Because of \eqref{part_2}, the claimed result \eqref{B_deriv_2} now follows by adding \eqref{part_2_1_complete} and \eqref{part_2_2_complete}.
\end{proofclaim}

The desired identity \eqref{B_deriv} is simply the sum result of Claims \ref{delta_derivative_claim}, \ref{B_deriv_1_claim}, and \ref{B_deriv_2_claim}.
\end{proof}

Now let us briefly discuss down-perturbations.
As the name suggests, we need to replace the assumptions from \eqref{up_assumptions} by
\begin{subequations}
\label{down_assumptions}
\eeqs{
\label{Phi_above_a_assumption}
\Phi^t(a+\delta)&<\mathrlap{1}\phantom{\Phi^s(q)} \quad \text{for all $t\in\SSS$}, \\
\label{Phi_at_a_assumption}
\Phi^s(a) &> 0, \\
\label{at_a_assumption_down}
\Phi^s(a) &> \Phi^s(q) \quad \text{for all $q\in[0,a)$}.
}
\end{subequations}
Under these conditions, for any $\eps>0$ small enough, we can make definitions analogous to those for up-perturbations:
\eq{
\wc a_\eps &\coloneqq \inf\{q\leq a:\, \Phi^s(q) = \Phi^s(a) - \eps - (a-q)\eps^2\}, \\
\wc\Phi^s_\eps(q) &\coloneqq \begin{cases}
\Phi^s(q) - (a+\delta+\eps-q) &\text{if }q\in[a+\delta,a+\delta+\eps), \\
\Phi^s(q) - \eps &\text{if } q\in[a,a+\delta), \\
\Phi^s(a) - \eps - (a-q)\eps^2 &\text{if } q\in[\wc a_\eps,a), \\
\Phi^s(q) &\text{otherwise}.
\end{cases}
}
See Figure \ref{down_fig} for an illustration.
Instead of \eqref{alpha_up_def}, we now take $\alpha_\eps(q)$ to be
\eq{ 
\alpha_\eps(q) \coloneqq \sum_{t\in\SSS}\lambda^t\wc\Phi_\eps^t(q).
}
The same argument as before will show $q\mapsto \alpha_\eps(q)$ is strictly increasing, and so $\alpha_\eps^{-1}$ is well-defined.
We then take
\eeq{ \label{down_perturbation_rescaled}
\zeta_\eps\big([0,q]\big) \coloneqq \zeta\big([0,\alpha_\eps^{-1}(q)]\big), \qquad
\Phi_\eps(q) \coloneqq \wc\Phi_\eps(\alpha_\eps^{-1}(q)).
}
By arguments parallel to those for Proposition \ref{derivative_prop}, we have the following calculation.

\begin{prop}[Down-perturbation] \label{derivative_prop_down}
Assuming \eqref{down_assumptions}, let $(\zeta_\eps,\Phi_\eps)$ be as in \eqref{down_perturbation_rescaled}.
We then have the following right derivative: 
\eeq{ \label{B_deriv_down}
\frac{\dd B(\zeta_\eps,\Phi_\eps)}{\dd\eps}\Big|_{\eps=0^+}
=\frac{\lambda^s}{2}\bigg[&\Big(\zeta\big([0,a+\delta]\big)-\zeta\big([0,a)\big)\Big)\Big(\red{h_s^2}-\int_0^{a}\frac{(\Phi^s)'(q)}{(\Delta^2(q))^2}\ \dd q\Big) \\
&+\int_{a}^{a+\delta}\Big(\zeta\big([0,q]\big)-\zeta\big([0,a+\delta]\big)\Big)\frac{(\Phi^s)'(q)}{(\Delta^2(q))^2}\ \dd q \\
&-\int_{a}^{a+\delta} \zeta\big([0,q]\big)(\xi^s\circ\Phi)'(q)\ \dd q \\
&+\zeta\big([0,a+\delta]\big)\xi^s(\Phi(a+\delta))
-\zeta\big([0,a)\big)\xi^s(\Phi(a))\bigg].
\raisetag{3.5\baselineskip}
}
\end{prop}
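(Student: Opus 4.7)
The plan is to mirror the argument for Proposition \ref{derivative_prop} with the perturbation now acting to the left of $a$. One first verifies the down-perturbation analog of Lemma \ref{a_hat_lemma}: assumption \eqref{at_a_assumption_down} forces $\wc a_\eps \nearrow a$ as $\eps\searrow0$. Combined with \eqref{Phi_at_a_assumption} (which keeps $\wc\Phi^s_\eps\geq 0$ for small $\eps$) and \eqref{Phi_above_a_assumption} (playing the role of \eqref{not_reach_1}), this ensures $\alpha_\eps$ remains a strictly increasing continuous bijection of $[0,1]$ with $\alpha_\eps(0)=0$ and $\alpha_\eps(1)=1$, so that \eqref{down_perturbation_rescaled} yields a genuine $\vc\lambda$-admissible pair. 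One also needs a left-sided version of Lemma \ref{dirac_delta_lemma}: if $f_\eps\to f_0$ uniformly and $f_0$ admits a left limit $f_0(a^-)$ at $a$, then
\eq{
\lim_{\eps\searrow0}\frac{1}{\eps}\int_{\wc a_\eps}^a f_\eps(q)(\Phi^s)'(q)\ \dd q = f_0(a^-),
}
proved by the same squeeze estimate. Applied to $f_\eps=f_0=\zeta\big([0,\cdot]\big)$, which is only right-continuous but whose left limit at $a$ is $\zeta\big([0,a)\big)$, this is the source of the appearance of $\zeta\big([0,a)\big)$ in \eqref{B_deriv_down}.

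The replacement for \eqref{derivative_change} becomes
\eq{
(\wc\Phi^s_\eps)'(q)-(\Phi^s)'(q) = -\one_{\{q\in(a+\delta,a+\delta+\eps)\}} - \eps^2\one_{\{q\in(\wc a_\eps, a)\}} + \one_{\{q\in(\wc a_\eps,a)\}}(\Phi^s)'(q),
}
so the indicator-supported contributions have moved to the right of $a$ and flipped in sign. Setting $\wc\Delta^s_\eps(q):=\int_q^1\zeta\big([0,u]\big)(\wc\Phi^s_\eps)'(u)\,\dd u$ and imitating Claim \ref{delta_derivative_claim}, one finds that $\dd\wc\Delta^s_\eps(q)/\dd\eps|_{\eps=0^+}$ equals $0$ for $q\geq a+\delta$, equals $-\zeta\big([0,a+\delta]\big)$ for $q\in[a,a+\delta)$, and equals $\zeta\big([0,a)\big)-\zeta\big([0,a+\delta]\big)$ for $q<a$. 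The only new inputs are right-continuity of $u\mapsto\zeta\big([0,u]\big)$ at $u=a+\delta$ and the left-sided Dirac-delta lemma applied with $f_\eps\equiv 1$.

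The remaining steps are direct analogs of Claims \ref{B_deriv_1_claim} and \ref{B_deriv_2_claim}. For the integral $\int_0^{q_*}(\wc\Phi^s_\eps)'(q)/\wc\Delta^s_\eps(q)\,\dd q$, the $X_\eps+Y_\eps$ decomposition from Claim \ref{B_deriv_1_claim} transfers verbatim; the telescoping identity $1/\Delta^s(a)-1/\Delta^s(a+\delta)=\int_a^{a+\delta}\zeta\big([0,q]\big)(\Phi^s)'(q)/(\Delta^s(q))^2\,\dd q$ combines with the $Y$-contribution and with the $h_s^2\Delta^s_\eps(0)$ term (which picks up the factor $\zeta\big([0,a+\delta]\big)-\zeta\big([0,a)\big)$ from the previous paragraph) to produce the first two lines of \eqref{B_deriv_down}. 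For the $\xi\circ\Phi$ integral, the chain-rule decomposition of Claim \ref{B_deriv_2_claim} applies identically, with the surviving contributions now supported on the intervals $(a+\delta, a+\delta+\eps)$, $(a, a+\delta)$, and $(\wc a_\eps, a)$; the left-sided Dirac-delta lemma together with right-continuity at $a+\delta$ yield the last two lines of \eqref{B_deriv_down}.

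The main obstacle is the careful tracking of one-sided continuity: the asymmetric pair $\zeta\big([0,a+\delta]\big)$ and $\zeta\big([0,a)\big)$ in \eqref{B_deriv_down} arises because the perturbed intervals meet the support of $\zeta$ from the right at $a+\delta$, where $u\mapsto\zeta\big([0,u]\big)$ is right-continuous by convention, and from the left at $a$, where only the left limit $\zeta\big([0,a)\big)$ is available. Apart from this bookkeeping and the mirror-image sign changes dictated by \eqref{down_perturbation_rescaled}, no new analytic ingredient beyond the left-sided Dirac-delta lemma is required.
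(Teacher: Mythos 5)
The overall strategy is the right one (mirror the up-perturbation computation, with a left-sided version of Lemma \ref{dirac_delta_lemma} and careful one-sided continuity bookkeeping), and your identification of why $\zeta\big([0,a)\big)$ and $\zeta\big([0,a+\delta]\big)$ appear is correct. However, there is a genuine sign error in your ``replacement for \eqref{derivative_change}.'' Differentiating the cases in the definition of $\wc\Phi^s_\eps$ gives
\eq{
(\wc\Phi^s_\eps)'(q)-(\Phi^s)'(q) = +\one_{\{q\in(a+\delta,a+\delta+\eps)\}} + \eps^2\one_{\{q\in(\wc a_\eps,a)\}} - \one_{\{q\in(\wc a_\eps,a)\}}(\Phi^s)'(q),
}
\emph{not} the negatives: for instance, $\frac{\dd}{\dd q}\big[\Phi^s(q)-(a+\delta+\eps-q)\big]-(\Phi^s)'(q)=+1$, and $\frac{\dd}{\dd q}\big[\Phi^s(a)-\eps-(a-q)\eps^2\big]-(\Phi^s)'(q)=\eps^2-(\Phi^s)'(q)$. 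The indicators have moved relative to $a$, but their signs are \emph{unchanged} from the up-perturbation: the linear ramp still has slope $+1$ and the merger still has slope $+\eps^2$. What flips sign is the \emph{function} difference $\wc\Phi^s_\eps-\Phi^s=-\eps$ versus $\wh\Phi^s_\eps-\Phi^s=+\eps$ on the plateau, and that single sign flip is responsible precisely for the sign change in the third line of \eqref{B_deriv_down}, since \eqref{function_change} is only ever invoked to produce the $\xi^s\circ\Phi$ integral (through \eqref{part_2_1}).

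This error propagates: your stated formula for $\dd\wc\Delta^s_\eps(q)/\dd\eps|_{\eps=0^+}$ is the negative of the correct one. Direct computation (or integration by parts showing $\wc\Phi^s_\eps\leq\Phi^s$ forces $\wc\Delta^s_\eps\geq\Delta^s$) gives
\eq{
\frac{\dd\wc\Delta^s_\eps(q)}{\dd\eps}\Big|_{\eps=0^+}=\begin{cases}
\zeta\big([0,a+\delta]\big)-\zeta\big([0,a)\big) &\text{if }q\in[0,a), \\
\zeta\big([0,a+\delta]\big) &\text{if }q\in[a,a+\delta], \\
0 &\text{otherwise},
\end{cases}
}
with positive middle case. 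Your version would drive the $h_s^2\Delta^s_\eps(0)$ contribution to $h_s^2\big(\zeta\big([0,a)\big)-\zeta\big([0,a+\delta]\big)\big)$, the opposite sign from \eqref{B_deriv_down}; the inconsistency already appears in your own writeup when you later say this term ``picks up the factor $\zeta\big([0,a+\delta]\big)-\zeta\big([0,a)\big)$,'' which contradicts the derivative formula you just stated. Fix the derivative-difference signs and the rest of your outline goes through as intended.
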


\begin{proof}
Let us just highlight the differences relative to the proof of Proposition \ref{derivative_prop}:
\begin{itemize}
    \item Replace Lemma \ref{dirac_delta_lemma} with the statement that
    \eq{
    \lim_{\eps\searrow0}\frac{1}{\eps}\int_{\wc a_\eps}^a f_\eps(q)(\Phi^s)'(q)\ \dd q = f_0(a),
    }
    provided $f_0$ is \textit{left}-continuous at $a$, because the interval $[\wc a_\eps,a]$ collapses to $a$ from the left.
    This is why the first and last instances of $\zeta\big([0,a]\big)$ in \eqref{B_deriv} are replaced by $\zeta\big([0,a)\big)$ in \eqref{B_deriv_down}.
    Similarly, we replace $\zeta\big([0,a-\delta)\big)$ with $\zeta\big([0,a+\delta])$ because $q\mapsto\zeta\big([0,q]\big)$ is \textit{right}-continuous, and the interval $[a+\delta,a+\delta+\eps]$ collapses to $a+\delta$ from the right.
    \item Replace Claim \ref{delta_derivative_claim} with the statement that
    \eq{
    \frac{\dd\wc\Delta^s_\eps(q)}{\dd\eps}\Big|_{\eps=0^+}=\begin{cases}
    \zeta\big([0,a+\delta]\big)-\zeta\big([0,a)\big) &\text{if }q\in[0,a), \\
    \zeta\big([0,a+\delta]\big) &\text{if }q\in[a,a+\delta], \\
    0 &\text{otherwise}.
    \end{cases}
    }
    The endpoints of intervals here correspond with the intervals of integration in \eqref{B_deriv_down}.
    Also notice that the middle case is positive rather than negative; this is why $\zeta(\big[0,a+\delta]\big)$ appears in the second line of \eqref{B_deriv_down} with a sign opposite that of $\zeta\big([0,a])$ in \eqref{B_deriv}.
    \item Notice that \eqref{X_deriv} will be replaced with
    \eq{
    \frac{1}{\Delta^s(a+\delta)}-\frac{1}{\Delta^s(a)} = \int_a^{a+\delta}\zeta\big([0,q]\big)\frac{(\Phi^s)'(q)}{(\Delta^s(q))^2}\ \dd q.
    }
    This is why $\zeta\big([0,q]\big)$ appears in second line of \eqref{B_deriv_down} with a sign opposite that of the same term in \eqref{B_deriv}.
    \item Whereas before $\wh\Phi^s_\eps\geq \Phi^s$, we now have $\wc\Phi^s_\eps\leq\Phi^s$.
    This is why the sign of the third line in \eqref{B_deriv_down} has changed.
    No other terms are affected because \eqref{function_change} was only used once---to obtain \eqref{part_2_1}---and the only surviving integral on the right-hand side of \eqref{part_2_1} was the middle one.
\end{itemize}
\end{proof}

We now use the calculations we have made to prove the identity \eqref{cs_identity} for minimizers of the C--S functional.
Recall the definitions of $\AA$ and $\AA_k$ from \eqref{AA_def} and \eqref{AA_k_def}.

\begin{proof}[Proof of Theorem \ref{cs_min_thm}]
The first claim is that a minimizer exists.
By Lemma \ref{density_lemma} every $(\zeta,\Phi)\in\AA$ satisfying \eqref{gap_assumption} can be approximated (with respect to $\DD$) by $(\zeta_k,\Phi)\in\AA_k$ as $k\to\infty$, and furthermore $B(\zeta_k,\Phi)\to B(\zeta,\Phi)$.
Hence
\eeq{ \label{discrete_inf_converge}
\lim_{k\to\infty} \inf_{\AA_k} B = \inf_\AA B.
}
On the other hand, Lemma \ref{away_from_1_lemma} gives the existence of some $\bar q<1$ such that the following is true.
For every $k\geq1$, there is $(\zeta_k,\Phi_k)\in\AA_k(\bar q)$ such that
\eq{
B(\zeta_k,\Phi_k) = \inf_{\AA_k} B.
}
So let $(\zeta,\Phi)\in\AA(\bar q)$ be some subsequential limit of $(\zeta_k,\Phi_k)$ as $k\to\infty$, which exists because the quotient space of $\AA(\bar q)$ is compact under $\DD$.
Since continuity of $B$ is guaranteed by Proposition \ref{continuity_prop}, we have
\eq{
B(\zeta,\Phi) = \lim_{k\to\infty} B(\zeta_k,\Phi_k) = \inf_\AA B.
}

For the remainder of the proof, assume $(\zeta,\Phi)\in\AA(\bar q)$ is a minimizer for $B$, and let $q_*\in(0,1)$ be such that $\zeta\big([0,q_*]\big) = 1$ and $\Phi^s(q_*) \leq \bar q$ for each $s$.

\begin{claim} \label{up_claim}
For any $a\in(0,q_*]$ and $\delta\geq 0$ at which \eqref{delta_small_assumption} and \eqref{at_a_assumption} hold, we have
\eeq{ \label{B_deriv_up_optimal}
0\leq \Big(\zeta\big([0,a-\delta)\big)-\zeta\big([0,a]\big)\Big)\Big(\red{h_s^2}-\int_0^{a}\frac{(\Phi^s)'(q)}{(\Delta^2(q))^2}\ \dd q +\xi^s(\Phi(a-\delta))\Big).
}
\end{claim}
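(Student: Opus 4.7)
My plan is to apply Proposition \ref{derivative_prop} and exploit minimality of $(\zeta,\Phi)$. The hypotheses \eqref{delta_small_assumption} and \eqref{at_a_assumption} are built into the claim, while the third requirement \eqref{not_reach_1} follows since $a\leq q_*$, each $\Phi^t$ is nondecreasing, and $\Phi^t(q_*)<1$ by \eqref{gap_assumption}. For $\eps>0$ sufficiently small, the up-perturbation $(\zeta_\eps,\Phi_\eps)$ from \eqref{up_perturbation_rescaled} lies in the domain $\AA$ of $B$, so $B(\zeta_\eps,\Phi_\eps)\geq B(\zeta,\Phi)$ and hence the right derivative \eqref{B_deriv} is nonnegative; equivalently, the bracketed expression $S$ on the right-hand side of \eqref{B_deriv} satisfies $S\geq 0$.

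The remaining work is to manipulate $S$ into the form required by \eqref{B_deriv_up_optimal}. I would first split $\int_0^{a-\delta}=\int_0^a-\int_{a-\delta}^a$ in the first line of \eqref{B_deriv} and combine the leftover with the second line, reducing the first two lines of $S$ to
\eq{
\Big(\zeta([0,a-\delta))-\zeta([0,a])\Big)\Big(h_s^2-\int_0^a\frac{(\Phi^s)'(q)}{(\Delta^s(q))^2}\,\dd q\Big)+\int_{a-\delta}^a\Big(\zeta([0,a-\delta))-\zeta([0,q])\Big)\frac{(\Phi^s)'(q)}{(\Delta^s(q))^2}\,\dd q.
}
Next, I would apply Lebesgue--Stieltjes integration by parts to the c\`adl\`ag function $q\mapsto\zeta([0,q])$ and the absolutely continuous function $\xi^s\circ\Phi$ on $[a-\delta,a]$:
\eq{
\int_{a-\delta}^a\zeta([0,q])(\xi^s\circ\Phi)'(q)\,\dd q=\zeta([0,a])\xi^s(\Phi(a))-\zeta([0,a-\delta])\xi^s(\Phi(a-\delta))-\int_{(a-\delta,a]}\xi^s(\Phi(q))\,\zeta(\dd q).
}
Adding the fourth line of \eqref{B_deriv} and absorbing $\zeta(\{a-\delta\})\xi^s(\Phi(a-\delta))$ into the $\zeta$-integral over the closed interval collapses the third and fourth lines of $S$ into $-\int_{[a-\delta,a]}\xi^s(\Phi(q))\,\zeta(\dd q)$.

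Denote by $E$ the right-hand side of \eqref{B_deriv_up_optimal}. Using the identity $\xi^s(\Phi(a-\delta))(\zeta([0,a])-\zeta([0,a-\delta)))=\int_{[a-\delta,a]}\xi^s(\Phi(a-\delta))\,\zeta(\dd q)$, direct substitution yields
\eq{
E-S=\int_{a-\delta}^a\Big(\zeta([0,q])-\zeta([0,a-\delta))\Big)\frac{(\Phi^s)'(q)}{(\Delta^s(q))^2}\,\dd q+\int_{[a-\delta,a]}\Big(\xi^s(\Phi(q))-\xi^s(\Phi(a-\delta))\Big)\,\zeta(\dd q).
}
Both integrands are nonnegative on $[a-\delta,a]$: the first by monotonicity of $q\mapsto\zeta([0,q])$, nonnegativity of $(\Phi^s)'$, and $\Delta^s(q)\geq\Delta^s(q_*)>0$; the second because all coefficients in \eqref{xi_def} are nonnegative (so $\xi^s$ has nonnegative partial derivatives) and each $\Phi^t$ is nondecreasing, so $\xi^s\circ\Phi$ is nondecreasing. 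Therefore $E\geq S\geq 0$, which is \eqref{B_deriv_up_optimal}. The most delicate step will be the integration by parts, since $\zeta$ may carry an atom at the endpoint $a-\delta$ and the boundary term has to be absorbed carefully to match the $\zeta([0,a-\delta))$ appearing in \eqref{B_deriv} (as opposed to $\zeta([0,a-\delta])$).
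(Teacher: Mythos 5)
Your proof is correct and follows essentially the same route as the paper: you invoke Proposition~\ref{derivative_prop} together with minimality to obtain the nonnegative right derivative $S\geq 0$, and then deduce \eqref{B_deriv_up_optimal} from the same two monotonicity facts (monotonicity of $q\mapsto\zeta\big([0,q]\big)$ and of $\xi^s\circ\Phi$) that underlie the paper's two ``trivial inequalities.'' Your computation of $E-S$ via Stieltjes integration by parts, including the careful absorption of the boundary atom $\zeta(\{a-\delta\})$ into the closed-interval integral, is a clean and equivalent reorganization of the same argument.
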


\begin{proofclaim}
The hypotheses of the claim allow us to apply Proposition \ref{derivative_prop}.
Using the assumption that $(\zeta,\Phi)$ is a minimizer, we learn from \eqref{B_deriv} that
\eeq{ \label{B_deriv_up_consequence}
0\leq&\Big(\zeta\big([0,a-\delta)\big)-\zeta\big([0,a]\big)\Big)\Big(\red{h_s^2}-\int_0^{a-\delta}\frac{(\Phi^s)'(q)}{(\Delta^2(q))^2}\ \dd q\Big) \\
&-\int_{a-\delta}^a\Big(\zeta\big([0,q]\big)-\zeta\big([0,a]\big)\Big)\frac{(\Phi^s)'(q)}{(\Delta^2(q))^2}\ \dd q \\
&+\int_{a-\delta}^a \zeta\big([0,q]\big)(\xi^s\circ\Phi)'(q)\ \dd q \\
&+\zeta\big([0,a-\delta)\big)\xi^s(\Phi(a-\delta))
-\zeta\big([0,a]\big)\xi^s(\Phi(a)).
}
Now take note of the following trivial inequalities:
\eq{
\int_{a-\delta}^a\Big(\zeta\big([0,a]\big)-\zeta\big([0,q]\big)\Big)\frac{(\Phi^s)'(q)}{(\Delta^2(q))^2}\ \dd q
&\leq \Big(\zeta\big([0,a]\big)-\zeta\big([0,a-\delta)\big)\Big)\int_{a-\delta}^a\frac{(\Phi^s)'(q)}{(\Delta^2(q))^2}\ \dd q,
}
as well as
\eq{
\int_{a-\delta}^a \zeta\big([0,q]\big)(\xi^s\circ\Phi)'(q)\ \dd q 
&\leq \zeta\big([0,a]\big)\int_{a-\delta}^a (\xi^s\circ\Phi)'(q)\ \dd q  \\
&= \zeta\big([0,a]\big)\big(\xi^s(\Phi(a))-\xi^s(\Phi(a-\delta))\big).
}
Using these inequalities in \eqref{B_deriv_up_consequence} results in \eqref{B_deriv_up_optimal}.
\end{proofclaim}

By completely analogous arguments (just using Proposition \ref{derivative_prop_down} instead of Proposition \ref{derivative_prop}) we also obtain the next claim.

\begin{claim} \label{down_claim}
For any $a\in(0,q_*]$ and $\delta\geq 0$ at which \eqref{down_assumptions} holds, we have
\eq{ 
0\leq \Big(\zeta\big([0,a+\delta]\big)-\zeta\big([0,a)\big)\Big)\Big(\red{h_s^2}-\int_0^{a}\frac{(\Phi^s)'(q)}{(\Delta^2(q))^2}\ \dd q +\xi^s(\Phi(a+\delta))\Big).
}
\end{claim}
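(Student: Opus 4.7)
The plan is to mirror the proof of Claim \ref{up_claim} essentially verbatim, replacing the up-perturbation identity \eqref{B_deriv} with its down-perturbation counterpart \eqref{B_deriv_down}. Under hypothesis \eqref{down_assumptions} at $(a,\delta)$, Proposition \ref{derivative_prop_down} applies, and since $(\zeta,\Phi)$ minimizes $B$, the right derivative $\frac{\dd B(\zeta_\eps,\Phi_\eps)}{\dd\eps}\big|_{\eps=0^+}$ is nonnegative. After dividing through by $\lambda^s/2$, this is the assertion that the sum of the four bracketed lines on the right-hand side of \eqref{B_deriv_down} is $\geq 0$, and it suffices to show that this sum is dominated by the expression appearing in the statement of Claim \ref{down_claim}.

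To carry out that domination, I would first discard the second line of \eqref{B_deriv_down}, whose integrand is nonpositive because $\zeta([0,q]) \leq \zeta([0,a+\delta])$ for $q\in[a,a+\delta]$ while $(\Phi^s)'(q)/(\Delta^s(q))^2 \geq 0$. For the third line, the monotonicity $\zeta([0,q]) \geq \zeta([0,a))$ on $[a,a+\delta]$ combined with $(\xi^s\circ\Phi)'(q) \geq 0$ yields
\eq{
-\int_a^{a+\delta}\zeta([0,q])(\xi^s\circ\Phi)'(q)\,\dd q \leq -\zeta([0,a))\big(\xi^s(\Phi(a+\delta)) - \xi^s(\Phi(a))\big).
}
Adding this upper bound to the fourth line cancels the $\zeta([0,a))\xi^s(\Phi(a))$ contribution and leaves precisely $(\zeta([0,a+\delta])-\zeta([0,a)))\xi^s(\Phi(a+\delta))$. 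Adjoining the untouched first line then produces the right-hand side of Claim \ref{down_claim}, and the conclusion follows from the fact that the original sum was nonnegative.

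I do not anticipate any genuine obstacle: every bound is sign-definite and completely parallel to the two trivial inequalities used in the proof of Claim \ref{up_claim}. The only bookkeeping subtlety is the one-sided asymmetry already built into \eqref{B_deriv_down}, namely that $\zeta$ is evaluated at $[0,a)$ on the left but at $[0,a+\delta]$ on the right. This reflects the fact that in the down-perturbation the active intervals $[\wc a_\eps,a]$ and $[a+\delta,a+\delta+\eps]$ collapse to $a$ from the left and to $a+\delta$ from the right, exactly reversing the roles of the left- and right-continuity of $q\mapsto\zeta([0,q])$ compared with the up-perturbation.
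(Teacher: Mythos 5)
Your proof is correct and matches what the paper intends: the paper does not write out an explicit proof of Claim \ref{down_claim}, stating only that it follows ``by completely analogous arguments'' from Claim \ref{up_claim} via Proposition \ref{derivative_prop_down}, and your argument is exactly that analogue. You have correctly identified the one structural asymmetry---that in \eqref{B_deriv_down} the first line's integral already runs to $a$, so the line-2 bound serves only as a sign-definite discard rather than an integral extension as in the up-case---and the resulting chain of inequalities is valid.
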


For convenience, let us write $K = \Supp(\zeta)\subset[0,q_*]$, and also $K^\cc=(0,1)\setminus\Supp(\zeta)$.
Since $K^\cc$ is open, it is a disjoint union of countably many intervals of the form $(a_0,a_1)$, where $a_0\in\{0\}\cup K$ and $a_1\in K \cup \{1\}$.
Note that altering $\Phi$ off of $K$ does not change $(\zeta,\Phi)$ under the pseudometric $\DD$. 
Therefore, we can make the following modification.
For every one of the disjoint intervals $(a_0,a_1)\subset K^\cc$ just described, replace $\Phi\big|_{[a_0,a_1]}$ with the linear interpolation between $\Phi(a_0)$ and $\Phi(a_1)$.
In this way, we may assume the following for each $s\in\SSS$:
\eeq{ \label{off_of_K_assumption}
\text{on every $(a_0,a_1)\subset K^\cc$, $\Phi^s$ is either strictly increasing or constant}.
}
In particular, \eqref{at_a_assumption} holds for $a =q_*$, since $\Phi^s(q_*)<1=\Phi^s(1)$ for all $s\in\SSS$. 

Now suppose $q'\in K$ and $\Phi^s(q')=x\in[0,1)$.
Define
\eq{
a_0=\inf\{q\geq0:\, \Phi^s(q) = x\} \qquad \text{and} \qquad
a_1=\sup\{q\geq0:\, \Phi^s(q) = x\},
}
so that $[a_0,a_1]$ is the maximal interval containing $q'$ on which $\Phi^s$ is constant.
Note that $a_1\leq q_*$ by \eqref{off_of_K_assumption}.

\begin{claim} \label{in_support_claim}
Assuming \eqref{off_of_K_assumption}, we must have $a_0\in\{0\}\cup K$ and $a_1\in K$.
\end{claim}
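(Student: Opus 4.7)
The plan is to argue by contradiction in both cases, exploiting the decomposition of $K^\cc$ into countably many disjoint maximal open subintervals together with the dichotomy on $\Phi^s$ imposed by \eqref{off_of_K_assumption}.

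For $a_0$, I would suppose $a_0\notin\{0\}\cup K$. Then $a_0\in K^\cc$ lies in some maximal open subinterval $(b_0,b_1)\subset K^\cc$. The degenerate case $a_0=a_1$ would force $q'=a_0\in K$ and contradict the assumption, so I may assume $a_0<a_1$. Since $q'\in K\cap[a_0,a_1]$ cannot belong to $(b_0,b_1)\subset K^\cc$, necessarily $q'\geq b_1$, and therefore $(a_0,b_1\wedge a_1)$ is a nonempty open subset of $(b_0,b_1)\cap[a_0,a_1]$ on which $\Phi^s\equiv x$. Applying \eqref{off_of_K_assumption}, the strictly increasing alternative on $(b_0,b_1)$ is immediately incompatible with constancy on this subinterval, while the constant alternative yields $\Phi^s\equiv x$ on all of $(b_0,b_1)$; continuity then forces $\Phi^s(b_0)=x$, and since $b_0<a_0$ this contradicts the definition of $a_0$ as an infimum.

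For $a_1$, I would run the mirror image. The remark immediately preceding the claim gives $a_1\leq q_*<1$, so if $a_1\notin K$ then $a_1$ lies in some maximal $(b_0,b_1)\subset K^\cc$; the case $a_0=a_1$ is dispatched as above, and otherwise $q'\in K$ now forces $q'\leq b_0$, so that $(a_0\vee b_0,a_1)\subset(b_0,b_1)\cap[a_0,a_1]$ is a nonempty open subinterval on which $\Phi^s\equiv x$. The same dichotomy rules out strict increase on $(b_0,b_1)$, and in the constant alternative continuity yields $\Phi^s(b_1)=x$ with $b_1>a_1$, contradicting the supremum defining $a_1$.

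The main obstacle is essentially nonexistent: the entire argument is a topological bookkeeping exercise combining the continuity and monotonicity of $\Phi^s$ with the structure of the support of $\zeta$, all of which are already in place. The only items requiring a single line of care are separating out the degenerate case $a_0=a_1$ (handled by a direct appeal to $q'\in[a_0,a_1]\cap K$) and invoking the one-sided continuity of $\Phi^s$ at the endpoints $b_0,b_1$, which is immediate from global continuity of each $\Phi^s$.
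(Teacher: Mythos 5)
Your argument is correct and uses the same essential idea as the paper: if $a_0$ (or $a_1$) fell in $K^\cc$, the dichotomy in \eqref{off_of_K_assumption} on the containing interval of $K^\cc$ would force either a contradiction with the constancy of $\Phi^s$ on part of $[a_0,a_1]$ or, in the constant alternative, a contradiction with the extremality of $a_0$ (resp.\ $a_1$). The paper phrases this more locally via a small $\eps$-neighborhood of $a_0$ straddling the constancy/non-constancy boundary, while you work with the full maximal component $(b_0,b_1)$ of $K^\cc$ and push to its endpoint by continuity, but these are the same argument.
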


\begin{proofclaim}
Let us just argue for $a_0$, as the argument for $a_1$ is identical.
If $a_0=q'$, then we are done.
Otherwise, we have $a_0 < a_1$.
And if $a_0$ were an element of $K^\cc$, then there would exist $\eps\in(0,a_1-a_0)$ such that $(a_0-\eps,a_0+\eps)\subset K^\cc$.
But then $(a_0-\eps,a_0+\eps)$ would contain both an interval $(a_0-\eps,a_0]$ of non-constancy for $\Phi^s$, and an interval $[a_0,a_0+\eps)$ of constancy for $\Phi^s$.
This scenario contradicts \eqref{off_of_K_assumption}.
\end{proofclaim}

\begin{claim} \label{almost_done}
If $a_1>0$, then we have
\eeq{ \label{to_contradict}
\xi^s(\Phi(q))+\red{h_s^2}  = \int_0^{q}\frac{(\Phi^s)'(q)}{(\Delta^2(q))^2}\ \dd q \quad \text{for all $q\in[a_0,a_1]$}.
}
\end{claim}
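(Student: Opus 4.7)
The plan is to combine the up- and down-perturbation inequalities (Claims \ref{up_claim} and \ref{down_claim}) at the two endpoints of the constancy interval and then propagate the identity across $[a_0,a_1]$ using monotonicity of $\xi^s\circ\Phi$. Since $\Phi^s$ is constant equal to $x$ on $[a_0,a_1]$, the derivative $(\Phi^s)'$ vanishes there, so the function $I(q)\coloneqq\int_0^q(\Phi^s)'(u)/(\Delta^s(u))^2\,\dd u$ is constant on $[a_0,a_1]$; denote its value by $I^*$. The goal is therefore to show $h_s^2+\xi^s(\Phi(q))=I^*$ for every $q\in[a_0,a_1]$.

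First I will establish the upper bound $h_s^2+\xi^s(\Phi(a_1))\leq I^*$. Since $a_1\in\Supp(\zeta)$, every neighborhood of $a_1$ has positive $\zeta$-mass. If there is a sequence $\delta_n\searrow 0$ with $\zeta([a_1-\delta_n,a_1])>0$, then Claim \ref{up_claim} applied at $a=a_1$ makes the factor $\zeta([0,a_1-\delta_n))-\zeta([0,a_1])$ strictly negative, forcing $h_s^2+\xi^s(\Phi(a_1-\delta_n))\leq I^*$; continuity of $\xi^s\circ\Phi$ yields the bound in the limit. If instead the $\zeta$-mass accumulates only to the right of $a_1$, then I apply Claim \ref{up_claim} at $a=a_1+\eta$ with $\delta=\eta$ for $\eta\searrow 0$. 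Hypothesis \eqref{at_a_assumption} holds at such $a$ because $\Phi^s$ is strictly increasing just past the constancy interval, by \eqref{off_of_K_assumption} and the maximality of $a_1$; continuity of $I$ then passes to the limit.

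For the complementary lower bound $h_s^2+\xi^s(\Phi(a_0))\geq I^*$, I argue symmetrically via Claim \ref{down_claim}. If $x>0$ then $a_0>0$ (since $\Phi^s(0)=0$), and the down-perturbation is applied either at $a_0$ directly or at a nearby point $a_0-\eta$, depending on which side of $a_0$ the $\zeta$-mass accumulates; in the latter case \eqref{at_a_assumption_down} is verified via strict monotonicity of $\Phi^s$ immediately to the left of $[a_0,a_1]$, which follows from \eqref{off_of_K_assumption} and the minimality of $a_0$. If instead $x=0$, then necessarily $a_0=0$ and $I^*=0$, so the bound holds trivially by nonnegativity of $h_s^2$ and $\xi^s(\Phi(0))$.

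Finally, since $\xi^s$ has nonnegative coefficients and $\Phi$ is coordinate-wise non-decreasing, the composition $\xi^s\circ\Phi$ is itself non-decreasing. Combined with the two endpoint bounds, this produces
\eq{
I^*-h_s^2\leq\xi^s(\Phi(a_0))\leq\xi^s(\Phi(q))\leq\xi^s(\Phi(a_1))\leq I^*-h_s^2 \quad \text{for every $q\in[a_0,a_1]$,}
}
forcing equality throughout and hence the identity \eqref{to_contradict}. The main obstacle is the case analysis at each endpoint: when the $\zeta$-mass happens to accumulate on only one side, the perturbation must be shifted slightly off the endpoint, and then one must verify the perturbation hypotheses at the shifted point and use continuity of $I$ and of $\xi^s\circ\Phi$ to transfer the resulting inequality back to the endpoint itself.
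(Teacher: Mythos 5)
Your proposal is structurally the same as the paper's proof: establish $h_s^2+\xi^s(\Phi(a_1))\leq I^*$ via Claim~\ref{up_claim} and $h_s^2+\xi^s(\Phi(a_0))\geq I^*$ via Claim~\ref{down_claim}, then sandwich using monotonicity of $\xi^s\circ\Phi$. However, there is a gap in the justification for the perturbation hypothesis in your second case. You want \eqref{at_a_assumption} to hold at $a_1+\eta$, and argue that ``$\Phi^s$ is strictly increasing just past the constancy interval, by \eqref{off_of_K_assumption} and the maximality of $a_1$.'' But \eqref{off_of_K_assumption} only constrains $\Phi^s$ on components of $K^\cc=(0,1)\setminus\Supp(\zeta)$, and in exactly the case you are treating (mass does not accumulate on the left of $a_1$), the support $\Supp(\zeta)$ must accumulate at $a_1$ from the right, so $(a_1,a_1+\eps)$ is not contained in $K^\cc$ for any $\eps>0$. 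On $\Supp(\zeta)$ the minimizer is free to impose further constancy intervals of $\Phi^s$ arbitrarily close to $a_1$, and at left endpoints and interior points of such intervals \eqref{at_a_assumption} fails. Maximality of $a_1$ only gives $\Phi^s(a_1+\eta)>\Phi^s(a_1)$, not that $a_1+\eta$ is itself a point of strict increase. The same issue arises at $a_0-\eta$ in your treatment of the lower bound.

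The fix (which the paper uses, if tersely) is a continuity argument rather than an appeal to \eqref{off_of_K_assumption}: for any $\delta_0>0$, pick $v\in(\Phi^s(a_1),\Phi^s(a_1+\delta_0))$ and set $a=\sup\{q\leq a_1+\delta_0:\Phi^s(q)=v\}$; then \eqref{at_a_assumption} holds at $a$, and $a\searrow a_1$ as $v\searrow\Phi^s(a_1)$ by maximality. So one sends the shift to zero along such a sequence. Worth noting also that the paper sidesteps your case split entirely: it applies Claim~\ref{up_claim} at $a=a_1+\delta$ with a shift wide enough to reach $a_1-\delta$, so the relevant prefactor is $-\zeta\big([a_1-\delta,a_1+\delta]\big)$, which is automatically strictly negative because $a_1\in\Supp(\zeta)$, regardless of which side the mass sits on.
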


\begin{proofclaim}
By maximality of $a_1$, we must have \eqref{at_a_assumption} for $a=a_1$.
Moreover, since $\Phi^s$ is continuous, for any $\delta_0>0$, there exists $\delta\in(0,\delta_0)$ such that \eqref{at_a_assumption} holds for $a=a_1+\delta$.
By Claim \ref{up_claim}, we then have
\eq{
0\leq \Big(\zeta\big([0,a_1-\delta)\big)-\zeta\big([0,a_1+\delta]\big)\Big)\Big(\red{h_s^2}-\int_0^{a_1+\delta}\frac{(\Phi^s)'(q)}{(\Delta^2(q))^2}\ \dd q + \xi^s(\Phi(a_1-\delta))\Big).
}
Since $a_1\in K$, the first factor on the right-hand side is negative.  So dividing it out results in
\eq{
\xi^s(\Phi(a_1-\delta))+\red{h_s^2} \leq \int_0^{a_1+\delta}\frac{(\Phi^s)'(q)}{(\Delta^2(q))^2}\ \dd q.
}
As this conclusion holds for all $\delta$ along some sequence tending to $0$, we conclude that
\eq{
\xi^s(\Phi(a_1))+\red{h_s^2} \leq \int_0^{a_1}\frac{(\Phi^s)'(q)}{(\Delta^2(q))^2}\ \dd q.
}
But we have supposed $(\Phi^s)'(q)=0$ for all $q\in(a_0,a_1)$, and so the integral can actually be taken over just the interval $[0,a_0]$, meaning 
\eeq{ \label{contradict_1}
\xi^s(\Phi(a_1))+\red{h_s^2}  \leq \int_0^{a_0}\frac{(\Phi^s)'(q)}{(\Delta^2(q))^2}\ \dd q.
}
If $\Phi^s(a_0) > 0$, then by parallel reasoning (using the minimality of $a_0$ and Claim \ref{down_claim}), we obtain
\eeq{ \label{contradict_2}
\xi^s(\Phi(a_0))+\red{h_s^2}  \geq \int_0^{a_0}\frac{(\Phi^s)'(q)}{(\Delta^2(q))^2}\ \dd q.
}
This inequality remains true if $\Phi^s(a_0) = 0$, since the right-hand side is zero.
Finally, note that $\xi^s(\Phi(a_0))\leq\xi^s(\Phi(a_1))$ simply because $\xi^s$ is non-decreasing in all coordinates.
Consequently, \eqref{contradict_1} and \eqref{contradict_2} together yield
\eq{
\xi^s(\Phi(a_0))+\red{h_s^2}  = \int_0^{a_0}\frac{(\Phi^s)'(q)}{(\Delta^2(q))^2}\ \dd q
=
\int_0^{a_1}\frac{(\Phi^s)'(q)}{(\Delta^2(q))^2}\ \dd q
= \xi^s(\Phi(a_1))+\red{h_s^2} .
}
This identity extends to \eqref{to_contradict} because for any $q\in(a_0,a_1)$, we have
$(\Phi^s)'(q) = 0$ and $\xi^s(\Phi(a_0))\leq\xi^s(\Phi(q))\leq\xi^s(\Phi(a_1))$.
\end{proofclaim}

Now recall the point $q'\in K\cap[a_0,a_1]$.
If $q'>0$, then Claim \ref{almost_done} has established the desired identity \eqref{cs_identity}.
In the case $0\in\Supp(\zeta)$, we must provide a separate argument to guarantee that \eqref{cs_identity} holds for $q=0$.
First observe that if $\Supp(\zeta)$ contains positive numbers arbitrarily close to $0$, then continuity ensures that \eqref{cs_identity} continues to hold at $q=0$.
So we may assume that $\Supp(\zeta) \setminus \{0\} \subset [q_1,1]$ for some $q_1>0$.
Now, if $h_s^2=0$, then \eqref{cs_identity} holds trivially at $q=0$, with both sides equal to $0$.
If instead $h_s^2>0$, then the following claim tells us that $0\notin\Supp(\zeta)$, and so it is not even necessary to check \eqref{cs_identity} at $q=0$.

\begin{claim}
Asumming $\Supp(\zeta) \setminus \{0\} \subset [q_1,1]$ for some $q_1>0$, and $h_s^2>0$, we must have $\zeta(\{0\})=0$.
\end{claim}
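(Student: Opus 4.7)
The plan is to mimic the argument given on the Parisi side in Claim \ref{jump_claim}: assume toward a contradiction that $m_0\coloneqq\zeta(\{0\})>0$, and construct a perturbation of $(\zeta,\Phi)$ that strictly decreases $B$.

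First I would prepare $\Phi$. By Proposition \ref{continuity_prop}, $B(\zeta,\Phi)$ depends on $(\zeta,\Phi)$ only through the pushforward $\zeta\circ\Phi^{-1}$, so I am free to redefine $\Phi$ on $\Supp(\zeta)^\cc$ without changing $B$. Relabeling if necessary, I may assume $q_1=\inf(\Supp(\zeta)\setminus\{0\})\in\Supp(\zeta)$. Claim \ref{almost_done} applied at $q=q_1$ then forces $\Phi^s(q_1)>0$, since the right-hand side of \eqref{to_contradict} is at least $h_s^2>0$. Fix a small $p\in(0,\lambda^s\Phi^s(q_1)\wedge q_1)$ (to be pinned down later) and redefine $\Phi$ on the interval $(0,q_1)\subset\Supp(\zeta)^\cc$ so that all of the overlap on $[0,p]$ is carried by species $s$: set $\Phi^s(q)=q/\lambda^s$ and $\Phi^t(q)=0$ for $t\neq s$ on $[0,p]$, then linearly interpolate each coordinate between $\Phi(p)$ and $\Phi(q_1)$ on $[p,q_1]$. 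A routine check shows that the resulting $\Phi$ is $\vc\lambda$-admissible, continuous, and coordinate-wise non-decreasing. The purpose of this modification is to make the subsequent perturbative derivative explicitly computable; without it, the derivative would depend uncontrollably on the unknown shape of $\Phi$ near $0$.

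Next I would apply the perturbation $\wt\zeta_\eps\coloneqq\zeta-\eps\delta_0+\eps\delta_p$ (a probability measure for $\eps\in(0,m_0)$) and compute $\frac{\dd}{\dd\eps}B(\wt\zeta_\eps,\Phi)\big|_{\eps=0^+}$ by direct differentiation of \eqref{B_def}. For every $t\neq s$ nothing changes, because $(\Phi^t)'\equiv0$ on $(0,p)$ forces $\wt\Delta^t_\eps\equiv\Delta^t$; the derivative is therefore built entirely from species $s$ and the $\xi$-integral. Using $\wt\Delta^s_\eps(q)=\Delta^s(q)-\eps(\Phi^s(p)-\Phi^s(q))$ for $q<p$, $\wt\Delta^s_\eps=\Delta^s$ for $q\geq p$, and $\Phi(0)=\vc0$, the calculation should yield
\eq{
\frac{\dd B(\wt\zeta_\eps,\Phi)}{\dd\eps}\bigg|_{\eps=0^+}=-\frac{h_s^2\,p}{2}+\frac{1}{2\lambda^s}\int_0^p\frac{p-q}{(\Delta^s(q))^2}\,\dd q-\frac{1}{2}\xi(\Phi(p)).
}

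Finally I would bound the three resulting terms. Since $\zeta([0,q])\geq m_0$ on $[0,q_1)$, the estimate $\Delta^s(q)\geq m_0(1-\Phi^s(q))\geq m_0(1-p/\lambda^s)$ holds on $[0,p]$, making the middle term $O(p^2)$; the last term is nonpositive because $\xi\geq0$; and the leading term is strictly negative of order $p$, precisely because $h_s^2>0$. For $p$ sufficiently small the linear term therefore dominates and the full right derivative is strictly negative, contradicting the minimality of $(\zeta,\Phi)$. The step I expect to require the most care is the preparatory modification of $\Phi$: one has to verify admissibility, continuity, and monotonicity simultaneously, and confirm that the modification takes place entirely off $\Supp(\zeta)$ so that $B$ is genuinely preserved. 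Once that is in place, the derivative computation and the quantitative bound are routine.
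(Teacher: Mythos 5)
Your proposal is correct and is essentially the paper's own argument: the same piecewise-linear preparation of $\Phi$ off $\Supp(\zeta)$, the same perturbation $\wt\zeta_\eps=\zeta-\eps\delta_0+\eps\delta_p$, the same derivative formula, and the same comparison of the $O(p)$ external-field term against the $O(p^2)$ integral and the nonpositive $-\tfrac12\xi(\Phi(p))$ term. One small point to tighten: if $\Supp(\zeta)=\{0\}$ then $\inf(\Supp(\zeta)\setminus\{0\})$ is not defined and your relabeling does not produce a point of $\Supp(\zeta)$; the paper covers this edge case by allowing $q_1=1$, for which $\Phi^s(q_1)=1>0$ trivially and the rest of the argument proceeds unchanged.
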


\begin{proof}
The argument is similar to that of Claim \ref{jump_claim}.
Let us choose $q_1$ maximally so that $q_1\in\Supp(\zeta)\cup\{1\}$.
If $q_1\in\Supp(\zeta)$, then we already know
\eq{
0<\xi^s(\Phi(q_1))+\red{h_s^2}  = \int_0^{q_1}\frac{(\Phi^s)'(u)}{(\Delta^s(u))^2}\ \dd u,
}
and so we must have $\Phi^s(q_1)>0$.
If instead $q_1=1$, then of course $\Phi^s(q_1)=\Phi^s(1)=1>0$.

Again because of Proposition \ref{continuity_prop}, we can modify $\Phi$ off the support of $\zeta$ without changing the value of $B(\zeta,\Phi)$.
So let us fix $s\in\SSS$ and some $p\in(0,\lambda^s\Phi^s(q_1)\wedge q_1)$, and then assume that
\eq{
\Phi^s(q) = \begin{cases}
\displaystyle q/\lambda^s &\text{if $q\in[0,p]$}, \\
\displaystyle\frac{q_1-q}{q_1-p}(p/\lambda^s)+\frac{q-p}{q_1-p}\Phi^s(q_1)
&\text{if $q\in(p,q_1]$}.
\end{cases}
}
Correspondingly, for $t\neq s$, we assume that
\eq{
\Phi^t(q) = \begin{cases}
 0 &\text{if $q\in[0,p]$}, \\
\displaystyle\frac{q-p}{q_1-p}\Phi^t(q_1)
&\text{if $q\in(p,q_1]$}.
\end{cases}
}
It is easy to check that these assumptions preserve $\vc\lambda$-admissibility.

Now suppose toward a contradiction that $\zeta(\{0\})>0$, and consider the following perturbed measure for sufficiently small $\eps>0$:
\eq{
\wt\zeta_\eps \coloneqq \zeta - \eps\delta_0 + \eps\delta_{p}.
}
If we define
\eq{
\Delta^t_\eps(q) \coloneqq \int_q^1\wt\zeta_\eps\big([0,u]\big)(\Phi^t)'(u)\ \dd u,
}
then by construction we have
\eq{
\Delta^t_\eps(q) - \Delta^t(q) = -\eps\one_{\{q<p\}}\int_q^{ p}(\Phi^t)'(u)\ \dd u
=-\eps\one_{\{q<p\}}(\Phi^t(p)-\Phi^t(q)).
}
This gives the derivative calculation
\eq{ 
\frac{\dd}{\dd\eps} \Delta^t_\eps(q)
= -\one_{\{q<p\}}(\Phi^t(p)-\Phi^t(q)) = -\one_{\{q<p\}}\one_{\{t=s\}}\frac{p-q}{\lambda^s}.
}
Very similarly, we have
\eq{ 
\frac{\dd}{\dd\eps}\int_0^1\wt\zeta_\eps\big([0,q]\big)(\xi\circ\Phi)'(q)\ \dd q
= -\int_0^{p}(\xi\circ\Phi)'(q)\ \dd q
= -\xi(\Phi(p)). 
}
Referring to the two previous displays, we have 
\eeq{ \label{first_pass_derivative}
\frac{\dd B(\wt\zeta_\eps,\Phi)}{\dd\eps}
= \frac{\lambda^s}{2}\Big[-\red{h_s^2}\frac{p}{\lambda^s}
+ \int_0^p \frac{(\Phi^s)'(q)}{(\Delta^s(q))^2}\cdot\frac{p-q}{\lambda^s}\ \dd q\Big] -\frac{\xi(\Phi(p))}{2}.
}
Since $\Phi^t(p) = (p/\lambda^s)\one_{\{t=s\}}$, the last term is given by
\eq{
\xi(\Phi(p))
= \int_0^{p/\lambda^s}\lambda^s\xi^s\big|_{\{q^s=q,\, q^t = 0\text{ for all $t\neq s$}\}}\ \dd q \geq p\xi^s(\vc 0).
}
Furthermore, since $(\Phi^s)' \leq 1/\lambda^s$, we have
\eq{
\int_0^p\frac{(\Phi^s)'(q)}{(\Delta^s(q))^2}\cdot\frac{p-q}{\lambda^s}\ \dd q \leq \Big(\frac{p}{\lambda^s\Delta^s(p)}\Big)^2.
}
Using the two previous displays in \eqref{first_pass_derivative}, we obtain
\eq{
\frac{\dd B(\wt\zeta_\eps,\Phi)}{\dd\eps}
\leq \frac{\lambda^s}{2}\Big[-h_s^2\frac{p}{\lambda^s} + \Big(\frac{p}{\lambda^s\Delta^s(p)}\Big)^2\Big].
}
Since $h_s^2>0$, we can choose $p$ sufficiently small that the right-hand side is negative, thereby contradicting the assumption that $(\zeta,\Phi)$ is a minimizer.
\renewcommand{\qedsymbol}{$\square$ (Claim and Theorem)}
\end{proof}
\renewcommand{\qedsymbol}{}
\end{proof}
\renewcommand{\qedsymbol}{$\square$}
\vspace{-1\baselineskip}

\section{Proof of Crisanti--Sommers formula}
\label{parisi_cs_proof}
This final section has the single goal of proving Theorem \ref{parisi_cs}. 
The result will follow from the following three lemmas.

\begin{lemma} \label{big_lemma}
If $(\zeta,\Phi)$ satisfies \eqref{gap_assumption} and
\eeq{ \label{bd_Delta}
b^s-d^s(q) = \frac{1}{\Delta^s(q)} \quad \text{for all $q\in\Supp(\zeta)$},
}
then $A(\zeta,\Phi,\vc b) = B(\zeta,\Phi)$.
\end{lemma}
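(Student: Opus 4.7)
The plan is to compute $A(\zeta,\Phi,\vc b)-B(\zeta,\Phi)$ directly and show the hypothesis \eqref{bd_Delta} forces cancellation. Both functionals split into sums over species (with coefficient $\lambda^s/2$) plus ``non-indexed'' integrals involving $\theta\circ\Phi$ and $\xi\circ\Phi$ (with coefficient $\pm 1/2$), and I would handle these pieces separately. First, I simplify the combined non-indexed contribution. From $\theta(\vc q)=\vc q\cdot\nabla\xi(\vc q)-\xi(\vc q)$ together with $\partial^t\xi=\lambda^t\xi^t$, the chain rule yields $(\theta\circ\Phi)'(q)=\sum_s\lambda^s\Phi^s(q)(\xi^s\circ\Phi)'(q)$, so that $[(\theta+\xi)\circ\Phi]'(q)=\sum_s\lambda^s[\Phi^s\cdot(\xi^s\circ\Phi)]'(q)$. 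Applying Lemma \ref{ibp_lemma} turns the combined non-indexed contribution to $A-B$ into $-\tfrac12\sum_s\lambda^s\bigl[\xi^s(\vc 1)-\int_{[0,1]}\Phi^s(u)\xi^s(\Phi(u))\,\zeta(du)\bigr]$.

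Next, for each $s$, I would integrate by parts on $\int_0^1(\xi^s\circ\Phi)'(q)/(b^s-d^s(q))\,\dd q$ and $\int_0^{q_*}(\Phi^s)'(q)/\Delta^s(q)\,\dd q$, differentiating the reciprocal factors and using $(d^s)'(q)=-w(q)(\xi^s\circ\Phi)'(q)$ and $(\Delta^s)'(q)=-w(q)(\Phi^s)'(q)$, where $w(q):=\zeta([0,q])$. This produces boundary contributions at $0,1,q_*$ together with residual integrals $I_A^s:=\int_0^1 w(q)(\xi^s\circ\Phi)'(q)\xi^s(\Phi(q))/(b^s-d^s(q))^2\,\dd q$ and an analogous $I_B^s$. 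I would choose $q_*=\max\Supp(\zeta)$, permitted by Remark \ref{q_star_remark}; then \eqref{bd_Delta} applies at $q_*$, and it also applies at $q=0$, since $w\equiv 0$ on $[0,\min\Supp(\zeta))$ forces $d^s$ and $\Delta^s$ to be constant on that initial interval, so the relation inherited from $\min\Supp(\zeta)$ extends to $0$. Consequently the $h_s^2$ boundary contributions in $A_s$ and $B_s$ cancel via $1/(b^s-d^s(0))=\Delta^s(0)$. Moreover, formally differentiating the hypothesis on $\Supp(\zeta)$ yields the $\zeta$-a.e.\ identity $(\xi^s\circ\Phi)'(q)=(\Phi^s)'(q)(b^s-d^s(q))^2$, which collapses the $\Supp(\zeta)$-part of $I_A^s+I_B^s$, via the product rule, into $\int_{\Supp(\zeta)} w(q)[\Phi^s\cdot(\xi^s\circ\Phi)]'(q)\,\dd q$; by Lemma \ref{ibp_lemma} again, this is precisely the piece needed to match the non-indexed contribution.

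What remains is the contribution to $I_A^s,I_B^s$ from gaps in $[0,q_*]\setminus\Supp(\zeta)$, together with the leftover boundary algebra $b^s-1-\log b^s+\xi^s(\vc 1)/b^s-\Phi^s(q_*)/(1-\Phi^s(q_*))-\log(1-\Phi^s(q_*))$. On each gap $(a,b)$ between consecutive points of $\Supp(\zeta)$, $w$ is constant (equal to $w(a)$), so the piecewise formulas $d^s(q)=d^s(b)+w(a)[\xi^s(\Phi(b))-\xi^s(\Phi(q))]$ and $\Delta^s(q)=\Delta^s(b)+w(a)[\Phi^s(b)-\Phi^s(q)]$ allow the gap integrals to be computed in closed form; the resulting terms should telescope across the gaps, and the remaining boundary algebra closes using the hypothesis $b^s-d^s(q_*)=1/(1-\Phi^s(q_*))$ together with the identity $d^s(q_*)=\xi^s(\vc 1)-\xi^s(\Phi(q_*))$ (the latter valid because $w\equiv 1$ on $[q_*,1]$). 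The main obstacle is precisely this reconciliation: the hypothesis constrains only values on $\Supp(\zeta)$, whereas the residual integrals are Lebesgue integrals over all of $[0,1]$, so the gap contributions must be handled via the piecewise structure above, and verifying that they telescope correctly with the boundary algebra---rather than leaving extraneous remainders---is the most intricate step of the computation.
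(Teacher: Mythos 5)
Your proposal follows essentially the same route as the paper's: reduce the comparison of $A$ and $B$ to integration by parts of the per-species integrals plus a reorganization of the $\theta/\xi$ terms via $\theta+\xi=\vc q\cdot\nabla\xi$, use \eqref{bd_Delta} to collapse the residual integrals over $\Supp(\zeta)$, choose $q_*=\max\Supp(\zeta)$, observe that \eqref{bd_Delta} propagates to $q=0$ along the initial flat interval, and then treat the gaps $(a_0,a_1)\subset(0,q_*]\setminus\Supp(\zeta)$ using the piecewise-constant structure of $\zeta([0,\cdot])$. Two remarks. First, the ``formally differentiated'' identity $(\xi^s\circ\Phi)'(q)=(\Phi^s)'(q)(b^s-d^s(q))^2$ is really a Lebesgue-a.e.\ statement on $\Supp(\zeta)$ (not $\zeta$-a.e.): it follows because $q\mapsto b^s-d^s(q)$ and $q\mapsto 1/\Delta^s(q)$ are Lipschitz, agree on the closed set $\Supp(\zeta)$, and therefore have equal derivatives at every Lebesgue density point of that set; the paper sidesteps this subtlety by substituting $1/(b^s-d^s)=\Delta^s$ pointwise on $K=\Supp(\zeta)$ and integrating by parts over all of $[0,q_*]$ before peeling off the $K^\cc$ contribution. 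Second, and more substantially, the step you flag as ``the most intricate''---showing the gap contributions and boundary algebra reconcile to zero---is where almost all of the paper's proof resides: it is not mere bookkeeping but requires the specific algebraic identity \eqref{on_each_interval}, verified separately for the initial gap (with $\zeta([0,a_0])=0$) and the interior gaps (with $\zeta([0,a_0])>0$), culminating in the cancellation \eqref{first_difference}. As an outline your plan is sound and matches the paper's, but a complete proof would still need to carry out that computation.
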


\begin{lemma} \label{small_lemma_parisi}
If $(\zeta,\Phi,\vc b)$ satisfies $\eqref{parisi_identity}$, then \eqref{bd_Delta} holds.
\end{lemma}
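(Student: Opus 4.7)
The plan is to introduce $c^s(q) := 1/(b^s - d^s(q))$, which is well-defined on $[0,1]$ thanks to \eqref{parisi_constraint} and the monotonicity of $d^s$, together with the derivative identity $(c^s)'(u) = -\zeta([0,u])(\xi^s \circ \Phi)'(u)/(b^s - d^s(u))^2$. Let $\bar q := \sup\Supp(\zeta)$, which itself lies in $\Supp(\zeta)$ since the support is closed. I aim to prove $\Delta^s(q) = c^s(q)$ for every $q \in \Supp(\zeta)$, which is \eqref{bd_Delta}.

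First I split $\Delta^s(q) = I_1(q) + I_2$, where $I_1(q) := \int_q^{\bar q}\zeta([0,u])(\Phi^s)'(u)\,\dd u$ and $I_2 := \int_{\bar q}^1 (\Phi^s)'(u)\,\dd u = 1 - \Phi^s(\bar q)$ (using $\zeta([0,u]) = 1$ on $(\bar q, 1)$ and $\Phi^s(1) = 1$, which is forced by $\vc\lambda$-admissibility). Writing $\zeta([0,u]) = \int \one_{\{v \leq u\}}\,\zeta(\dd v)$ and swapping orders of integration produces the Fubini formula
\[
I_1(q) = \Phi^s(\bar q) - \Phi^s(q)\,\zeta([0, q)) - \int_{[q, \bar q]}\Phi^s(v)\,\zeta(\dd v).
\]
Let $\wt\Phi^s$ denote the right-hand side of \eqref{parisi_identity_2}, which is absolutely continuous with derivative $h^s(u) := (\xi^s \circ \Phi)'(u)/(b^s - d^s(u))^2$. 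The same Fubini interchange applied to $\int_q^{\bar q}\zeta([0,u])\,h^s(u)\,\dd u$ yields the analogous formula with $\Phi^s$ replaced by $\wt\Phi^s$. Since \eqref{parisi_identity_2} asserts $\wt\Phi^s = \Phi^s$ on $\Supp(\zeta)$, and every term in the two formulas involves these functions only at points of $\Supp(\zeta)$ (both $q$ and $\bar q$ lie there, and the $\zeta$-integral only sees $\Supp(\zeta)$), the two expressions coincide. Hence $I_1(q) = \int_q^{\bar q}\zeta([0,u])\,h^s(u)\,\dd u = c^s(q) - c^s(\bar q)$ by antidifferentiation of $c^s$.

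It remains to show $c^s(\bar q) = 1 - \Phi^s(\bar q)$. Evaluating \eqref{parisi_identity_2} at $q = \bar q$ and substituting \eqref{parisi_identity_1} to eliminate the external-field term yields $\Phi^s(\bar q) = 1 - 1/b^s - \int_{\bar q}^1 h^s(u)\,\dd u$. Because $\zeta([0,u]) = 1$ on $(\bar q, 1)$, the tail integral equals $c^s(\bar q) - c^s(1) = c^s(\bar q) - 1/b^s$, and rearranging gives the desired identity (which, as a byproduct, forces $\Phi^s(\bar q) < 1$ and hence $\bar q < 1$). Combining,
\[
\Delta^s(q) = [c^s(q) - c^s(\bar q)] + [1 - \Phi^s(\bar q)] = c^s(q),
\]
which is \eqref{bd_Delta}.

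The main obstacle is the Fubini step. Although \eqref{parisi_identity_2} pins down $\Phi^s = \wt\Phi^s$ on $\Supp(\zeta)$, on the complement intervals of $\Supp(\zeta)$ in $[0,\bar q]$ the function $\Phi^s$ need not share a pointwise derivative with $\wt\Phi^s$, so one cannot naively equate $(\Phi^s)'$ with $h^s$ under the integral sign. The interchange of $\,\dd u$ and $\,\zeta(\dd v)$ sidesteps this by recasting both integrals in terms of $\Phi^s$ (respectively $\wt\Phi^s$) only through their values on $\Supp(\zeta)$, where the two functions are known to agree.
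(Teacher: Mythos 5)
Your proof is correct and takes essentially the same approach as the paper: both arguments integrate $\Delta^s(q)$ by parts (Fubini) to express it through $\Phi^s$ at points of $\Supp(\zeta)$, then substitute via \eqref{parisi_identity_2} and close with \eqref{parisi_identity_1} and the antiderivative $c^s(q)=1/(b^s-d^s(q))$. Your split at $\bar q=\sup\Supp(\zeta)$ (handling the boundary term $1-\Phi^s(\bar q)=c^s(\bar q)$ separately) is a cosmetic variant of the paper's integration directly up to $1$, and your explicit $c^s$, $\wt\Phi^s$ notation makes the bookkeeping slightly cleaner, but the underlying mechanism is identical.
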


\begin{lemma} \label{small_lemma_cs}
If $(\zeta,\Phi)$ satisfies \eqref{gap_assumption}, \eqref{cs_identity}, and for each $s\in\SSS$ we have
\eeq{ \label{initial_bd_Delta}
b^s - d^s(q_*) = \frac{1}{\Delta^s(q_*)} \qquad \text{and} \qquad
\xi^s(\Phi(q_*))+\red{h_s^2} = \int_0^{q_*}\frac{(\Phi^s)'(u)}{(\Delta^s(u))^2}\ \dd u,
}
then \eqref{bd_Delta} holds.
\end{lemma}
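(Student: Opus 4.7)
The plan is to exhibit the function
\[
J(q) := b^s - d^s(q) - \frac{1}{\Delta^s(q)}
\]
as constantly zero on $\Supp(\zeta)$. The first equation in \eqref{initial_bd_Delta} gives $J(q_*) = 0$, so it suffices to prove $J(q) = J(q_*)$ for every $q \in \Supp(\zeta) \cap [0, q_*]$. To this end I would introduce the auxiliary function
\[
g(u) := \xi^s(\Phi(u)) + h_s^2 - \int_0^u \frac{(\Phi^s)'(v)}{(\Delta^s(v))^2}\,\dd v, \qquad u \in [0, q_*],
\]
which is absolutely continuous on $[0, q_*]$: the first summand because $\xi^s$ is $C^1$ and each $\Phi^t$ is Lipschitz (its a.e.\ derivative being bounded by $1/\lambda^t$ from $\vc\lambda$-admissibility), and the integral term because its integrand is bounded, thanks to $\Delta^s(v) \geq \Delta^s(q_*) > 0$ on $[0, q_*]$. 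Now \eqref{cs_identity} is precisely the statement that $g$ vanishes on $\Supp(\zeta)$, while the second half of \eqref{initial_bd_Delta} asserts $g(q_*) = 0$.

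Differentiating the defining integrals, $-d^s$ and $-1/\Delta^s$ are a.e.\ differentiable with derivatives $\zeta([0,\cdot])(\xi^s \circ \Phi)'$ and $-\zeta([0,\cdot])(\Phi^s)'/(\Delta^s)^2$, respectively (both absolutely continuous on $[0, q_*]$). Subtracting yields
\[
J(q) - J(q_*) = \int_q^{q_*} \zeta([0, u])\,g'(u)\,\dd u \qquad \text{for } q \in [0, q_*].
\]
The crux is to show this integral vanishes whenever $q \in \Supp(\zeta)$. I would do so via Fubini: writing $\zeta([0, u]) = \int \one_{\{t \leq u\}}\,\zeta(\dd t)$ and swapping integrals (legitimate since $g' \in L^1([q, q_*])$ and $\zeta$ is a probability measure), the integral rewrites as
\[
\int_{[0, q_*]} \int_{q \vee t}^{q_*} g'(u)\,\dd u\,\zeta(\dd t) = \int_{[0, q_*]} \bigl[g(q_*) - g(q \vee t)\bigr]\,\zeta(\dd t).
\]
For $\zeta$-almost every $t$ one has $t \in \Supp(\zeta)$, in which case $q \vee t$ equals either $q$ (when $t \leq q$) or $t$ (when $t > q$), both of which belong to $\Supp(\zeta)$. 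Hence $g(q \vee t) = 0 = g(q_*)$ for $\zeta$-a.e.\ $t$, the integrand vanishes, and $J(q) = 0$ follows.

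The main potential obstacle is simply regularity bookkeeping. The content of the argument is that \eqref{cs_identity}, combined with the boundary datum at $q_*$, forces the analytic identity \eqref{bd_Delta} once the correct absolutely continuous function $g$ is extracted; the proof itself requires no nontrivial estimate. The sole regularity input is absolute continuity of $g$, which would degenerate only if $\Delta^s$ vanished somewhere in $[0, q_*]$—ruled out by \eqref{gap_assumption}—so the plan should execute without issue.
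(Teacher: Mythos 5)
Your proof is correct and relies on essentially the same technique as the paper: both convert the Lebesgue integral $\int_q^{q_*}\zeta\big([0,u]\big)(\cdot)'\,\dd u$ into $\zeta(\dd u)$-integrals and then exploit the pointwise identity \eqref{cs_identity} on $\Supp(\zeta)$ together with the boundary datum at $q_*$; your single Fubini exchange is exactly what the paper's pair of integrations by parts (forward on $(\xi^s\circ\Phi)'$, reverse on $(\Phi^s)'/(\Delta^s)^2$) accomplishes, and your formulation via the error functions $J$ and $g$ is a clean repackaging of the same computation. One bookkeeping slip: since $J' = \zeta\big([0,\cdot]\big)\,g'$ a.e.\ on $[0,q_*]$, the fundamental theorem gives $J(q)-J(q_*) = -\int_q^{q_*}\zeta\big([0,u]\big)g'(u)\,\dd u$ rather than $+$; this is harmless here because you show the integral vanishes.
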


Before proving the lemmas, let us give the argument for Theorem \ref{parisi_cs}.

\begin{proof}[Proof of Theorem \ref{parisi_cs}]
Let us first assume \eqref{xi_strictly_convex} holds. 
By Theorem \ref{parisi_min_thm}, there exists a triple $(\zeta,\Phi,\vc b)$ which minimizes $A$ and satisfies \eqref{parisi_identity}.
It then follows from Lemma \ref{small_lemma_parisi} that \eqref{bd_Delta} holds.
Furthermore, \eqref{parisi_identity} excludes the possibility that $1\in\Supp(\zeta)$, since otherwise we would have $\Phi^s(1) = 1 - 1/b^s < 1$.
So the maximum element of $\Supp(\zeta)$ is some $q_*\in[0,1)$, and for each $s\in\SSS$ we have
\eq{
\Phi^s(q_*) &\stackref{parisi_identity_2}{=}\frac{h_s^2+\red{\xi^s(\vc0)}}{(b^s-d^s(0))^2}+ \int_0^{q_*}\frac{(\xi^s\circ\Phi)'(u)}{(b^s-d^s(u))^2}\ \dd u \\
&\stackrefp{parisi_identity_2}{\leq}
\frac{h_s^2+\red{\xi^s(\vc0)}}{(b^s-d^s(0))^2}+\int_0^1\frac{(\xi^s\circ\Phi)'(u)}{(b^s-d^s(u))^2}\ \dd u
\stackref{parisi_identity_1}{=}1-\frac{1}{b^s}<1.
}
Now that we have verified \eqref{gap_assumption}, we can apply Lemma \ref{big_lemma} to conclude that
\eq{ 
\inf A = A(\zeta,\Phi,\vc b) = B(\zeta,\Phi) \geq \inf B.
}
To obtain the reverse inequality, we take $(\zeta,\Phi)$ to be the minimizer of $B$ guaranteed by Theorem \ref{cs_min_thm}, which necessarily satisfies \eqref{cs_identity}.
Let $q_*\in[0,1)$ be the maximum of $\Supp(\zeta)$, so that \eqref{cs_identity} implies the second statement in \eqref{initial_bd_Delta},
 and then choose $b^s$ to satisfy the first statement in \eqref{initial_bd_Delta}.
By Lemma \ref{small_lemma_cs}, it follows that \eqref{bd_Delta} holds, and then Lemma \ref{big_lemma} gives
\eq{ 
\inf A \leq A(\zeta,\Phi,\vc b) = B(\zeta,\Phi) = \inf B.
}
Note that this second inequality did not rely on \eqref{xi_strictly_convex} or even \eqref{xi_convex}.

Now we must argue that $\inf A = \inf B$ even if we relax \eqref{xi_strictly_convex} to \eqref{xi_convex}.
So assume the covariance function $\xi$ satisfies \eqref{xi_convex}.
For $\eps>0$, consider the replacement of $\xi$ by
\eq{
\xi_\eps(\vc q) = \xi(\vc q) + \eps\sum_{s\in\SSS}(\lambda^sq^s)^2.
}
This is equivalent to replacing the Hamiltonian $H_N:\T_N\to\R$ of \eqref{HN_def} with
\eq{
H_{N,\eps}(\sigma) \coloneqq H_N(\sigma) + \sqrt{\eps}\sum_{s\in\SSS}g^s(\sigma),
}
where $(g^s)_{s\in\SSS}$ are independent Gaussian processes such that
\eq{
\E[g^s(\sigma^1)g^s(\sigma^2)] = N(\lambda^sR^s(\sigma^1,\sigma^2))^2, \quad \text{with} \quad 
R^s(\sigma^1,\sigma^2) = \frac{\iprod{\sigma^1(s)}{\sigma^2(s)}}{N^s}.
}
(The process $g^s$ does indeed exist, since it is just a spherical SK model on $S_{N^s}$.)
By \cite[Lem.~A.1]{bates-sohn22}, this affects the free energy in \eqref{free_energy_def} as follows:
\eq{
\E F_N \leq \E F_{N,\eps} \leq \E F_N + \eps/2.
}
By Theorem \ref{parisi_thm}, we then have
\eeq{ \label{inf_A_converges}
\inf A \leq \inf A_\eps \leq \inf A + \eps/2,
}
where $A_\eps$ is the result of replacing $\xi$ with $\xi_\eps$ in \eqref{A_def}.
Furthermore, $\xi_\eps$ clearly satisfies \eqref{xi_strictly_convex} since $\xi$ already satisfies \eqref{xi_convex}, and so the first part of this proof gives
\eeq{ \label{inf_A_equals_B}
\inf A_\eps = \inf B_\eps,
}
where $B_\eps$ is result of replacing $\xi$ with $\xi_\eps$ in \eqref{B_def}.

For each $\eps>0$, \eqref{discrete_inf_converge} permits us to choose an integer $k_\eps$ large enough that
\eeq{ \label{k_eps_choice}
\inf_{\AA_{k_\eps}} B_\eps \leq \inf_{\AA} B_\eps+ \eps.
}
Take any $\bar q<1$ which satisfies
\eq{
\bar q &> \limsup_{\eps\searrow0}
\max_{s\in\SSS}\frac{(1-u^s_\eps)(\red{h_s^2}+\xi^s_\eps(\vc 1))+u^s_\eps}{(1-u^s_\eps)(\red{h_s^2}+\xi^s_\eps(\vc 1) )+1}, \quad \text{where} \\
u^s_\eps &\coloneqq 1-\frac{\sqrt{1+4(\red{h_s^2}+\xi^s_\eps(\vc 1))}-1}{2(\red{h_s^2}+\xi^s_\eps(\vc 1))}.
}
We make this choice so that for all $\eps>0$ sufficiently small, \eqref{k_eps_choice} and Lemma \ref{away_from_1_lemma} allow us to find $(\zeta_\eps,\Phi_\eps)\in\AA_{k_\eps}(\bar q)$ such that
\eeq{ \label{almost_inf_B_eps}
B_\eps(\zeta_\eps,\Phi_\eps) \leq \inf B_\eps + \eps.
}
Finally, let $(\zeta,\Phi)\in\AA(\bar q)$ be any subsequential limit (with respect to $\DD$) of $(\zeta_\eps,\Phi_\eps)$ as $\eps\searrow0$.
That is, as laws on $[0,1]^\SSS$, $\zeta_\eps\circ\Phi_\eps^{-1}$ converges weakly to $\zeta\circ\Phi^{-1}$.
Since $\xi_\eps$ converges uniformly to $\xi$ on $[0,1]^\SSS$, it follows from Proposition \ref{continuity_prop} that
\eq{
B(\zeta,\Phi)
=
\lim_{\eps\searrow0} B_\eps(\zeta_\eps,\Phi_\eps)
\stackref{almost_inf_B_eps}{=}
\lim_{\eps\searrow0} B_\eps
\stackref{inf_A_equals_B}{=}
\lim_{\eps\searrow0} A_\eps
\stackref{inf_A_converges}{=}
\inf A.
}
Thus $\inf A \geq \inf B$ even when \eqref{xi_strictly_convex} is relaxed to \eqref{xi_convex}.
As the reverse inequality holds regardless, we are done.
\end{proof}

Now we must prove the three lemmas we have just used.
We begin with Lemma \ref{big_lemma}, which is the technical heart of this section.

\begin{proof}[Proof of Lemma \ref{big_lemma}]
Note that if $q_0$ is the minimium of $\Supp(\zeta)$, then \eqref{bd_Delta} implies
\eq{
b^s - d^s(0) = b^s - d^s(q_0) = \frac{1}{\Delta^s(q_0)}=\frac{1}{\Delta^s(0)} > 0,
}
and so $A(\zeta,\Phi,\vc b)$ is well-defined.
Furthermore, we have
\eeq{ \label{external_terms_match}
\sum_{s\in\SSS}\lambda^s\frac{h_s^2+\red{\xi^s(\vc0)}}{(b^s-d^s(0))^2} = \sum_{s\in\SSS}\lambda^s (h_s^2+\red{\xi^s(\vc0)})\Delta_s(0).
}
For convenience, let us choose $q_*$ minimally; that is, $q_*$ is the maximum of $\Supp(\zeta)$.
First note that because $\zeta\big([0,q])=1$ for all $q\in[q_*,1]$, we have
\eeq{ \label{above_qstar_identities}
d^s(q) = \xi^s(\vc 1)-\xi^s(\Phi(q)) \quad \text{and} \quad
\Delta^s(q) = 1 - \Phi^s(q) \quad \text{for all $q\in[q_*,1]$}.
}
Since we chose $q_*$ to belong to $\Supp(\zeta)$, the assumption \eqref{bd_Delta} now gives
\eeq{ \label{at_qstar_identities}
\Phi^s(q_*)\cdot(b^s-d^s(q_*))
&= \frac{\Phi^s(q_*)}{1-\Phi^s(q_*)} \\
&= -1 + \frac{1}{\Delta^s(q_*)}
=-1 + b^s - \xi^s(\vc 1) + \xi^s(\Phi(q_*)).
}
To condense notation, let us write $\Supp(\zeta) =  K$ and $K^\cc = (0,q_*]\setminus\Supp(\zeta)$.
Consider the first integral appearing in \eqref{A_def}:
\eeq{ \label{parisi_int_1_lower}
&\sum_{s\in\SSS}\lambda^s\int_0^{q_*}\frac{(\xi^s\circ\Phi)'(q)}{b^s-d^s(q)}\ \dd q  \stackref{bd_Delta}{=}
\sum_{s\in\SSS}\lambda^s\Big[\int_{K} \Delta^s(q)(\xi^s\circ\Phi)'(q)\ \dd q 
+ \int_{K^\cc}\frac{(\xi^s\circ\Phi)'(q)}{b^s-d^s(q)}\ \dd q\Big] \\
&=
\sum_{s\in\SSS}\lambda^s\Big[\int_0^{q_*} \Delta^s(q)(\xi^s\circ\Phi)'(q)\ \dd q-\int_{K^\cc} \Delta^s(q)(\xi^s\circ\Phi)'(q)\ \dd q 
+ \int_{K^\cc}\frac{(\xi^s\circ\Phi)'(q)}{b^s-d^s(q)}\ \dd q\Big]. 
}
For the first integral on the final line, we use integration by parts:
\eeq{ \label{first_ibp}
&\int_0^{q_*}\Delta^s(q)(\xi^s\circ\Phi)'(q)\ \dd q \\
&= \Delta^s(q_*)\xi^s(\Phi(q_*))-\Delta^s(0)\xi^s(\vc0)+\int_0^{q_*}
\zeta\big([0,q]\big)(\Phi^s)'(q)(\xi^s\circ\Phi)(q)\ \dd q.
}
Recall from \eqref{above_qstar_identities} that $\Delta^s(q_*) = 1-\Phi^s(q_*)$, and also observe that
\eeq{ \label{combine_identity_1}
\sum_{s\in\SSS}\lambda^s(\xi^s\circ\Phi)(q)(\Phi^s)'(q)
=(\xi\circ\Phi)'(q)\ \dd q.
}
Consequently, when we sum over the various species, \eqref{first_ibp} becomes
\eq{
\sum_{s\in\SSS}\lambda^s
\int_0^{q_*}\Delta^s(q)(\xi^s\circ\Phi)'(q)\ \dd q
= (\vc 1 - \Phi(q_*))\cdot\nabla\xi(\Phi(q_*))
+ \int_0^{q_*}\zeta\big([0,q]\big)(\xi\circ \Phi)'(q)\ \dd q.
}
Notice that a portion of the second integral in \eqref{B_def} has appeared on the right-hand side.
The remaining portion is
\eq{
\int_{q_*}^1\zeta\big([0,q]\big)(\xi\circ \Phi)'(q)\ \dd q
= \int_{q_*}^1(\xi\circ \Phi)'(q)\ \dd q
= \xi(\vc 1)-\xi(\Phi(q_*)).
}
Adding this quantity to both sides of \eqref{parisi_int_1_lower}, we obtain
\eq{
&\sum_{s\in\SSS}\lambda^s\int_0^{q_*}\frac{(\xi^s\circ\Phi)'(q)}{b^s-d^s(q)}\ \dd q + \xi(\vc 1) - \xi(\Phi(q_*)) \\
&= (\vc 1-\Phi(q_*))\cdot\nabla\xi(\Phi(q_*))
+\int_0^1\zeta\big([0,q]\big)(\xi\circ\Phi)'(q)\ \dd q \\
&\phantom{=} +\sum_{s\in\SSS}\lambda^s\Big[-\Delta^s(0)\xi^s(\vc0)-\int_{K^\cc} \Delta^s(q)(\xi^s\circ\Phi)'(q)\ \dd q 
+ \int_{K^\cc}\frac{(\xi^s\circ\Phi)'(q)}{b^s-d^s(q)}\ \dd q\Big].
}
We must also calculate 
\eq{
\int_{q_*}^1\frac{(\xi^s\circ\Phi)'(q)}{b^s-d^s(q)}\ \dd q
\stackref{above_qstar_identities}{=} \log b^s - \log(b^s-d^s(q_*))
&\stackref{bd_Delta}{=} \log b^s + \log \Delta^s(q_*).
}
Combining the two previous displays, we arrive at
\eeq{ \label{AB_1}
&\sum_{s\in\SSS}\lambda^s\Big[-\log b^s+\int_0^{1}\frac{(\xi^s\circ\Phi)'(q)}{b^s-d^s(q)}\ \dd q\Big] \\
&= (\vc 1-\Phi(q_*))\cdot\nabla\xi(\Phi(q_*))
+ \sum_{s\in\SSS}\lambda^s\log \Delta^s(q_*)+ \int_0^1\zeta\big([0,q]\big)(\xi\circ\Phi)'(q)\ \dd q \\
&\phantom{=} +\sum_{s\in\SSS}\lambda^s\Big[-\Delta^s(0)\xi^s(\vc0)-\int_{K^\cc} \Delta^s(q)(\xi^s\circ\Phi)'(q)\ \dd q 
+ \int_{K^\cc}\frac{(\xi^s\circ\Phi)'(q)}{b^s-d^s(q)}\ \dd q\Big].
}
Next we consider the first integral in \eqref{B_def}:
\eeq{ \label{pre_pre_theta_conversion}
&\sum_{s\in\SSS}\lambda^s\int_0^{q_*}\frac{(\Phi^s)'(q)}{\Delta^s(q)}\ \dd q 
\stackref{bd_Delta}{=} \sum_{s\in\SSS}\lambda^s\Big[\int_{K}(b^s-d^s(q))(\Phi^s)'(q)\ \dd q
+ \int_{K^\cc}\frac{(\Phi^s)'(q)}{\Delta^s(q)}\ \dd q\Big] \\
&= \sum_{s\in\SSS}\lambda^s\Big[\int_0^{q_*}(b^s-d^s(q))(\Phi^s)'(q)\ \dd q
- \int_{K^\cc}(b^s-d^s(q))(\Phi^s)'(q)\ \dd q
+ \int_{K^\cc}\frac{(\Phi^s)'(q)}{\Delta^s(q)}\ \dd q\Big]. \raisetag{3.5\baselineskip}
}
For the first integral on the final line, we integrate by parts:
\eq{ 
&\int_0^{q_*}(b^s-d^s(q))(\Phi^s)'(q)\ \dd q \\
&\stackref{at_qstar_identities}{=}-1 + b^s - \xi^s(\vc 1) + \xi^s(\Phi(q_*))- \int_0^{q_*}\zeta\big([0,q]\big)(\xi^s\circ\Phi)'(q)\Phi^s(q)\ \dd q.
}
Observe by direct calculation (using definition \eqref{gamma_theta_def}) that
\eeq{ \label{combine_identity_2}
(\theta\circ\Phi)'(q) = \sum_{s\in\SSS}\Phi^s(q)\cdot\lambda^s(\xi^s\circ\Phi)'(q).
}
Therefore, we can rewrite \eqref{pre_pre_theta_conversion} as
\eeq{ \label{after_summing}
&\sum_{s\in\SSS}\lambda^s\int_0^{q_*}\frac{(\Phi^s)'(q)}{\Delta^s(q)}\ \dd q \\
&= \sum_{s\in\SSS}\lambda^s(b^s-1) - \vc 1\cdot\nabla\xi(\vc 1)
+ \vc 1\cdot\nabla\xi(\Phi(q_*))-\int_0^{q_*}\zeta\big([0,q]\big)(\theta\circ\Phi)'(q)\ \dd q \\
&\phantom{=} + \sum_{s\in\SSS}\lambda^s\Big[- \int_{K^\cc}(b^s-d^s(q))(\Phi^s)'(q)\ \dd q+\int_{K^\cc}\frac{(\Phi^s)'(q)}{\Delta^s(q)}\ \dd q\Big].
}
We also have
\eq{
\int_{q_*}^1\zeta\big([0,q]\big)(\theta\circ\Phi)'(q)\ \dd q
&= \theta(\vc 1) - \theta(\Phi(q_*)). \\
&= \vc1\cdot\nabla\xi(\vc 1) - \xi(\vc 1) - \Phi(q_*)\cdot\nabla\xi(\Phi(q_*)) + \xi(\Phi(q_*)).
}
Subtracting this quantity from both sides of \eqref{after_summing}, and then rearranging terms, we arrive at
\eeq{ \label{AB_2}
&\sum_{s\in\SSS}\lambda^s(b^s-1)-\int_0^1\zeta\big([0,q]\big)(\theta\circ\Phi)'(q)\ \dd q - \xi(\vc 1) + \xi(\Phi(q_*))\\
&= -(\vc 1-\Phi(q_*))\cdot\nabla\xi(\Phi(q_*))+\sum_{s\in\SSS}\lambda^s\int_0^{q_*}\frac{(\Phi^s)'(q)}{\Delta^s(q)}\ \dd q \\
&\phantom{=} + \sum_{s\in\SSS}\lambda^s\Big[ \int_{K^\cc}(b^s-d^s(q))(\Phi^s)'(q)\ \dd q-\int_{K^\cc}\frac{(\Phi^s)'(q)}{\Delta^s(q)}\ \dd q\Big].
}
By adding \eqref{external_terms_match}, \eqref{AB_1}, and \eqref{AB_2}, and then recalling definitions \eqref{A_def} and \eqref{B_def}, we obtain
\eq{
2A(\zeta,\Phi,\vc b)
= 2B(\zeta,\Phi)
+ \sum_{s\in\SSS}\lambda^s\Big[ &\int_{K^\cc}(b^s-d^s(q))(\Phi^s)'(q)\ \dd q
-\int_{K^\cc} \Delta^s(q)(\xi^s\circ\Phi)'(q)\ \dd q \\
+&\int_{K^\cc}\frac{(\xi^s\circ\Phi)'(q)}{b^s-d^s(q)}\ \dd q
-\int_{K^\cc}\frac{(\Phi^s)'(q)}{\Delta^s(q)}\ \dd q
\Big].
}
Therefore, the proof will be complete once we show that the additional terms on the right-hand side sum to zero.

Since $K$ is closed, its complement $K^\cc = (0,q_*]\setminus K$ is a countable union of disjoint open intervals of the form $(a_0,a_1)$, where $a_0$ and $a_1$ are both elements of $K\cup\{0\}$. 
We claim that for each such interval, we have
\eeq{ \label{on_each_interval}
\sum_{s\in\SSS}\lambda^s\Big[ &\int_{a_0}^{a_1}(b^s-d^s(q))(\Phi^s)'(q)\ \dd q
-\int_{a_0}^{a_1} \Delta^s(q)(\xi^s\circ\Phi)'(q)\ \dd q \\
+&\int_{a_0}^{a_1}\frac{(\xi^s\circ\Phi)'(q)}{b^s-d^s(q)}\ \dd q
-\int_{a_0}^{a_1}\frac{(\Phi^s)'(q)}{\Delta^s(q)}\ \dd q
\Big]=0.
}
Of course, this claim is sufficient to conclude the proof.

Since $(a_0,a_1)\subset K^\cc$, the map $u\mapsto\zeta\big([0,u]\big)$ is constant on $[a_0,a_1)$.
Referring to definition \eqref{ds_def}, we see that
\eeq{ \label{ds_constant}
d^s(q) = d^s(a_1) + \zeta\big([0,a_0]\big)\big[\xi^s(\Phi(a_1)) - \xi^s(\Phi(q))\big] \quad \text{for all $q\in[a_0,a_1]$}.
}
Similarly, referring to definition \eqref{delta_def}, we have
\eeq{ \label{delta_constant}
\Delta^s(q) = \Delta^s(a_1) + \zeta\big([0,a_0]\big)\big[\Phi^s(a_1) - \Phi^s(q)\big] \quad \text{for all $q\in[a_0,a_1]$}.
}
If $\zeta\big([0,a_0]\big)=0$, then in fact $\zeta\big([0,a_1)\big)=0$, which means $(0,a_1)\subset K^\cc$ and so $a_0$ must be 0.
In this case, we have
\eq{
\int_0^{a_1}\frac{(\xi^s\circ\Phi)'(q)}{b^s-d^s(q)}\ \dd q
&\stackref{ds_constant}{=} \frac{\xi^s(\Phi(a_1))-\xi^s(\vc0)}{b^s-d^s(a_1)} \\
&\stackrefpp{bd_Delta}{ds_constant}{=}
\Delta^s(a_1)\big(\xi^s(\Phi(a_1))-\xi^s(\vc0)\big)
\stackref{delta_constant}{=}
\int_0^{a_1}\Delta^s(q)(\xi^s\circ\Phi)'(q)\ \dd q,
}
as well as
\eq{
\int_0^{a_1}\frac{(\Phi^s)'(q)}{\Delta^s(q)}\ \dd q
&\stackref{delta_constant}{=} \frac{\Phi^s(a_1)}{\Delta^s(a_1)} \\
&\stackrefpp{bd_Delta}{ds_constant}{=}
(b^s-d^s(a_1))\Phi^s(a_1)
\stackref{ds_constant}{=}
\int_0^{a_1}(b^s-d^s(q))(\Phi^s)'(q)\ \dd q.
}
Hence \eqref{on_each_interval} is true if $\zeta\big([0,a_0]\big)=0$.

If $\zeta\big([0,a_0]\big) > 0$, then we instead have
\eq{
\int_{a_0}^{a_1}\frac{(\xi^s\circ\Phi)'(q)}{b^s-d^s(q)}\ \dd q
&\stackref{ds_constant}{=} \frac{1}{\zeta\big([0,a_0])}\log\frac{b^s-d^s(a_1)}{b^s-d^s(a_0)} \\
&\stackrefpp{bd_Delta}{ds_constant}{=} \frac{1}{\zeta\big([0,a_0]\big)}\log\frac{\Delta^s(a_0)}{\Delta^s(a_1)}
\stackref{delta_constant}{=}
\int_{a_0}^{a_1}\frac{(\Phi^s)'(q)}{\Delta^s(q)}\ \dd q.
}
That is, the second line of \eqref{on_each_interval} vanishes, and so we wish to show that the first line also vanishes.
By another application of \eqref{ds_constant}, we find
\eq{
\int_{a_0}^{a_1}(b^s-d^s(a_1))(\Phi^s)'(q)\ \dd q
= \big[b^s - d^s(q) - \zeta\big([0,a_0]\big)\xi^s(\Phi(a_1))\big](\Phi^s(a_1)-\Phi^s(a_0)) \\
+ \zeta\big([0,a_0]\big)\int_{a_0}^{a_1}\xi^s(\Phi(q))(\Phi^s)'(q)\ \dd q.
}
Once we sum over the various species and apply \eqref{combine_identity_1}, this identity becomes
\eeq{ \label{lhs_transformed}
&\sum_{s\in\SSS}\lambda^s
\int_{a_0}^{a_1}(b^s-d^s(q))(\Phi^s)'(q)\ \dd q \\
&=\sum_{s\in\SSS}\lambda^s\big[b^s - d^s(a_1) - \zeta\big([0,a_0]\big)\xi^s(\Phi(a_1))\big](\Phi^s(a_1)-\Phi^s(a_0)) \\
&\phantom{=}+\zeta\big([0,a_0]\big)\big(\xi(\Phi(a_1))-\xi(\Phi(a_0))\big).
}
By analogous computations using \eqref{delta_constant} and \eqref{combine_identity_2}, we also have
\eeq{ \label{rhs_transformed}
&\sum_{s\in\SSS}\lambda^s
\int_{a_0}^{a_1}\Delta^s(q)(\xi^s\circ\Phi)'(q)\ \dd q \\
&\stackrefp{gamma_theta_def}{=}\sum_{s\in\SSS}\lambda^s\big[\Delta^s(a_1) + \zeta\big([0,a_0]\big)\Phi^s(a_1)\big]\big(\xi^s(\Phi(a_1))-\xi^s(\Phi(a_0))\big) \\
&\phantom{\stackref{gamma_theta_def}{=}}-\zeta\big([0,a_0]\big)\big(\theta(\Phi(a_1))-\theta(\Phi(a_0))\big) \\
&\stackref{gamma_theta_def}{=}\sum_{s\in\SSS}\lambda^s\big[\Delta^s(a_1) + \zeta\big([0,a_0]\big)\Phi^s(a_1)\big]\big(\xi^s(\Phi(a_1))-\xi^s(\Phi(a_0))\big) \\
&\phantom{\stackref{gamma_theta_def}{=}}+\zeta\big([0,a_0]\big)\bigg[\big(\xi(\Phi(a_1))-\xi(\Phi(a_1))\big)
-\sum_{s\in\SSS}\lambda^s\big(\Phi^s(a_1)\xi^s(\Phi(a_1))-\Phi^s(a_0)\xi^s(\Phi(a_0))\big)\bigg].
\raisetag{5\baselineskip}
}
Now observe that
\eq{
\zeta\big([0, a_0]\big)\big(\xi^s(\Phi(a_1))-\xi^s(\Phi(a_0))\big)
&\stackref{ds_constant}{=} d^s(a_0)-d^s(a_1), \quad \text{as well as} \\
\zeta\big([0, a_0]\big)(\Phi^s(a_1) - \Phi^s(a_0))
&\stackref{delta_constant}{=}
\Delta^s(a_0)-\Delta^s(a_1).
}
Using these identities in conjunction with \eqref{bd_Delta}, we find that
\eeq{ \label{first_difference}
&(b^s-d^s(a_1))(\Phi^s(a_1)-\Phi^s(a_0))
- \Delta^s(a_1)\big(\xi^s(\Phi(a_1))-\xi^s(\Phi(a_0))\big) \\
&= \frac{1}{\zeta\big([0,a_0]\big)}\bigg[\frac{1}{\Delta^s(a_1)}\Big(\Delta^s(a_0)-\Delta^s(a_1)\Big)
- \Delta^s(a_1)\Big(\frac{1}{\Delta^s(a_1)}-\frac{1}{\Delta^s(a_0)}\Big)\bigg] \\
&= \frac{1}{\zeta\big([0,a_0]\big)}\cdot\frac{(\Delta^s(a_0)-\Delta^s(a_1))^2}{\Delta^s(a_1)\Delta^s(a_0)} \\
&= (\Phi^s(a_1)-\Phi^s(a_0))\Big(\frac{1}{\Delta^s(a_1)}-\frac{1}{\Delta^s(a_0)}\Big) \\
&= \zeta\big([0, a_0]\big)(\Phi^s(a_1)-\Phi^s(a_0))\big(\xi^s(\Phi(a_1))-\xi^s(\Phi(a_0))\big).
}
Now subtract \eqref{rhs_transformed} from \eqref{lhs_transformed}, and divide by $\zeta\big([0,a_0]\big)$.
In light of \eqref{first_difference}, this results in
\eq{
\sum_{s\in\SSS}\lambda^s\bigg[&(\Phi^s(a_1)-\Phi^s(a_0))\big(\xi^s(\Phi(a_1))-\xi^s(\Phi(a_0))\big)
-\xi^s(\Phi(a_1))(\Phi^s(a_1)-\Phi^s(a_0)) \\
&-\Phi^s(a_1)\big(\xi^s(\Phi(a_1))-\xi^s(\Phi(a_0))\big)
+ \Phi^s(a_1)\xi^s(\Phi(a_1))-\Phi^s(a_0)\xi^s(\Phi(a_0))\bigg]=0.
}
That is, the first line of \eqref{on_each_interval} vanishes, and so we are done.
\end{proof}

The reader will notice a parallel structure in the proofs of our last two lemmas.

\begin{proof}[Proof of Lemma \ref{small_lemma_parisi}]
We start at (the inverse of) the right-hand side of \eqref{bd_Delta}, and will transform it to (the inverse of) the left-hand side.
For any $q\in[0,1]$, integration by parts gives
\eeq{ \label{first_writing_delta}
\Delta^s(q) = \int_q^1 \zeta\big([0,u]\big)(\Phi^s)'(u)\ \dd u
= 1 - \zeta\big([0,q]\big)\Phi^s(q) - \int_{(q,1]}\Phi^s(u)\ \zeta(\dd u).
}
Using the hypothesis \eqref{parisi_identity_2}, the integral on the right-hand side can be rewritten as
\eq{
\int_{(q,1]}\Phi^s(u)\ \zeta(\dd u)
= \int_{(q,1]}\int_0^u \frac{(\xi^s\circ\Phi)'(v)}{(b^s-d^s(v))^2}\ \dd v\, \zeta(\dd u) + \frac{h_s^2+\red{\xi^s(\vc0)}}{(b^s-d^s(0))^2}\zeta\big((q,1]\big).
}
Now use a reverse integration by parts:
\eq{
&\int_{(q,1]}\int_0^u \frac{(\xi^s\circ\Phi)'(v)}{(b^s-d^s(v))^2}\ \dd v\, \zeta(\dd u)  \\
&= \int_0^1 \frac{(\xi^s\circ\Phi)'(v)}{(b^s-d^s(v))^2}\ \dd v
- \zeta([0,q]\big)\int_0^q \frac{(\xi^s\circ\Phi)'(v)}{(b^s-d^s(v))^2}\ \dd v
- \int_q^1\zeta\big([0,u]\big) \frac{(\xi^s\circ\Phi)'(u)}{(b^s-d^s(u))^2}\ \dd u.
}
Invoking \eqref{parisi_identity_1} and \eqref{parisi_identity_2} under the assumption that $q\in\Supp(\zeta)$, we simplify the right-hand side to obtain
\eq{
1 - \frac{1}{b^s}-\frac{h_s^2+\red{\xi^s(\vc0)}}{(b^s-d^s(0))^2} - \zeta\big([0,q]\big)\Big(\Phi^s(q)-\frac{h_s^2+\red{\xi^s(\vc0)}}{(b^s-d^s(0))^2}\Big)
+ \int_q^1\Big(\frac{1}{b^s-d^s(u)}\Big)'\ \dd u.
}
In light of the three previous displays, \eqref{first_writing_delta} now reads as
\eq{
\Delta^s(q) = \frac{1}{b^s} - \int_q^1\Big(\frac{1}{b^s-d^s(u)}\Big)'\ \dd u
= \frac{1}{b^s-d^s(q)} \quad \text{for all $q\in\Supp(\zeta)$}.\tag*{\qedhere}
}
\end{proof}

\begin{proof}[Proof of Lemma \ref{small_lemma_cs}]
We start at the left-hand side of \eqref{bd_Delta}, and will transform it to the right-hand side.
For any $q\in[0,q_*]$, we have
\eeq{ \label{first_writing_bd}
b^s - d^s(q)
&\stackrefp{initial_bd_Delta}{=} b^s - d^s(q_*) + d^s(q_*) - d^s(q) \\
&\stackref{initial_bd_Delta}{=} \frac{1}{\Delta^s(q_*)} - \int_{q}^{q_*}\zeta\big([0,u]\big)(\xi^s\circ\Phi)'(u)\ \dd u.
}
Now use integration by parts:
\eq{
\int_{q}^{q_*}\zeta\big([0,u]\big)(\xi^s\circ\Phi)'(u)\ \dd u
= \xi^s(\Phi(q_*)) - \zeta\big([0,q]\big)\xi^s(\Phi(q))
- \int_{(q,q_*]}\xi^s(\Phi(u))\ \zeta(\dd u).
}
Using \eqref{cs_identity} and a reverse integration by parts, we find that
\eq{
 \int_{(q,q_*]}\xi^s(\Phi(u))\ \zeta(\dd u)
 &= \int_{(q,q_*]}\int_0^u\frac{(\Phi^s)'(v)}{(\Delta^s(v))^2}\ \dd v\, \zeta(\dd u) - \red{h_s^2}\zeta\big((q,q_*]\big) \\
 &= \int_0^{q_*}\frac{(\Phi^s)'(v)}{(\Delta^s(v))^2}\ \dd v
 - \zeta\big([0,q]\big)\int_0^q\frac{(\Phi^s)'(v)}{(\Delta^s(v))^s}\ \dd v \\
 &\phantom{=}- \int_q^{q_*}\zeta\big([0,u]\big)\frac{(\Phi^s)'(u)}{(\Delta^s(u))^2}\ \dd u-\red{h_s^2} \zeta\big((q,q_*]\big).
}
Using \eqref{initial_bd_Delta} and \eqref{cs_identity} under the assumption that $q\in\Supp(\zeta)$, we simplify the final line to
\eq{
&\xi^s(\Phi(q_*))+\red{h_s^2} - \zeta\big([0,q]\big)\big(\xi^s(\Phi(q))+\red{h_s^2}\big)
- \int_{q}^{q_*}\Big(\frac{1}{\Delta^s(u)}\Big)'\ \dd u
- \red{h_s^2}\zeta\big((q,q_*]\big) \\
&= \xi^s(\Phi(q_*)) - \zeta\big([0,q]\big)\xi^s(\Phi(q))
- \int_{q}^{q_*}\Big(\frac{1}{\Delta^s(u)}\Big)'\ \dd u.
}
In light of the three previous displays, \eqref{first_writing_bd} now reads as
\eq{
b^s - d^s(q) = \frac{1}{\Delta^s(q_*)} - 
\int_{q}^{q_*}\Big(\frac{1}{\Delta^s(u)}\Big)'\ \dd u
= \frac{1}{\Delta^s(q)} \quad \text{for all $q\in\Supp(\zeta)$}.
\tag*{\qedhere}}
\end{proof}

\section*{Acknowledgments}
We are grateful to Amir Dembo for valuable feedback and suggestions, to the referees for useful comments, and to Pax Kivimae for the detection of a computational error in a previous draft.

\bibliography{spin_glasses}

\end{document}